\theoremstyle{plain}
\newtheorem{theorem}{Theorem}
\newtheorem{lemma}[theorem]{Lemma}
\newtheorem{corollary}[theorem]{Corollary}
\newtheorem{prop}[theorem]{Proposition}
\theoremstyle{definition}
\newtheorem{definition}[theorem]{Definition}
\newtheorem{remark}[theorem]{Remark}
\newtheorem{example}[theorem]{Example}
\DeclareMathOperator{\rep}{rep}
\DeclareMathOperator{\val}{val}
\DeclareMathOperator{\dom}{dom}
\author{Michel Rigo\affiliationmark{1}\thanks{The first author dedicates this paper to the memory of his grandmother Marie Wuidar (1923--2020).}
  \and Manon Stipulanti\affiliationmark{3}\thanks{The second author is supported by the FNRS Research grant 1.B.397.20F.}}
\title{Automatic sequences: from rational bases to trees}
\affiliation{
  % one line per affiliation, no postal codes, grant numbers or similar
  Department of Mathematics, University of Li\`ege,  Belgium}
\keywords{Automatic sequences, abstract numeration systems, rational base numeration systems, alternating morphisms, PD0L systems, Cobham's theorem}
\begin{document}

\publicationdetails{24}{2022}{1}{25}{8455}
\maketitle
\begin{abstract}
The $n$th term of an automatic sequence is the output of a deterministic finite automaton fed with the representation of $n$ in a suitable numeration system.
In this paper, instead of considering automatic sequences built on a numeration system with a regular numeration language, we consider those built on languages associated with trees having periodic labeled signatures and, in particular, rational base numeration systems.  We obtain two main characterizations of these sequences. The first one is concerned with $r$-block substitutions where $r$ morphisms are applied periodically. In particular, we provide examples of such sequences that are not morphic. The second characterization involves the factors, or subtrees of finite height, of the tree associated with the numeration system and decorated by the terms of the sequence. 
\end{abstract}

\section{Introduction}

Motivated by a question of Mahler in number theory, the introduction of rational base numeration systems has brought to light a family of formal languages with a rich combinatorial structure \cite{Akiyama--Frougny-Sakarovitch-2008}. In particular, the generation of infinite trees with a periodic signature has emerged \cite{Marsault--Sakarovitch-1,Marsault--Sakarovitch-2,Marsault--Sakarovitch-2016,Marsault--Sakarovitch-2017}. Marsault and Sakarovitch very quickly linked the enumeration of the vertices of such trees (called breadth-first serialization) to the concept of abstract numeration system built on the corresponding prefix-closed language: the traversal of the tree is exactly the radix enumeration of the words of the language. In this paper, we study automatic sequences associated with that type of numeration systems. In particular, in the rational base $\frac{p}{q}$, a sequence is $\frac{p}{q}$-automatic if its $n$th term is obtained as the output of a DFAO fed with the base-$\frac{p}{q}$ representation of $n$. Thanks to a result of Lepist\"o \cite{Lepisto} on factor complexity, we observe that we can get sequences that are not morphic.

We obtain several characterizations of these sequences. The first one boils down to translating Cobham's theorem from 1972 into this setting. In Section~\ref{sec:cobham}, we show that any automatic sequence built on a tree language with a purely periodic labeled signature is the image under a coding of an alternating fixed point of uniform morphisms not necessarily of the same length. If all the morphisms had the same length, as observed in \cite{Endrullis}, we would only get classical $k$-automatic sequences. As a consequence, in the rational base $\frac{p}{q}$, if a sequence is $\frac{p}{q}$-automatic, then it is the image under a coding of a fixed point of a $q$-block substitution whose images all have length $p$. In the literature, these substitutions are also called PD0L where a periodic control is applied --- $q$ different morphisms are applied depending on the index of the considered letter modulo~$q$. 

On the other hand, Sturmian trees as studied in \cite{sturm} also have a rich combinatorial structure where subtrees play a special role analogous to factors occurring in infinite words. In Section~\ref{sec:tree}, we discuss about the factors, i.e., subtrees of finite height, that may appear in the tree whose paths from the root are labeled by the words of the numeration language and whose vertices are colored according to the sequence of interest. Related to the $k$-kernel of a sequence, we obtain a new characterization of the classical $k$-automatic sequences: a sequence $\mathbf{x}$ is $k$-automatic if and only if the labeled tree of the base-$k$ numeration system decorated by $\mathbf{x}$ is rational, i.e., it has finitely many infinite subtrees. For numeration systems built on a regular language, the function counting the number of decorated subtrees of height $n$ is bounded, and we get a similar result. This is not the case in the more general setting of rational base numeration systems. Nevertheless, we obtain sufficient conditions for a sequence to be $\frac{p}{q}$-automatic in terms of the number of extensions a subtree may have.

This paper is organized as follows. In Section~\ref{sec:prelim}, we recall basic definitions about abstract numeration systems, tree languages, rational base numeration systems, and alternating morphisms. 
%(also known as PD0L systems) 
In Section~\ref{sec:ASRB}, we give some examples of the automatic sequences that we will consider. The parity of the sum-of-digits in base $\frac{3}{2}$ is such an example.
In Section~\ref{sec:cobham}, Cobham's theorem is adapted to the case of automatic sequences built on tree languages with a periodic labeled signature in Theorem~\ref{thm:cobham} (so, in particular, to the rational base numeration systems in Corollary~\ref{cor:cas32}).
In Section~\ref{sec:tree}, we decorate the nodes of the tree associated with the language of a rational base numeration system with the elements of a sequence taking finitely many values. Under some mild assumption (always satisfied when distinct states of the deterministic finite automaton with output producing the sequence have distinct output), we obtain a characterization of $\frac{p}{q}$-automatic sequences in terms of the number of extensions for subtrees of some finite height occurring in the decorated tree. 
In Section~\ref{sec:stability}, we review some usual closure properties of $\frac{p}{q}$-automatic sequences.

\section{Preliminaries}\label{sec:prelim}

We make use of common notions in combinatorics on words and usual definitions from automata theory.
General references are~\cite{Lothaire,Shallit2009}. Let us also mention \cite{Allouche--Shallit-2003,BertheRigo} for references about automatic sequences and numeration systems. 
In particular, an \emph{alphabet} is a finite set of elements, which are themselves called the \emph{letters}.
For an alphabet $A$, we use $A^*$ (resp., $A^+$) for the set of finite words (resp., non-empty finite words) over $A$.
We let $\varepsilon$ denote the empty word.
So $A^+ = A^*\setminus\{\varepsilon\}$.
For a finite word $w$, we let $|w|$ denote its length.
For each $i\in\{0,\ldots,|w|-1\}$, we let $w_i$ denote the $i$th letter of $w$ (and we thus start indexing letters at $0$). 
For a word $w=w_0\cdots w_{|w|-1}$, we let $w^R$ denote its \emph{mirror} or \emph{reversal} $w_{|w|-1} \cdots w_0$.

\subsection{Abstract numeration systems}

When dealing with abstract numeration systems, it is usually assumed that the language of the numeration system is regular. However the main feature is that words are enumerated by radix order (also called genealogical order: words are first ordered by increasing length and words of the same length are ordered by lexicographical order). The generalization of abstract numeration systems to context-free languages was, for instance, considered in \cite{CLR}. Rational base numeration systems discussed below in Section~\ref{subsec:rational bases} are also abstract numeration systems built on non-regular languages.

\begin{definition}
  An {\em abstract numeration system} (or {\em ANS} for short) is a
  triple $\mathcal{S}=(L,A,<)$ where $L$ is an infinite language over
  a totally ordered (finite) alphabet ${(A,<)}$. 
 We say that $L$ is the \emph{numeration language}.
  The map
  $\rep_\mathcal{S}:\mathbb{N}\to L$ is the one-to-one correspondence mapping
  $n\in\mathbb{N}$ onto the $(n+1)$st word in the radix ordered language $L$,
  which is then called the {\em $\mathcal{S}$-representation} of $n$. The
  $\mathcal{S}$-representation of $0$ is the first word in $L$. The
  inverse map is denoted by $\val_\mathcal{S}:L\to\mathbb{N}$. For any word $w$ in $L$, $\val_\mathcal{S}(w)$ is its {\em
    $\mathcal{S}$-numerical value}.
\end{definition}

Positional numeration systems, such as integer base numeration systems, the Fibonacci numeration system, and Pisot numeration systems, are based on the greediness of the representations (computed through a greedy algorithm where at each step one subtracts, by Euclidean division, the largest available term of the sequence from the remaining part to be represented~\cite{Fraenkel1985}). They all share the following property: $m<n$ if and only if $\rep(m)$ is less than $\rep(n)$ for the radix order. These numeration systems are thus ANS. 
As a non-standard example of ANS, consider the language $a^*b^*$ over $\{a,b\}$ and assume that $a<b$. Let $\mathcal{S}=(a^*b^*,\{a,b\},<)$. The first few words in the numeration language are $\varepsilon,a,b,aa,ab,bb,\ldots$. For instance, $\rep_\mathcal{S}(3)=aa$ and $\rep_\mathcal{S}(5)=bb$. One can show that $\val_\mathcal{S}(a^pb^q)=\frac{(p+q)(p+q+1)}{2}+q$. For details, we refer the reader to \cite{LR} or \cite{Rigo}.

In the next definition, we assume that most significant digits are read first. This is not a real restriction (see Section~\ref{sec:stability}).

\begin{definition}
  Let $\mathcal{S}=(L,A,<)$ be an abstract numeration system and let $B$ be a finite alphabet.  An
  infinite word $\mathbf{x}=x_0x_1x_2\cdots\in B^\mathbb{N}$ is {\em
    $\mathcal{S}$-automatic} if there exists a deterministic finite automaton with output (DFAO for short)
  $\mathcal{A}=(Q,q_0,A,\delta,\mu:Q\to B)$ such that
  $x_n=\mu(\delta(q_0,\rep_\mathcal{S}(n)))$ for all $n\ge 0$.
\end{definition}

Let $k\ge 2$ be an integer. We let $A_k$ denote the alphabet $\{0,1,\ldots,k-1\}$. For the usual base-$k$ numeration system built on the language
\begin{equation}
  \label{eq:Lk}
L_k:=\{\varepsilon\}\cup \{1,\ldots,k-1\}\{0,\ldots,k-1\}^*,   
\end{equation}
an $\mathcal{S}$-automatic sequence is said to be $k$-automatic \cite{Allouche--Shallit-2003}.
We also write $\rep_k$ and $\val_k$ in this context.

\subsection{Tree languages}\label{ss:13}
Prefix-closed languages define labeled trees (also called {\em trie} or {\em prefix-tree} in computer science) and vice-versa. Let ${(A,<)}$ be a totally ordered (finite) alphabet and let $L$ be a prefix-closed language over $(A,<)$. The set of nodes of the tree is $L$. If $w$ and $wd$ are words in $L$ with $d\in A$, then there is an edge from $w$ to $wd$ with label~$d$. The children of a node are ordered by the labels of the letters in the ordered alphabet $A$. In Figure~\ref{fig:ab}, we have depicted the first levels of the tree associated with the prefix-closed language $a^*b^*$. Nodes are enumerated by breadth-first traversal (or, serialization).
  \begin{figure}[h!tb]
    \centering
  \tikzset{
  s_bla/.style = {circle,fill=white,thick, draw=black, inner sep=0pt, minimum size=12pt},
  s_red/.style = {circle,black,fill=gray,thick, inner sep=0pt, minimum size=5pt}
}
\begin{tikzpicture}[->,>=stealth',level/.style={sibling distance = 4cm/#1},level distance = 1cm]
  \node [s_bla] {0}
  child {node [s_bla] {1}
    child {node [s_bla] {3}
      child {node [s_bla] {6} edge from parent node[left] {$a$}}
      child {node [s_bla] {7} edge from parent node[right] {$b$}}
      edge from parent node[above] {$a$}}
    child {node [s_bla] {4}
      child {node [s_bla] {8} edge from parent node[right] {$b$}}
      edge from parent node[above] {$b$}}
    edge from parent node[above] {$a$}} 
  child {node [s_bla] {2}
    child {node [s_bla] {5}
      child {node [s_bla] {9} edge from parent node[right] {$b$}}
      edge from parent node[right] {$b$}}
    edge from parent node[above] {$b$}}
  ;
\end{tikzpicture}    
    \caption{The first few levels of the tree associated with $a^*b^*$.}
    \label{fig:ab}
  \end{figure}

We recall some notion from \cite{Marsault--Sakarovitch-2} or \cite{Marsault--Sakarovitch-2017}. Let $T$ be an ordered tree of finite degree. The {\em (breadth-first) signature} of $T$ is a sequence of integers, the sequence of the degrees of the nodes visited by the (canonical) breadth-first traversal of the tree. The {\em (breadth-first) labeling} of $T$ is the infinite sequence of the labels of the edges visited by the breadth-first traversal of this tree. As an example, with the tree in Figure~\ref{fig:ab}, its signature is $2,2,1,2,1,1,2,1,1,1,2,\ldots$ and its labeling is $a,b,a,b,b,a,b,b,b,a,b,\ldots$.

  \begin{remark}\label{rem:itree}
  As observed by Marsault and Sakarovitch~\cite{Marsault--Sakarovitch-2}, it is usually convenient to consider {\em i-trees}: the root is assumed to be a child of itself.  It is especially the case for positional numeration systems when one has to deal with leading zeroes as the words $u$ and $0u$ may represent the same integer.
In an i-tree, paths labeled by $u$ and $0u$ lead to the same node.
  \end{remark}

    We now present a useful way to describe or generate infinite labeled i-trees.
  Let $A$ be a finite alphabet of which $0$ is assumed to be the smallest letter.
  A {\em  labeled signature} is an infinite sequence $(w_n)_{n\ge 0}$ of finite words over $A$ providing a signature $(|w_n|)_{n\ge 0}$ and a consistent labeling of a tree (made of the sequence of letters of $(w_n)_{n\ge 0}$). It will be assumed that the letters of each word are in strictly increasing order and that $w_0=0x$ with $x\in A^+$.
To that aim we let $\mathsf{inc}(A^*)$ denote the set of words over $A$ with increasingly ordered letters.
  For instance, $025$ belongs to $\mathsf{inc}(A_6^*)$ but $0241$ does not.
  Examples of labeled signatures will be given in Section~\ref{subsec:rational bases}.

\begin{remark}\label{rk: enumeration sommets arbre} 
Since a labeled signature $\mathsf{s}$ generates an i-tree, by abuse, we say that such a signature defines a prefix-closed language denoted by $L(\mathsf{s})$, which is made of the labels, not starting with $0$, of the paths in the i-tree.
Moreover, since we assumed the words of $\mathsf{s}$ all belong to $\mathsf{inc}(A^*)$ for some finite alphabet $A$, the canonical breadth-first traversal of this tree produces an abstract numeration system. Indeed the enumeration of the nodes $v_0,v_1,v_2,\ldots$ of the tree is such that $v_n$ is the $n$th word in the radix ordered language $L(\mathsf{s})$. The language $L(\mathsf{s})$, the set of nodes of the tree and $\mathbb{N}$ are thus in one-to-one correspondence.
    \end{remark}

\subsection{Rational bases}\label{subsec:rational bases}
The framework of rational base numeration systems \cite{Akiyama--Frougny-Sakarovitch-2008} is an interesting setting giving rise to a non-regular numeration language. Nevertheless the corresponding tree has a rich combinatorial structure: it has a purely periodic labeled signature.

Let $p$ and $q$ be two relatively prime integers with $p > q > 1$. 
Given a positive integer $n$, we define the sequence $(n_i)_{i\ge 0}$ as follows: we set $n_0 = n$ and, for all $i\ge 0$,
$qn_i=pn_{i+1}+a_i$ where $a_i$ is the remainder of the Euclidean division of $qn_i$ by $p$.
Note that $a_i\in A_p$ for all $i\ge 0$.
Since $p > q$, the sequence $(n_i)_{i\ge 0}$ is decreasing and eventually vanishes at some index $\ell+1$.
We obtain
\[
n = \sum_{i=0}^\ell \frac{a_i}{q}\left(\frac{p}{q}\right)^i.
\]
Conversely, for a word $w=w_\ell w_{\ell-1} \cdots w_0\in A_p^*$, the value of $w$ in base $\frac{p}{q}$ is the rational number
\[
\val_{\frac{p}{q}}(w)=\sum_{i=0}^\ell \frac{w_i}{q}\left(\frac{p}{q}\right)^i.
\]
Note that $\val_{\frac{p}{q}}(w)$ is a not always an integer and $\val_{\frac{p}{q}}(uv)=\val_{\frac{p}{q}}(u)(\frac{p}{q})^{|v|}+\val_{\frac{p}{q}}(v)$ for all $u,v\in A_p^*$.
%[The value set] 
We let $N_{\frac{p}{q}}$ denote the {\em value set}, i.e., the set of numbers representable in base $\frac{p}{q}$:
$$N_{\frac{p}{q}}=\val_{\frac{p}{q}}(A_p^*)=\left\{x\in\mathbb{Q}\mid \exists w\in A_p^* : \val_{\frac{p}{q}}(w)=x\right\}.$$

A word $w\in A_p^*$ is a {\em representation} of an integer $n\ge 0$ in base~$\frac{p}{q}$ if $\val_{\frac{p}{q}}(w)=n$.
Just as for integer bases, representations in rational bases are unique up to leading zeroes~\cite[Theorem~1]{Akiyama--Frougny-Sakarovitch-2008}.
Therefore we let $\rep_{\frac{p}{q}}(n)$ denote the representation of $n$ in base $\frac{p}{q}$ that does not start with $0$.
By convention, the representation of $0$ in base $\frac{p}{q}$ is the empty word $\varepsilon$.
In base~$\frac{p}{q}$, the numeration language is the set $$L_{\frac{p}{q}}=\left\{\rep_{\frac{p}{q}}(n)\mid n\ge 0\right\}.$$ 
%Note that $L_{\frac{p}{q}}\subseteq A_p^*$.
Hence, rational base numeration systems are special cases of ANS built on $L_{\frac{p}{q}}$: $m<n$ if and only if $\rep_{\frac{p}{a}}(m)<\rep_{\frac{p}{a}}(n)$ for the radix order. 
It is clear that $L_{\frac{p}{q}}\subseteq A_p^*$ is a prefix-closed language. As a consequence of the previous section, it can be seen as a tree.

 \begin{example} The alphabet for the base $\frac{3}{2}$ is $A_3=\{0,1,2\}$.
The first few words in $L_{\frac{3}{2}}$ are
$\varepsilon$,  $2$,  $21$,  $210$,  $212$,  $2101$,  $2120$,  $2122$, and the associated i-tree is depicted in Figure~\ref{fig:tree32}. 
If we add an edge of label $0$ on the root of this tree (see Remark~\ref{rem:itree}), its signature is $2,1,2,1,\ldots$ and its  labeling is $0,2,1,0,2,1,0,2,1,\ldots$. Otherwise stated, the purely periodic labeled signature $(02,1)^\omega$ gives the i-tree of the language $L_{\frac{3}{2}}$; see Figure~\ref{fig:tree32}. For all $n\ge 0$, the $n$th node in the breadth-first traversal is the word $\rep_{\frac32}(n)$.
%[Si l'arbre a une boucle à la racine, est-ce que ça marche pour $n=0$?]
Observe that there is an edge labeled by $a\in A_3$ from the node $n$ to the node $m$ if and only if $m=\frac{3}{2} \cdot n + \frac{a}{2}$. This remark is valid for all rational bases.
\begin{figure}[h!t]
  \centering
  \tikzset{
  s_bla/.style = {circle,fill=white,thick, draw=black, inner sep=0pt, minimum size=12pt},
  s_red/.style = {circle,black,fill=gray,thick, inner sep=0pt, minimum size=5pt}
}
\begin{tikzpicture}[->,>=stealth',level/.style={sibling distance = 6cm/#1},level distance = 1cm]
  \node (root) [s_bla] {0}
  child {node [s_bla] {1} 
  child {node [s_bla] {2} %2 
      child {node [s_bla] {3} %3 
        child {node [s_bla] {5} 
          child {node [s_bla] {8} edge from parent node[left] {$1$} }
          edge from parent node[left] {$1$} }
      edge from parent node[left] {$0$} }
    child {node [s_bla] {4} %4 
      child {node [s_bla] {6}
        child {node [s_bla] {9} edge from parent node[left] {$0$}}
        child {node [s_bla] {10} edge from parent node[right] {$2$}}
        edge from parent node[left] {$0$}} %6
      child {node [s_bla] {7}
        child {node [s_bla] {11} edge from parent node[right] {$1$} }
        edge from parent node[right] {$2$}} %7
      edge from parent node[right] {$2$}}
    edge from parent node[right] {$1$}}
   edge from parent node[right] {$2$}}
;

  \draw (root) to [loop right] node [right] {$0$} (root);
\end{tikzpicture}
  \caption{The first levels of the i-tree associated with $L_\frac32$.}
  \label{fig:tree32}
\end{figure}
\end{example}

\begin{remark}\label{rem:perT}
The language $L_{\frac{p}{q}}$ is highly non-regular: it has the bounded left-iteration property; for details, see \cite{Marsault--Sakarovitch-1}. 
In $L_{\frac{p}{q}}$ seen as a tree, no two infinite subtrees are isomorphic, i.e., for any two words $u,v\in L_{\frac{p}{q}}$ with $u\neq v$, the quotients $u^{-1}L_{\frac{p}{q}}$ and $v^{-1}L_{\frac{p}{q}}$ are distinct. 
As we will see with Lemma~\ref{lem:414}, this does not prevent the languages $u^{-1}L_{\frac{p}{q}}$ and $v^{-1}L_{\frac{p}{q}}$ from coinciding on words of length bounded by a constant depending on $\val_{\frac{p}{q}}(u)$ and $\val_{\frac{p}{q}}(v)$ modulo a power of $q$.
Nevertheless the associated tree has a purely periodic labeled signature.  For example, with $\frac{p}{q}$ respectively equal to $\frac{3}{2}$, $\frac{5}{2}$, $\frac{7}{3}$ and $\frac{11}{4}$, we respectively have the signatures $(02,1)^\omega$, $(024,13)^\omega$, $(036,25,14)^\omega$, $(048,159,26(10),37)^\omega$. Generalizations of these languages (called rhythmic generations of trees) are studied in \cite{Marsault--Sakarovitch-2017}.
\end{remark}

\begin{definition}
 We say that a sequence is {\em $\frac{p}{q}$-automatic} if it is $\mathcal{S}$-automatic for the ANS built on the language $L_\frac{p}{q}$, i.e., $\mathcal{S}=(L_\frac{p}{q},A_p,<)$. 
\end{definition}

\subsection{Alternating morphisms}

The Kolakoski--Oldenburger word \cite[\texttt{A000002}]{OEIS} is the unique word $\mathbf{k}$ over $\{1,2\}$ starting with $2$ and satisfying $\Delta(\mathbf{k})=\mathbf{k}$ where $\Delta$ is the run-length encoding map $$\mathbf{k}=2211212212211\cdots.$$ It is a well-known (and challenging) object of study in combinatorics on words. It can be obtained by periodically iterating two morphisms, namely
$$h_0:\left\{\begin{array}{l}
               1\mapsto 2\\
               2\mapsto 22\\
             \end{array}\right.\quad\text{ and }\quad
h_1: \left\{\begin{array}{l}
               1\mapsto 1\\
               2\mapsto 11.\\
             \end{array} \right.$$          
More precisely, in \cite{kola}, 
$\mathbf{k}=k_0k_1k_2\cdots$ is expressed as the fixed point of the iterated morphisms $(h_0,h_1)$, i.e., $$\mathbf{k}=h_0(k_0)h_1(k_1)\cdots h_0(k_{2n}) h_1(k_{2n+1})\cdots.$$
In the literature, one also finds the terminology PD0L for {\em D0L system with periodic control} \cite{Endrullis,Lepisto}.

\begin{definition}
Let $r\ge 1$ be an integer, let $A$ be a finite alphabet, and let $f_0,\ldots,f_{r-1}$ be $r$ morphisms over $A^*$. An infinite word $\mathbf{w}=w_0w_1w_2\cdots$ over $A$ is an {\em alternating fixed point} of $(f_0,\ldots,f_{r-1})$ if 
  $$\mathbf{w}=f_0(w_0)f_1(w_1)\cdots f_{r-1}(w_{r-1})f_0(w_r)\cdots f_{i\bmod{r}}(w_i)\cdots .$$
\end{definition}

As observed by Dekking \cite{Dek} for the Kolakoski word, an alternating fixed point can also be obtained by an $r$-block substitution.

\begin{definition}
  Let $r\ge 1$ be an integer and let $A$ be a finite alphabet. An {\em$r$-block substitution} $g:A^r\to A^*$ maps a word $w_0\cdots w_{rn-1} \in A^*$ to 
  \[
  g(w_0\cdots w_{r-1}) g(w_r\cdots w_{2r-1})\cdots g(w_{r(n-1)}\cdots w_{rn-1}).
  \] 
  If the length of the word is not a multiple of $r$, then the suffix of the word is ignored under the action of $g$. An infinite word $\mathbf{w}=w_0w_1w_2\cdots$ over $A$ is a {\em fixed point of the $r$-block substitution} $g:A^r\to A^*$ if
  $$\mathbf{w}= g(w_0\cdots w_{r-1}) g(w_r\cdots w_{2r-1})\cdots.$$
\end{definition}

\begin{prop}\label{pro:rblock}
Let $r\ge 1$ be an integer, let $A$ be a finite alphabet, and let $f_0,\ldots,f_{r-1}$ be $r$ morphisms over $A^*$. If an infinite word over $A$ is an alternating fixed point of $(f_0,\ldots,f_{r-1})$, then it is a fixed point of an $r$-block substitution.
\end{prop}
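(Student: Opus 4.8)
The plan is to exhibit an explicit $r$-block substitution of which $\mathbf{w}$ is a fixed point, and the natural candidate is the one that applies the $r$ morphisms to the $r$ letters of a block in order. Concretely, I would define $g\colon A^r\to A^*$ by setting $g(a_0a_1\cdots a_{r-1})=f_0(a_0)f_1(a_1)\cdots f_{r-1}(a_{r-1})$ for every block $a_0a_1\cdots a_{r-1}\in A^r$. Note that this defines $g$ on all of $A^r$, not merely on the blocks that happen to occur in $\mathbf{w}$, which is precisely what the definition of an $r$-block substitution requires.

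It then remains to check that $\mathbf{w}$ is a fixed point of $g$, i.e., that $\mathbf{w}=g(w_0\cdots w_{r-1})\,g(w_r\cdots w_{2r-1})\cdots$. First I would write out the $j$th block image, namely $g(w_{jr}\cdots w_{jr+r-1})=f_0(w_{jr})f_1(w_{jr+1})\cdots f_{r-1}(w_{jr+r-1})$ for each $j\ge 0$. The crucial observation is that the starting index $jr$ of the $j$th block is a multiple of $r$, so that $(jr+s)\bmod r=s$ for every $s\in\{0,\ldots,r-1\}$; hence the morphism $f_s$ applied to the letter $w_{jr+s}$ inside the block image is exactly the morphism $f_{(jr+s)\bmod r}$ that the alternating fixed point prescribes for the letter at position $jr+s$.

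Concatenating these block images over all $j\ge 0$ and reindexing by Euclidean division, I would write $i=jr+s$ with $0\le s<r$; as $j$ runs over $\mathbb{N}$ and $s$ over $\{0,\ldots,r-1\}$, the pair $(j,s)$ ranges bijectively over the indices $i\in\mathbb{N}$, and the blockwise reading order $j=0,1,2,\ldots$ (with $s$ increasing inside each block) produces exactly $i=0,1,2,\ldots$ in increasing order. The resulting concatenation is therefore $\prod_{i\ge 0}f_{i\bmod r}(w_i)$, which is precisely the definition of the alternating fixed point $\mathbf{w}$. This establishes the claim.

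The argument is essentially a matter of definition-chasing, so I do not anticipate a genuine obstacle; the only point requiring care is the index bookkeeping --- ensuring that the order in which the letters are read off the concatenation of block images coincides with the order $0,1,2,\ldots$ of the alternating fixed point, which is exactly what the congruence $(jr+s)\bmod r=s$ guarantees.
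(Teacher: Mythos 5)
Your proposal is correct and follows exactly the same approach as the paper: defining $g(a_0\cdots a_{r-1})=f_0(a_0)\cdots f_{r-1}(a_{r-1})$ on all of $A^r$. The paper leaves the index-bookkeeping verification implicit, which you have simply spelled out.
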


\begin{proof}
For every of length-$r$ word $a_0\cdots a_{r-1}\in A^*$, define the $r$-block substitution $g:A^r\to A^*$ by
$g(a_0\cdots a_{r-1})=f_0(a_0)\cdots f_{r-1}(a_{r-1})$.
\end{proof}

Thanks to the previous result, the Kolakoski--Oldenburger word $\mathbf{k}$ is also a fixed point of the $2$-block substitution
$$g:\left\{
  \begin{array}{l}
    11\mapsto h_0(1)h_1(1)=21\\
    12\mapsto h_0(1)h_1(2)=211\\
    21\mapsto h_0(2)h_1(1)=221\\
    22\mapsto h_0(2)h_1(2)=2211.\\
  \end{array}\right.$$
  Observe that the lengths of images under $g$ are not all equal.

\section{Concrete examples of automatic sequences}\label{sec:ASRB}

Let us present how the above concepts are linked with the help of some examples.
The first one is our toy example.

\begin{example}\label{exa:tm32}
Let $(s(n))_{n\ge 0}$ be the sum-of-digits in base $\frac{3}{2}$. This sequence was, in particular, studied in \cite{sod}.
We have 
\[
(s(n))_{n\ge 0}=0, 2, 3, 3, 5, 4, 5, 7, 5, 5, 7, 8, 5, 7, 6, 7, 9, \ldots.
\]
We let $\mathbf{t}$ denote the sequence $(s(n)\bmod{2})_{n\ge 0}$, 
$$\mathbf{t}=00111011111011011\cdots.$$
The sequence $\mathbf{t}$ is $\frac{3}{2}$-automatic as the DFAO in Figure~\ref{Fig: DFAO sum-of-digits mod 2} generates $\mathbf{t}$ when reading base-$\frac{3}{2}$ representations.
\begin{figure}[htb]
\begin{center}
\begin{tikzpicture}
\tikzstyle{every node}=[shape=circle, fill=none, draw=black,minimum size=20pt, inner sep=2pt]
\node(1) at (0,0) {$0$};
\node(2) at (2,0) {$1$};

\tikzstyle{every node}=[shape=circle, minimum size=5pt, inner sep=2pt]

\draw [-Latex] (-1,0) to node [above] {} (1);

\draw [-Latex] (1) to [loop above] node [above] {$0,2$} (1);
\draw [-Latex] (1) to [bend left] node [above] {$1$} (2);
\draw [-Latex] (2) to [loop above] node [above] {$0,2$} (2);
\draw [-Latex] (2) to [bend left] node [below] {$1$} (1);

\end{tikzpicture}
\end{center}
\caption{A DFAO generating the sum-of-digits in base $\frac{3}{2}$ modulo $2$.}
\label{Fig: DFAO sum-of-digits mod 2}
\end{figure}
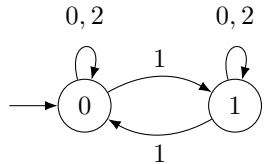

As a consequence of Proposition~\ref{pro:aut-mor}, it will turn out that $\mathbf{t}$ is an alternating fixed point of $(f_0,f_1)$ with
\begin{equation}
  \label{eq:mortoy}
  f_0 \colon \left\{\begin{array}{l}
                      0 \mapsto 00\\
                      1\mapsto 11\\
                    \end{array}\right.\quad \text{ and }
  f_1\colon \left\{\begin{array}{l}
                      0 \mapsto 1\\
                      1\mapsto 0.\\
                    \end{array}\right.
\end{equation}
With Proposition~\ref{pro:rblock}, $\mathbf{t}$ is also a fixed point
of the $2$-block substitution
$$g:\left\{
  \begin{array}{l}
    00\mapsto f_0(0)f_1(0)=001\\
    01\mapsto f_0(0)f_1(1)=000\\
    10\mapsto f_0(1)f_1(0)=111\\
    11\mapsto f_0(1)f_1(1)=110.\\
  \end{array}\right.$$
Observe that we have a $2$-block substitution with images of length $3$. This is not a coincidence, as we will see with Corollary~\ref{cor:cas32}.      
\end{example}

Automatic sequences in integer bases are morphic words, i.e., images, under a coding, of a fixed point of a prolongable morphism~\cite{Allouche--Shallit-2003}. As shown by the next example, there are $\frac{3}{2}$-automatic sequences that are not morphic. 
For a word $u\in\{0,1\}^*$, we let $\overline{u}$ denote the word obtained by applying the involution $i\mapsto 1-i$, $i\in\{0,1\}$, to the letters of $u$. 

\begin{example}\label{ex: Lepisto}
Lepist\"o considered in \cite{Lepisto} the following $2$-block substitution
$$h_2:\left\{
  \begin{array}{l}
    00\mapsto g_0(0)\overline{0}=011\\
    01\mapsto g_0(0)\overline{1}=010\\
    10\mapsto g_0(1)\overline{0}=001\\
    11\mapsto g_0(1)\overline{1}=000\\
  \end{array}\right. \text{ with } g_0:0\mapsto 01, 1\mapsto 00,$$
producing the word $\mathbf{F}_2=01001100001\cdots$. He showed that the factor complexity $\mathsf{p}_{\mathbf{F}_2}$ of this word satisfies $\mathsf{p}_{\mathbf{F}_2}(n)> \delta n^t$ for some $\delta>0$ and $t>2$. Hence, this word cannot be purely morphic nor morphic (because these kinds of words have a factor complexity in $O(n^2)$ \cite{Pansiot}). With Proposition~\ref{pro:mor-aut}, we can show that $\mathbf{F}_2$ is a $\frac{3}{2}$-automatic sequence generated by the DFAO depicted in Figure~\ref{Fig:CE}.
\begin{figure}[htb]
\begin{center}
\begin{tikzpicture}
\tikzstyle{every node}=[shape=circle, fill=none, draw=black,minimum size=20pt, inner sep=2pt]
\node(1) at (0,0) {$0$};
\node(2) at (2,0) {$1$};

\tikzstyle{every node}=[shape=circle, minimum size=5pt, inner sep=2pt]

\draw [-Latex] (-1,0) to node [above] {} (1);

\draw [-Latex] (1) to [loop above] node [left=0.1] {$0$} (1);
\draw [-Latex] (1) to [bend left] node [above] {$1,2$} (2);
\draw [-Latex] (2) to [bend left] node [below] {$0,1,2$} (1);

\end{tikzpicture}
\end{center}
\caption{A DFAO generating $\mathbf{F}_2$.}
\label{Fig:CE}
\end{figure}
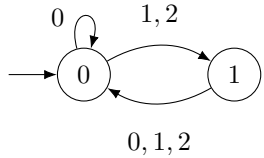
\end{example}

\begin{remark}
  Similarly, the non-morphic word $\mathbf{F}_p$ introduced in \cite{Lepisto} is $\frac{p+1}{p}$-automatic. It is generated by the $p$-block substitution defined by $h_p(au)=g_0(a)\overline{u}$ for $a\in\{0,1\}$ and $u\in\{0,1\}^{p-1}$, where $g_0$ is defined in Example~\ref{ex: Lepisto}.
\end{remark}

We conclude this section with an example of an automatic sequence associated with a language coming from a periodic signature.

\begin{example}
Consider the periodic labeled signature $\mathsf{s}=(023,14,5)^\omega$ producing the i-tree in Figure~\ref{fig:tree023-14-5}.
The first few words in $L(\mathsf{s})$ are $\varepsilon$, $2$,  $3$,  $21$,  $24$,  $35$,  $210$,  $212$,  $213$,  $241$,  $244$,  $355$,  which give the representations of the first $12$ integers in the abstract numeration system $\mathcal{S}=(L(\mathsf{s}),A_6,<)$.
For instance, $\rep_\mathcal{S}(15)=2121$ as the path of label $2121$ leads to the node $15$ in Figure~\ref{fig:tree023-14-5}.
The sum-of-digits in $\mathcal{S}$ modulo $2$, starting with
\[
%sum-of-digit in S
%2, 3, 3, 6, 8, 3, 5, 6, 7, 10, 13, \ldots.
001100110101\cdots,
\]
is $\mathcal{S}$-automatic since it is generated by the DFAO in Figure~\ref{Fig: DFAO sum-of-digits mod 2 in S}.
As a consequence of Proposition~\ref{pro:aut-mor} and Theorem~\ref{thm:cobham}, we will see that this sequence is also the coding of an alternating fixed point of three morphisms.

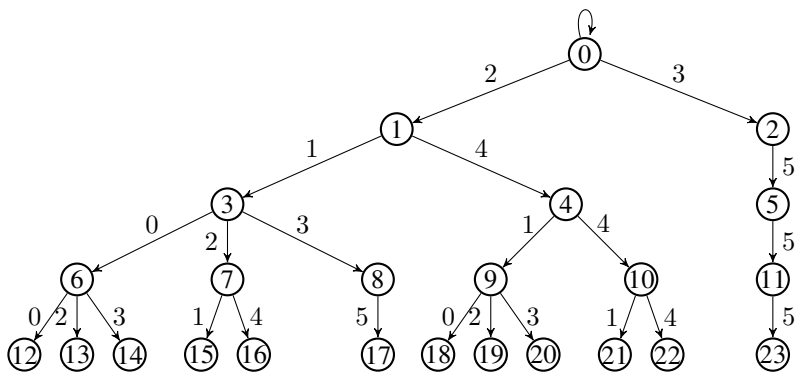
\begin{figure}[h!t]
  \centering
  \tikzset{
  s_bla/.style = {circle,fill=white,thick, draw=black, inner sep=0pt, minimum size=12pt},
  s_red/.style = {circle,black,fill=gray,thick, inner sep=0pt, minimum size=5pt}
}
\tikzstyle{level 1}=[sibling distance=50mm]
\tikzstyle{level 2}=[sibling distance=45mm]
\tikzstyle{level 3}=[sibling distance=20mm]
\tikzstyle{level 4}=[sibling distance=7mm]
\begin{tikzpicture}[->,>=stealth',level distance = 1cm]
  \node (root) [s_bla] {0}
  child {node [s_bla] {1} %level 1
  child {node [s_bla] {3} %level 2 
      child {node [s_bla] {6} %level 3 
        child {node [s_bla] {12} %level 4 
          edge from parent node[left] {$0$} }
          child {node [s_bla] {13} %level 4
          edge from parent node[left] {$2$} }
			child {node [s_bla] {14} %level 4
          edge from parent node[right] {$3$} }
      edge from parent node[above] {$0$} }
            child {node [s_bla] {7} %level 3 
        child {node [s_bla] {15} %level 4 
          edge from parent node[left] {$1$} }
          child {node [s_bla] {16} %level 4
          edge from parent node[right] {$4$} }
      edge from parent node[left] {$2$} }
       child {node [s_bla] {8} %level 3 
        child {node [s_bla] {17} %level 4 
          edge from parent node[left] {$5$} }
      edge from parent node[above] {$3$} }
    edge from parent node[above] {$1$}}
  child {node [s_bla] {4} %level 2 
      child {node [s_bla] {9} %level 3 
        child {node [s_bla] {18} %level 4 
          edge from parent node[left] {$0$} }
          child {node [s_bla] {19} %level 4
          edge from parent node[left] {$2$} }
          child {node [s_bla] {20} %level 4
          edge from parent node[right] {$3$} }
      edge from parent node[above] {$1$} }
    child {node [s_bla] {10} %level 3 
      child {node [s_bla] {21} %level 4 
          edge from parent node[left] {$1$} }
          child {node [s_bla] {22} %level 4
          edge from parent node[right] {$4$} }
      edge from parent node[above] {$4$}}
    edge from parent node[above] {$4$}}
   edge from parent node[above] {$2$}}
%%%%%%   
    child {node [s_bla] {2} %level 1
  child {node [s_bla] {5} %level 2 
      child {node [s_bla] {11} %level 3 
          child {node [s_bla] {23} %level 4
          edge from parent node[right] {$5$} }
      edge from parent node[right] {$5$} }
    edge from parent node[right] {$5$}}
   edge from parent node[above] {$3$}}
;

  \path (root) edge [loop above] (root);
\end{tikzpicture}
  \caption{The i-tree associated with the signature $(023,14,5)^\omega$.}
  \label{fig:tree023-14-5}
\end{figure}  

\begin{figure}[htb]
\begin{center}
\begin{tikzpicture}
\tikzstyle{every node}=[shape=circle, fill=none, draw=black,minimum size=20pt, inner sep=2pt]
\node(1) at (0,0) {$0$};
\node(2) at (2,0) {$1$};

\tikzstyle{every node}=[shape=circle, minimum size=5pt, inner sep=2pt]

\draw [-Latex] (-1,0) to node [above] {} (1);

\draw [-Latex] (1) to [loop above] node [above=-0.2] {$0,2,4$} (1);
\draw [-Latex] (1) to [bend left] node [above=-0.3] {$1,3,5$} (2);
\draw [-Latex] (2) to [loop above] node [above=-0.2] {$0,2,4$} (2);
\draw [-Latex] (2) to [bend left] node [below=-0.2] {$1,3,5$} (1);

\end{tikzpicture}
\end{center}
\caption{A DFAO generating the sum-of-digits modulo $2$ in the ANS $\mathcal{S}=(L(\mathsf{s}),A_6,<)$ where $\mathsf{s}=(023,14,5)^\omega$.}
\label{Fig: DFAO sum-of-digits mod 2 in S}
\end{figure}
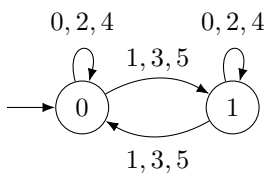
  
\end{example}

\section{Cobham's theorem}\label{sec:cobham}

Cobham's theorem from 1972 states that a sequence is $k$-automatic if and only if it is the image under a coding of the fixed point of a $k$-uniform morphism \cite{Cobham73} (or see \cite[Theorem~6.3.2]{Allouche--Shallit-2003}). This result has been generalized to various contexts: numeration systems associated with a substitution, Pisot numeration systems, Bertrand numeration systems, ANS with regular languages, and so on \cite{BH,DT,Massuir,Maes}. Also see \cite{LR} or \cite{Rigo} for a comprehensive presentation. In this section, we adapt it to the case of $\mathcal{S}$-automatic sequences built on tree languages with a periodic labeled signature (so, in particular, to the rational base case). We start off with a technical lemma.

\begin{lemma}\label{lem:iteration}
Let $r\ge 1$ be an integer, let $A$ be a finite alphabet, and let $f_0,\ldots,f_{r-1}$ be morphisms over $A^*$. Let $\mathbf{x}=x_0x_1x_2\cdots$ be an alternating fixed point of $(f_0,\ldots,f_{r-1})$.
For all $m\ge 0$, we have 
$$f_{m \bmod{r}}(x_m)= x_i \cdots x_{i+|f_{m \bmod{r}}(x_m)|-1}$$
  where $i=\sum_{j=0}^{m-1} \left| f_{j \bmod{r}}(x_j) \right|$.
\end{lemma}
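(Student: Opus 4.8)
The plan is to prove this by induction on $m$, unwinding the definition of an alternating fixed point. Recall that $\mathbf{x}$ being an alternating fixed point of $(f_0,\ldots,f_{r-1})$ means precisely that
\[
\mathbf{x}=f_0(x_0)f_1(x_1)\cdots f_{r-1}(x_{r-1})f_0(x_r)\cdots f_{m\bmod r}(x_m)\cdots,
\]
that is, $\mathbf{x}$ is the concatenation, over all $m\ge 0$, of the blocks $f_{m\bmod r}(x_m)$, in order. The claim is simply that the $m$th such block occupies exactly the positions $i,i+1,\ldots,i+|f_{m\bmod r}(x_m)|-1$ of $\mathbf{x}$, where $i$ is the total length of all the preceding blocks. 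So the heart of the matter is bookkeeping: tracking the starting index of each block as a partial sum of the lengths of the earlier blocks.

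First I would set up the notation $b_m:=f_{m\bmod r}(x_m)$ for the $m$th block and $i_m:=\sum_{j=0}^{m-1}|b_j|$ for its intended starting position, with the convention that the empty sum gives $i_0=0$. The base case $m=0$ then says that $b_0=f_0(x_0)$ equals $x_0\cdots x_{|b_0|-1}$, which is immediate from the defining concatenation, since $b_0$ is the leftmost block. For the inductive step, I would assume the statement holds for all indices up to $m-1$; in particular the block $b_{m-1}$ ends at position $i_{m-1}+|b_{m-1}|-1=i_m-1$. By the defining equation, $b_m$ is the block immediately following $b_{m-1}$ in the concatenation, so it begins at position $i_m$ and, having length $|b_m|$, occupies positions $i_m$ through $i_m+|b_m|-1$. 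This is exactly the asserted identity $f_{m\bmod r}(x_m)=x_{i_m}\cdots x_{i_m+|b_m|-1}$.

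The only genuinely subtle point is making the informal phrase ``the block immediately following'' rigorous, i.e., arguing that the concatenation in the definition of an alternating fixed point really does align so that consecutive blocks abut with no gap or overlap. This is where the induction does its work: the partial sums $i_m$ are defined so that each $i_{m+1}=i_m+|b_m|$, which guarantees precisely that the blocks tile the positions of $\mathbf{x}$ contiguously. I expect this indexing consistency — verifying $i_{m+1}=i_m+|b_m|$ and that it matches the position just after the end of $b_m$ — to be the main (and essentially the only) obstacle, and it is really just an off-by-one check on the partial sums rather than a substantive difficulty. No properties of the morphisms beyond the lengths of their images are needed, so the argument is purely combinatorial on indices.
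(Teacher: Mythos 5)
Your proof is correct and is essentially the same argument as the paper's: the paper simply writes $\mathbf{x}=u\,f_{m\bmod r}(x_m)\cdots$ with $u$ the concatenation of the first $m$ blocks and observes $|u|=\sum_{j=0}^{m-1}|f_{j\bmod r}(x_j)|$, which is exactly your partial-sum bookkeeping without the (unnecessary but harmless) induction wrapper.
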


\begin{proof}
Let $m\ge 0$.
From the definition of an alternating fixed point, we have the factorization $\mathbf{x}=u f_{m \bmod{r}}(x_m) f_{(m+1) \bmod{r}}(x_{m+1})\cdots$ where 
\[
u=f_0(x_0) f_1(x_1)\cdots f_{r-1}(x_{r-1})f_0(x_r)\cdots f_{(m-1) \bmod{r}}(x_{m-1}).
\]
Now $|u|=\sum_{j=0}^{m-1} \left| f_{j \bmod{r}}(x_j) \right|$, which concludes the proof.
\end{proof}

Given an $\mathcal{S}$-automatic sequence associated with the language of a tree with a purely periodic labeled signature, we can turn it into an alternating fixed point of uniform morphisms.

\begin{prop}\label{pro:aut-mor}
Let $r\ge 1$ be an integer and let $A$ be a finite alphabet of digits. 
Let $w_0,\ldots,w_{r-1}$ be $r$ non-empty words in $\mathsf{inc}(A^*)$.
Consider the language $L(\mathsf{s})$ of the i-tree generated by the purely periodic signature $\mathsf{s}=(w_0,w_1,\ldots,w_{r-1})^\omega$.
%Let $$\mathsf{s}=(w_0,w_1,\ldots,w_{r-1})^\omega$$ be a purely periodic labeled signature of an i-tree and let $L(\mathsf{s})$ be the associated language. 
Let $\mathcal{A}=(Q,q_0,A,\delta)$ be a DFA. For $i\in\{0,\ldots,r-1\}$, we define the $r$ morphisms from $Q^*$ to itself by
$$f_i:Q\to Q^{|w_i|}, q\mapsto \delta(q,w_{i,0})\cdots \delta(q,w_{i,|w_i|-1}),$$
%with q_*
%$$f_i:Q\to Q^{|w_i|}, q_*\mapsto \delta(q_*,w_{i,0})\cdots \delta(q_*,w_{i,|w_i|-1}),$$
where $w_{i,j}$ denotes the $j$th letter of $w_i$.
The alternating fixed point $\mathbf{x}=x_0x_1\cdots$ of $(f_0,\ldots,f_{r-1})$ starting with $q_0$ is the sequence of states reached in $\mathcal{A}$ when reading the words of $L(\mathsf{s})$ in increasing radix order, i.e., for all $n\ge 0$, $x_n=\delta(q_0,\rep_\mathcal{S}(n))$ with $\mathcal{S}=(L(\mathsf{s}),A,<)$.
\end{prop}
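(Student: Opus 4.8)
The plan is to prove the claim by induction on $n$, showing that the alternating fixed point $\mathbf{x}=x_0x_1\cdots$ of $(f_0,\ldots,f_{r-1})$ starting with $q_0$ satisfies $x_n=\delta(q_0,\rep_\mathcal{S}(n))$ for all $n\ge 0$. The base case is immediate: the $0$th word of $L(\mathsf{s})$ in radix order is $\varepsilon$, so $\rep_\mathcal{S}(0)=\varepsilon$ and $\delta(q_0,\varepsilon)=q_0=x_0$ by hypothesis. The heart of the argument is to relate, for each node $n$ of the i-tree, the states reached at its children to the value $x_n=\delta(q_0,\rep_\mathcal{S}(n))$, and this is exactly where the definition of the morphisms $f_i$ and Lemma~\ref{lem:iteration} come into play.

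The key structural observation I would make first is how the breadth-first enumeration interacts with the periodic signature. Since $\mathsf{s}=(w_0,\ldots,w_{r-1})^\omega$, the $m$th node visited in breadth-first order (which is $\rep_\mathcal{S}(m)$ by Remark~\ref{rk: enumeration sommets arbre}) has out-degree $|w_{m\bmod r}|$, and the labels on the edges to its children are precisely the letters $w_{m\bmod r,0},\ldots,w_{m\bmod r,|w_{m\bmod r}|-1}$ in increasing order. Consequently, if $\rep_\mathcal{S}(m)=v$, then the children of node $m$, listed left to right, are exactly the words $v\,w_{m\bmod r,0},\ldots,v\,w_{m\bmod r,|w_{m\bmod r}|-1}$, and these occupy consecutive positions in the radix enumeration (they form a contiguous block, because breadth-first order lists all children of node $0$, then all children of node $1$, and so on). The crucial point is therefore that the positions indexed by the children of node $m$ in the enumeration are exactly the block of indices $i,i+1,\ldots,i+|w_{m\bmod r}|-1$ where $i=\sum_{j=0}^{m-1}|w_{j\bmod r}|$, since each earlier node $j<m$ contributes $|w_{j\bmod r}|$ children. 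Here I use that the tree is an i-tree with the root a child of itself, so node $0$ contributes its full set of children and every node has its children enumerated before moving deeper.

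With this in hand, the induction step is a direct comparison. Assuming $x_m=\delta(q_0,\rep_\mathcal{S}(m))=:q$ for the node $m$, the word read to reach the $j$th child of node $m$ is $\rep_\mathcal{S}(m)\,w_{m\bmod r,j}$, so by definition of $\delta$ the state reached there is $\delta(q,w_{m\bmod r,j})$. By the definition of $f_{m\bmod r}$, this is exactly the $j$th letter of $f_{m\bmod r}(x_m)=f_{m\bmod r}(q)=\delta(q,w_{m\bmod r,0})\cdots\delta(q,w_{m\bmod r,|w_{m\bmod r}|-1})$. On the other hand, Lemma~\ref{lem:iteration} tells us that $f_{m\bmod r}(x_m)=x_i\cdots x_{i+|w_{m\bmod r}|-1}$ with $i=\sum_{j=0}^{m-1}|f_{j\bmod r}(x_j)|=\sum_{j=0}^{m-1}|w_{j\bmod r}|$, since each $f_{j\bmod r}$ is uniform of length $|w_{j\bmod r}|$. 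Comparing the two descriptions of the block $x_i\cdots x_{i+|w_{m\bmod r}|-1}$ letter by letter yields $x_{i+j}=\delta(q_0,\rep_\mathcal{S}(i+j))$ for every child, which extends the equality $x_n=\delta(q_0,\rep_\mathcal{S}(n))$ to all indices in the next breadth-first level.

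The main obstacle I anticipate is not any single computation but rather the bookkeeping needed to verify that the two index ranges genuinely coincide: one must check that the block of positions assigned to the children of node $m$ by the breadth-first enumeration agrees, index for index, with the block $x_i\cdots x_{i+|w_{m\bmod r}|-1}$ produced by the alternating fixed point, and that the left-to-right ordering of children (induced by the increasing order of letters in $w_{m\bmod r}$, which is guaranteed since $w_{m\bmod r}\in\mathsf{inc}(A^*)$) matches the order in which $f_{m\bmod r}$ writes its output letters. Since both the enumeration and the morphism are driven by the same periodic sequence of word-lengths $(|w_{m\bmod r}|)_{m\ge 0}$, the offsets $i$ coincide by construction, and the matching of orders reduces to the observation that $f_{m\bmod r}$ reads the letters of $w_{m\bmod r}$ in their natural (increasing) order — exactly the order in which the children are listed. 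Making this synchronization explicit, most cleanly via a simultaneous induction on the breadth-first level rather than on the raw index $n$, is the step that deserves care.
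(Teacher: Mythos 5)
Your proposal is correct and follows essentially the same route as the paper's proof: induction on $n$ (equivalently, propagating from a parent node $m$ to its children), the key counting identity $n=\sum_{k=0}^{m-1}|w_{k\bmod r}|+j$ locating the $j$th child of node $m$ in the breadth-first enumeration, and Lemma~\ref{lem:iteration} to identify $[f_{m\bmod r}(x_m)]_j$ with $x_n$. The synchronization issue you flag at the end is precisely what the paper's Equation~\eqref{eq:longarbre} (and Figure~\ref{fig:eq31}) makes explicit, so there is no gap.
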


\begin{proof} 
Up to renaming the letters of $w_0$, without loss of generality we may assume that $w_0=0x$ with $x\in A^+$.

We proceed by induction on $n\ge 0$. It is clear that $x_0=\delta(q_0,\varepsilon)=q_0$.
Let $n\ge 1$.
  Assume that the property holds for all integers less than $n$ and we prove it for $n$. 

  Write $\rep_\mathcal{S}(n)=a_\ell \cdots a_1 a_0$. This means that in the i-tree generated by $\mathsf{s}$, we have a path of label $a_\ell\cdots a_0$ from the root. We identify words in $L(\mathsf{s})$ with vertices of the i-tree.

  Since $L(\mathsf{s})$ is prefix-closed, there exists an integer $m<n$ such that $\rep_\mathcal{S}(m)=a_\ell \cdots a_1$.
  %By the induction hypothesis, $\delta(q_0,\rep_\mathcal{S}(m))=x_m$.
   Let $i=m\bmod r$. By definition of the periodic labeled signature $\mathsf{s}$, in the i-tree generated by $\mathsf{s}$,  reading $a_\ell\cdots a_1$ from the root leads to a node having $|w_i|$ children that are reached with edges labeled by the letters of $w_i$. 
Since $w_i \in \mathsf{inc}(A^*)$, the letter $a_0$ occurs exactly once in $w_i$, so assume that $w_{i,j}=a_0$ for some $j\in \{0,\ldots,|w_i|-1\}$. By construction of the i-tree given by a periodic labeled signature (see Figure~\ref{fig:eq31} for a pictorial description), we have that 
  \begin{equation}
    \label{eq:longarbre}
    n=\sum_{\substack{v\in L(\mathsf{s})\\ v<\rep_\mathcal{S}(m)}} \deg(v)+j=\sum_{k=0}^{m-1} |w_{k \bmod{r}}|+j.
  \end{equation}
By the induction hypothesis, we obtain
  $$\delta(q_0,\rep_\mathcal{S}(n))=\delta(\delta(q_0,\rep_\mathcal{S}(m)),a_0)=\delta(x_m,a_0)$$
and by definition of $f_i$, we get $\delta(x_m,a_0)=[f_i(x_m)]_j=[f_{m\bmod{r}}(x_m)]_j$.
From Lemma~\ref{lem:iteration} and Equation \eqref{eq:longarbre}, this is exactly $x_n$, as desired.
\end{proof}

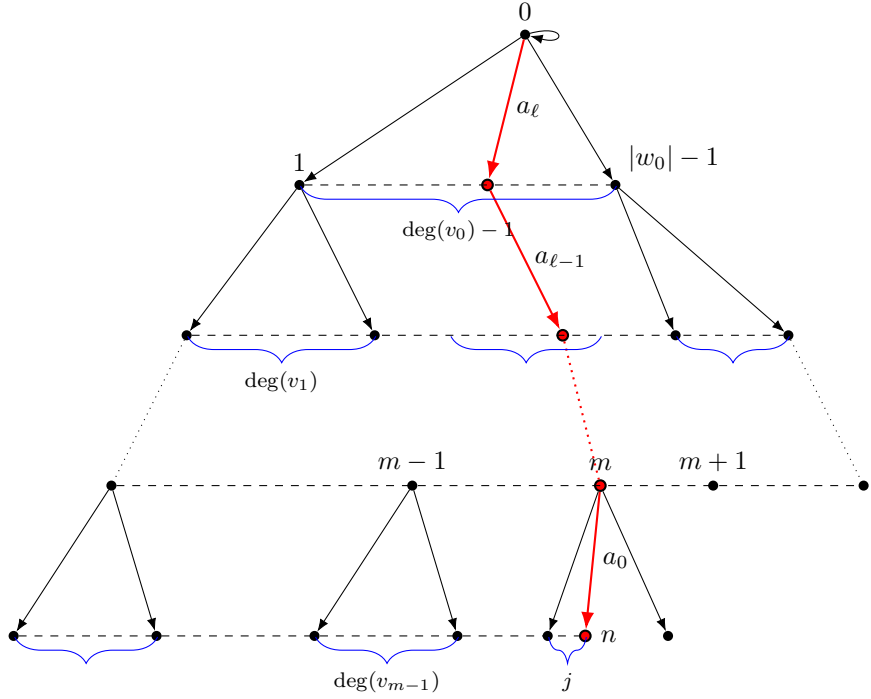
\begin{figure}[h!]
  \centering
  \tikzset{
    s_bla/.style = {circle, fill=black, draw, thick, inner sep=1pt, minimum size=3pt},
      s_red/.style = {circle, fill=red, draw, thick, inner sep=1pt, minimum size=4pt},
}
\begin{tikzpicture}
  \node at (0,0) [s_bla, label=above:$0$] (a1) {};
  \node at (-3,-2) [s_bla, label=above:$1$] (a2) {};
  \node at (-.5,-2) [s_red] (b1) {};
  \node at (.5,-4) [s_red] (b2) {};
  \node at (1,-6) [s_red,label=above:$m$] (b3) {};
  \node at (2.5,-6) [s_bla,label=above:$m+1$] (c1) {};
   \node at (4.5,-6) [s_bla] (c2) {};
  \node at (.8,-8) [s_red,label=right:$n$] (b4) {};
  \node at (.3,-8) [s_bla] (b5) {};
  \node at (1.9,-8) [s_bla] (b6) {};
  \node at (-1.5,-6) [s_bla,label=above:$m-1$] (b3') {};
  \node at (-2.8,-8) [s_bla] (b5') {};
  \node at (-.9,-8) [s_bla] (b6') {};
  \node at (-5.5,-6) [s_bla] (b3'') {};
  \node at (-6.8,-8) [s_bla] (b5'') {};
  \node at (-4.9,-8) [s_bla] (b6'') {};
  \node at (1.2,-2) [s_bla, label=above right:$|w_0|-1$] (a3) {};
  \node at (-4.5,-4) [s_bla] (a4) {};
  \node at (-2,-4) [s_bla] (a5) {};
  \node at (2,-4) [s_bla] (a6) {};
  \node at (3.5,-4) [s_bla] (a7) {};
  \draw [dashed] (a2) -- (a3);
  \draw [dashed] (a4) -- (a7);
  \draw [dashed] (b3'') -- (c2);
  \draw [dashed] (b5'') -- (b4);
  \draw [dotted] (a4) -- (b3'');
  \draw [dotted] (a7) -- (c2);
    \draw [dotted,draw=red,thick] (b2) -- (b3);
  \draw [-Latex,draw=red,thick] (a1) to [] node [right] {$a_\ell$} (b1);
  \draw [-Latex,draw=red,thick] (b1) to [] node [right] {$a_{\ell-1}$} (b2);
  \draw [-Latex,draw=red,thick] (b3) to [] node [right] {$a_0$} (b4);
  \draw [-Latex] (b3) to [] node [right] {} (b5);
  \draw [-Latex] (b3) to [] node [right] {} (b6);
  \draw [-Latex] (b3') to [] node [right] {} (b5');
  \draw [-Latex] (b3') to [] node [right] {} (b6');
  \draw [-Latex] (b3'') to [] node [right] {} (b5'');
  \draw [-Latex] (b3'') to [] node [right] {} (b6'');
  \draw [-Latex] (a1) to [] node [right] {} (a2);
  \draw [-Latex] (a1) to [] node [right] {} (a3);
  \draw [-Latex] (a2) to [] node [right] {} (a4);
  \draw [-Latex] (a2) to [] node [right] {} (a5);
    \draw [-Latex] (a3) to [] node [right] {} (a6);
    \draw [-Latex] (a3) to [] node [right] {} (a7);
    \draw [-Latex] (a1) to [loop right] (a1);
    \draw [blue,decorate,decoration={brace,amplitude=10pt,mirror},xshift=0.4pt,yshift=-0.4pt](-3,-2) -- (1.2,-2) node[black,midway,yshift=-0.6cm] {\footnotesize $\deg(v_0)-1$};
        \draw [blue,decorate,decoration={brace,amplitude=10pt,mirror},xshift=0.4pt,yshift=-0.4pt](-4.5,-4) -- (-2,-4) node[black,midway,yshift=-0.6cm] {\footnotesize $\deg(v_1)$};
           \draw [blue,decorate,decoration={brace,amplitude=10pt,mirror},xshift=0.4pt,yshift=-0.4pt](-1,-4) -- (1,-4) node[black,midway,yshift=-0.6cm] {}; 
           \draw [blue,decorate,decoration={brace,amplitude=10pt,mirror},xshift=0.4pt,yshift=-0.4pt](2,-4) -- (3.5,-4) node[black,midway,yshift=-0.6cm] {};
           \draw [blue,decorate,decoration={brace,amplitude=10pt,mirror},xshift=0.4pt,yshift=-0.4pt](-6.8,-8) -- (-4.9,-8) node[black,midway,yshift=-0.6cm] {};
           
           \draw [blue,decorate,decoration={brace,amplitude=10pt,mirror},xshift=0.4pt,yshift=-0.4pt](-2.8,-8) -- (-.9,-8) node[black,midway,yshift=-0.6cm] {\footnotesize $\deg(v_{m-1})$};
                \draw [blue,decorate,decoration={brace,amplitude=10pt,mirror},xshift=0.4pt,yshift=-0.4pt](.3,-8) -- (.8,-8) node[black,midway,yshift=-0.6cm] {\footnotesize $j$};
\end{tikzpicture}
  \caption{Illustration of Equation~\eqref{eq:longarbre}.}
  \label{fig:eq31}
\end{figure}

Given an alternating fixed point of uniform morphisms, we can turn it into an $\mathcal{S}$-automatic sequence for convenient choices of a language of a tree with a purely periodic labeled signature and a DFAO.

\begin{prop}\label{pro:mor-aut}
  Let $r\ge 1$ be an integer and let $A$ be a finite alphabet.
  Let $f_0,\ldots,f_{r-1}:A^*\to A^*$ be $r$ uniform morphisms of respective length $\ell_0,\ldots,\ell_{r-1}$ such that $f_0$ is prolongable on some letter $a\in A$, i.e., $f_0(a)=ax$ with $x\in A^+$. Let $\mathbf{x}=x_0x_1\cdots$ be the alternating fixed point of $(f_0,\ldots,f_{r-1})$ starting with $a$. Consider the language $L(\mathsf{s})$ of the i-tree generated by the purely periodic labeled signature
  $$\mathsf{s}=\left(0\cdots (\ell_0-1),\ell_0 (\ell_0+1)\cdots (\ell_0+\ell_1-1), \ldots, \left(\sum_{j<r-1} \ell_j \right)\cdots \left(\sum_{j<r} \ell_j-1\right)\right)^\omega,$$ which is made of consecutive non-negative integers.
  Define a DFA~$\mathcal{A}$ having
  \begin{itemize}
  \item $A$ as set of states, 
  \item $a$ as initial state,
  \item $B=\{0,\ldots,\sum_{j<r} \ell_j-1\}$ as alphabet,
  \item its transition function $\delta: A\times B \to A$ defined as follows: For all $i\in B$, there exist a unique $j_i\ge 0$ and a unique $t_i\ge 0$ such that $i=\sum_{k\le j_i-1} \ell_k +t_i$ with $t_i<\ell_{j_i}$, and we set
    $$\delta(b,i)=[f_{j_i}(b)]_{t_i},\quad \forall b\in A.$$
      \end{itemize}
  Then the word $\mathbf{x}$ is the sequence of the states reached in $\mathcal{A}$ when reading the words of $L(\mathsf{s})$ by increasing radix order, i.e., for all $n\ge 0$, $x_n=\delta(a,\rep_\mathcal{S}(n))$ with $\mathcal{S}=(L(\mathsf{s}),B,<)$.
\end{prop}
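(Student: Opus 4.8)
The plan is to deduce the statement directly from Proposition~\ref{pro:aut-mor}, by recognizing that the DFA $\mathcal{A}$ and the signature $\mathsf{s}$ constructed here are exactly the data that Proposition~\ref{pro:aut-mor} attaches to the morphisms $f_0,\ldots,f_{r-1}$. The one thing to keep straight is the reversed role of the two alphabets: in Proposition~\ref{pro:aut-mor} the signature is written over the digit alphabet while the morphisms act on the state set, whereas here the states are the letters of $A$ and the digits are the letters of $B$. Concretely, I would write the $i$th word of the signature as $w_i=s_i\,(s_i+1)\cdots(s_i+\ell_i-1)$ with $s_i=\sum_{j<i}\ell_j$, so that $|w_i|=\ell_i$ and its $j$th letter is $w_{i,j}=s_i+j$ for $0\le j<\ell_i$.

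First I would check that $\mathsf{s}=(w_0,\ldots,w_{r-1})^\omega$ is an admissible purely periodic labeled signature over $B$: each $w_i$ is non-empty, lies in $\mathsf{inc}(B^*)$ since its letters are consecutive integers, and $w_0=0\,1\cdots(\ell_0-1)=0x'$ with $x'\in B^+$, using that prolongability of $f_0$ on $a$ forces $\ell_0=|f_0(a)|=|ax|\ge 2$. Thus the hypotheses of Proposition~\ref{pro:aut-mor} are satisfied with state set $A$, initial state $a$, and digit alphabet $B$.

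The substantive (but routine) step is the index computation showing that the morphism $\tilde f_i\colon A\to A^{\ell_i}$ attached to $\mathcal{A}$ and $w_i$ by the recipe of Proposition~\ref{pro:aut-mor}, namely $\tilde f_i(b)=\delta(b,w_{i,0})\cdots\delta(b,w_{i,\ell_i-1})$, coincides with $f_i$. For the digit $w_{i,j}=s_i+j$, the uniqueness of the decomposition $s_i+j=\sum_{k\le j_i-1}\ell_k+t_i$ with $t_i<\ell_{j_i}$ yields $j_i=i$ and $t_i=j$, whence $\delta(b,w_{i,j})=[f_i(b)]_j$; concatenating over $j=0,\ldots,\ell_i-1$ gives $\tilde f_i(b)=[f_i(b)]_0\cdots[f_i(b)]_{\ell_i-1}=f_i(b)$. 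In particular the case $i=j=0$ gives $\delta(a,0)=[f_0(a)]_0=a$, which matches the self-loop labeled $0$ at the root of the i-tree (Remark~\ref{rem:itree}) and guarantees that the alternating fixed point of $(f_0,\ldots,f_{r-1})$ starting with $a$ is well-defined.

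Having identified $(\tilde f_0,\ldots,\tilde f_{r-1})$ with $(f_0,\ldots,f_{r-1})$, I would simply invoke Proposition~\ref{pro:aut-mor}: its alternating fixed point starting with the initial state $a$ equals $(\delta(a,\rep_\mathcal{S}(n)))_{n\ge 0}$ for $\mathcal{S}=(L(\mathsf{s}),B,<)$. Since that alternating fixed point is $\mathbf{x}$ by construction, we conclude $x_n=\delta(a,\rep_\mathcal{S}(n))$ for all $n\ge 0$. I expect no genuine obstacle here: the entire content is carried by Proposition~\ref{pro:aut-mor}, and the only delicate points are purely notational, namely keeping the two alphabets apart and verifying the arithmetic $s_i+j\mapsto(j_i,t_i)=(i,j)$.
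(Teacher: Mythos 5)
Your proposal is correct, but it takes a genuinely different route from the paper. The paper proves Proposition~\ref{pro:mor-aut} by a fresh induction on $n$ that structurally mirrors the proof of Proposition~\ref{pro:aut-mor}: it writes $\rep_\mathcal{S}(n)=a_\ell\cdots a_0$, passes to the parent $m$ with $\rep_\mathcal{S}(m)=a_\ell\cdots a_1$, establishes the counting identity $n=\sum_{i=0}^{m-1}\ell_{i\bmod r}+t$ from the tree structure, and combines the induction hypothesis with Lemma~\ref{lem:iteration}. You instead observe that the DFA and signature built in Proposition~\ref{pro:mor-aut} are precisely an instance of the input data of Proposition~\ref{pro:aut-mor}, check that the morphisms $\tilde f_i(b)=\delta(b,w_{i,0})\cdots\delta(b,w_{i,\ell_i-1})$ recovered from that data coincide with the given $f_i$ (via the uniqueness of the decomposition $s_i+j\mapsto(i,j)$), and then invoke Proposition~\ref{pro:aut-mor} to conclude. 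This is a legitimate reduction with no circularity, and you correctly handle the two points where it could fail: the normalization $w_0=0x$ with $x\in B^+$ required of a labeled signature (guaranteed by $\ell_0=|f_0(a)|\ge 2$ from prolongability) and the existence of the alternating fixed point starting with the initial state (guaranteed by $\delta(a,0)=[f_0(a)]_0=a$). Your approach buys brevity and makes explicit that the two propositions are inverse constructions of one another; the paper's direct induction buys self-containedness and spells out the index bookkeeping (Equation~\eqref{eq:longarbre2}) that your argument delegates entirely to the earlier proof.
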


\begin{proof}
We again proceed by induction on $n\ge 0$.
It is clear that $x_0=a=\delta(a,\varepsilon)$.
Let $n\ge 1$.
Assume the property holds for all values less than $n$ and we prove it for $n$. 

Write $\rep_\mathcal{S}(n)=a_\ell \cdots a_1 a_0$. This means that in the i-tree with a periodic labeled signature $\mathsf{s}$, we have a path of label $a_\ell\cdots a_0$ from the root. We identify words in $L(\mathsf{s})\subseteq B^*$ with vertices of the i-tree.

  Since $L(\mathsf{s})$ is prefix-closed, there exists $m<n$ such that $\rep_\mathcal{S}(m)=a_\ell \cdots a_1$. 
  %By the induction hypothesis, $x_m=\delta(a,\rep_\mathcal{S}(m))$.
  Let $j=m\bmod r$. In the i-tree generated by $\mathsf{s}$, reading $a_\ell\cdots a_1$ from the root leads to a node having $\ell_j$ children that are reached with edges labeled by $$\sum_{k\le j-1} \ell_k,\ \sum_{k\le j-1} \ell_k+1,\ \ldots,\ \sum_{k\le j} \ell_k-1.$$ 
Observe that the words in $\mathsf{s}$ belong to $ \mathsf{inc}(B^*)$. Therefore the letter $a_0$ occurs exactly once in $B$ and in particular amongst those labels, assume that $a_0=\sum_{k\le j-1} \ell_k+t$ for some $t\in \{0,\ldots,\ell_j-1\}$. By construction of the i-tree, we have that 
  \begin{equation}
    \label{eq:longarbre2}
    n=\sum_{\substack{v\in L(\mathsf{s})\\ v<\rep_\mathcal{S}(m)}} \deg(v)+t=\sum_{i=0}^{m-1} \ell_{i \bmod{r}}+t.
  \end{equation}
By the induction hypothesis, we obtain
  $$\delta(a,\rep_\mathcal{S}(n))=\delta(\delta(a,\rep_\mathcal{S}(m)),a_0)=\delta(x_m,a_0)$$
  and by definition of the transition function, $\delta(x_m,a_0)=[f_j(x_m)]_t=[f_{m\bmod{r}}(x_m)]_t$.
 From Lemma~\ref{lem:iteration} and Equation \eqref{eq:longarbre2}, this is exactly $x_n$.
\end{proof}

\begin{remark}
  What matters in the above statement is that two distinct words of the signature $\mathsf{s}$ do not share any common letter. It mainly ensures that the choice of the morphism to apply when defining $\delta$ is uniquely determined by the letter to be read. 
\end{remark}

\begin{example}
  If we consider the morphisms in \eqref{eq:mortoy}, Proposition~\ref{pro:mor-aut} provides us with the signature $\mathsf{s}=(01,2)^\omega$ instead of the signature $(02,1)^\omega$ of $L_{\frac32}$. We will produce the sequence $\mathbf{t}$ using the language $h(L_{\frac{3}{2}})$ where the coding $h$ is defined by $h(0)=0$, $h(1)=2$ and $h(2)=1$ and in the DFAO in Figure~\ref{Fig: DFAO sum-of-digits mod 2}, the same coding is applied to the labels of the transitions. What matters is the shape of the tree (i.e., the sequence of degrees of the vertices) rather than the labels themselves.
\end{example}

\begin{theorem}\label{thm:cobham} Let $A,B$ be two finite alphabets. 
  An infinite word over $B$ is the image under a coding $g:A\to B$ of an alternating fixed point of uniform morphisms (not necessarily of the same length) over $A$ if and only if it is $\mathcal{S}$-automatic for an abstract numeration system $\mathcal{S}$ built on a tree language with a purely periodic labeled signature.
\end{theorem}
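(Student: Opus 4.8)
The plan is to obtain the two implications as near-immediate consequences of Propositions~\ref{pro:aut-mor} and~\ref{pro:mor-aut}, the only genuine work being to reconcile the roles played by the various alphabets (state set versus alphabet of digits versus output alphabet $B$) and to carry the coding cleanly through the correspondence between states and morphisms.

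For the implication from automaticity to an alternating fixed point, I would start from an $\mathcal{S}$-automatic word $\mathbf{x}$ for an ANS $\mathcal{S}=(L(\mathsf{s}),A',<)$ with $\mathsf{s}=(w_0,\ldots,w_{r-1})^\omega$ purely periodic. By definition there is a DFAO $\mathcal{A}=(Q,q_0,A',\delta,\mu\colon Q\to B)$ with $x_n=\mu(\delta(q_0,\rep_\mathcal{S}(n)))$. I would discard the output map and feed the underlying DFA $(Q,q_0,A',\delta)$ into Proposition~\ref{pro:aut-mor}, producing $|w_i|$-uniform morphisms $f_0,\ldots,f_{r-1}$ over $Q^*$ whose alternating fixed point $\mathbf{y}=y_0y_1\cdots$ starting at $q_0$ satisfies $y_n=\delta(q_0,\rep_\mathcal{S}(n))$ for all $n$. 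Here one uses that $w_0=0x$ and that $0$ labels the self-loop at the root of the i-tree, so $f_0(q_0)$ begins with $\delta(q_0,0)=q_0$ (redefining the transition on a leading zero if needed, which never affects outputs on genuine representations); this guarantees that $f_0$ is prolongable on $q_0$ and that the alternating fixed point is well defined. Taking $A:=Q$ and the coding $g:=\mu$, we obtain $x_n=\mu(y_n)$, that is $\mathbf{x}=g(\mathbf{y})$ is the image under a coding of an alternating fixed point of uniform morphisms, as required.

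For the converse, suppose $\mathbf{x}=g(\mathbf{y})$ where $g:A\to B$ is a coding and $\mathbf{y}=y_0y_1\cdots$ is an alternating fixed point of uniform morphisms $f_0,\ldots,f_{r-1}:A^*\to A^*$ of respective lengths $\ell_0,\ldots,\ell_{r-1}$. Comparing the first letters in the fixed-point equation $\mathbf{y}=f_0(y_0)f_1(y_1)\cdots$ gives $[f_0(y_0)]_0=y_0$, so $f_0$ is prolongable on $a:=y_0$; this is precisely the hypothesis needed to invoke Proposition~\ref{pro:mor-aut}. That proposition supplies a DFA with state set $A$, initial state $a$, digit alphabet $D=\{0,\ldots,\sum_{j<r}\ell_j-1\}$ and transition map $\delta$, together with a tree language $L(\mathsf{s})$ carrying an explicit purely periodic labeled signature, such that $y_n=\delta(a,\rep_\mathcal{S}(n))$ for $\mathcal{S}=(L(\mathsf{s}),D,<)$. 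Promoting this DFA to a DFAO by adjoining the output map $g:A\to B$ yields $x_n=g(y_n)=g(\delta(a,\rep_\mathcal{S}(n)))$, which is exactly the assertion that $\mathbf{x}$ is $\mathcal{S}$-automatic for an ANS built on a tree language with a purely periodic labeled signature.

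I expect the main obstacle to be organizational rather than mathematical: the two propositions already encapsulate the substantive constructions (the state-sequence $\leftrightarrow$ alternating-iteration dictionary in one direction, and the explicit signature of consecutive integers built from the morphism lengths in the other), so the remaining care lies in keeping straight which alphabet is the state set, which is the alphabet of digits, and which is the output alphabet, and in verifying the boundary condition that $f_0$ is prolongable on the initial state/letter. The latter is automatic in both directions---from the root self-loop labeled $0$ in the first, and from matching the first letters of the fixed-point equation in the second---so no argument beyond these two observations is needed.
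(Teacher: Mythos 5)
Your proposal is correct and follows essentially the same route as the paper: the forward direction invokes Proposition~\ref{pro:mor-aut} and promotes the DFA to a DFAO with output function given by the coding $g$, and the reverse direction invokes Proposition~\ref{pro:aut-mor} with the DFAO's output map $\mu$ serving as the coding. Your extra remarks on prolongability (the leading-zero self-loop at the root and the matching of first letters in the fixed-point equation) are sound bookkeeping that the paper leaves implicit.
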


\begin{proof}
The forward direction follows from Proposition~\ref{pro:mor-aut}: define a DFAO where the output function $\tau$ is obtained from the coding~$g:A\to B$ defined by $\tau(b)=g(b)$ for all $b$ in $A$.
The reverse direction directly follows from Proposition~\ref{pro:aut-mor}.
\end{proof}

We are able to say more in the special case of rational bases.
The tree language associated with the rational base $\frac{p}{q}$ has a periodic signature of the form $(w_0,\ldots,w_{q-1})^\omega$ with $\sum_{i=0}^{q-1} |w_i|=p$ and $w_i\in A_p^*$ for all $i$. See Remark~\ref{rem:perT} for examples.

\begin{corollary}\label{cor:cas32} 
If a sequence is $\frac{p}{q}$-automatic, then it is the image under a coding of a fixed point of a $q$-block substitution whose images all have length $p$.
\end{corollary}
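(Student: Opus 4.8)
The plan is to specialize the general machinery of Section~\ref{sec:cobham} to the rational base $\frac{p}{q}$, where the only genuinely new ingredient is the length identity $\sum_{i=0}^{q-1}|w_i|=p$ recorded just above the statement. First I would fix a $\frac{p}{q}$-automatic sequence $\mathbf{y}$ over $B$ together with a DFAO $\mathcal{A}=(Q,q_0,A_p,\delta,\mu\colon Q\to B)$ generating it, so that $y_n=\mu(\delta(q_0,\rep_{\frac{p}{q}}(n)))$ for all $n\ge 0$. Recall from the paragraph preceding the statement that the tree language $L_{\frac{p}{q}}$ has a purely periodic labeled signature $\mathsf{s}=(w_0,\ldots,w_{q-1})^\omega$ with each $w_i\in\mathsf{inc}(A_p^*)$ and, crucially, $\sum_{i=0}^{q-1}|w_i|=p$.

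Next I would apply Proposition~\ref{pro:aut-mor} with $r=q$ to the underlying DFA $(Q,q_0,A_p,\delta)$ of $\mathcal{A}$. This yields $q$ uniform morphisms $f_0,\ldots,f_{q-1}$ over $Q^*$, where $f_i$ has length $|w_i|$, together with the fact that the alternating fixed point $\mathbf{x}=x_0x_1\cdots$ of $(f_0,\ldots,f_{q-1})$ starting with $q_0$ records the run of $\mathcal{A}$ along the radix enumeration, namely $x_n=\delta(q_0,\rep_{\frac{p}{q}}(n))$. Composing with the output function then gives $y_n=\mu(x_n)$, so $\mathbf{y}=\mu(\mathbf{x})$ is the image of $\mathbf{x}$ under the coding $\mu$.

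To finish, I would convert the alternating fixed point into a block substitution via Proposition~\ref{pro:rblock}: the word $\mathbf{x}$ is a fixed point of the $q$-block substitution $g\colon Q^q\to Q^*$ defined by $g(a_0\cdots a_{q-1})=f_0(a_0)\cdots f_{q-1}(a_{q-1})$. The content of the corollary is then the length computation
\[
|g(a_0\cdots a_{q-1})|=\sum_{i=0}^{q-1}|f_i(a_i)|=\sum_{i=0}^{q-1}|w_i|=p,
\]
valid for every input word since each $f_i$ is $|w_i|$-uniform. Hence every image of $g$ has length exactly $p$, and $\mathbf{y}=\mu(\mathbf{x})$ is the image under a coding of a fixed point of such a $q$-block substitution.

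I do not expect a substantial obstacle here: the statement is essentially the composition of Proposition~\ref{pro:aut-mor} and Proposition~\ref{pro:rblock} applied in the rational-base case, and its whole force reduces to the bookkeeping fact $\sum_i|w_i|=p$ peculiar to the signature of $L_{\frac{p}{q}}$. The only points deserving care are that $r$ must be taken equal to the \emph{period} $q$ of the signature (not $p$), and that the prolongability/normalization hypothesis $w_0=0x$ underlying Proposition~\ref{pro:aut-mor} is met, which holds automatically because the rational-base signature already begins with a word of that form.
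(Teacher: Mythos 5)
Your proposal is correct and follows essentially the same route as the paper: apply Proposition~\ref{pro:aut-mor} with $r=q$ to obtain the $|w_i|$-uniform morphisms $f_i$, convert the alternating fixed point into a $q$-block substitution via Proposition~\ref{pro:rblock}, and conclude with the length computation $\sum_{i=0}^{q-1}|w_i|=p$. The only difference is that you spell out the role of the output function $\mu$ as the coding, which the paper leaves implicit.
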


\begin{proof}
Let $(w_0,\ldots,w_{q-1})^\omega$ denote the periodic signature in base $\frac{p}{q}$. Proposition~\ref{pro:aut-mor} provides $q$ morphisms $f_i$ that are respectively $|w_i|$-uniform. By Proposition~\ref{pro:rblock}, the alternating fixed point of $(f_0,\ldots,f_{q-1})$ is a fixed point of a $q$-block substitution $g$ such that, for any length-$q$ word $a_0\cdots a_{q-1}$, 
\[
|g(a_0\cdots a_{q-1})|=|f_0(a_0)f_1(a_1)\cdots f_{q-1}(a_{q-1})|=\sum_{i=0}^{q-1} |w_i|=p.
\qedhere
\]
\end{proof}

\section{Decorating trees and subtrees}\label{sec:tree}

As already observed in Section~\ref{ss:13}, a prefix-closed language $L$ over an ordered (finite) alphabet $(A,<)$ gives an ordered labeled tree $T(L)$ in which edges are  labeled by letters in $A$. Labels of paths from the root to nodes provide a one-to-one correspondence between nodes in $T(L)$ and words in~$L$. We now add an extra information, such as a color, on every node. This information is provided by a sequence taking finitely many values.

\begin{definition}
Let $T=(V,E)$ be a rooted ordered infinite tree, i.e., each node has a finite (ordered) sequence of children. 
As observed in Remark~\ref{rk: enumeration sommets arbre}, the canonical breadth-first traversal of $T$ gives an abstract numeration system --- an enumeration of the nodes: $v_0,v_1,v_2,\ldots$. Let $\mathbf{x}=x_0x_1\cdots$ be an infinite word over a finite alphabet $B$. A {\em decoration} of $T$ by $\mathbf{x}$ is a map from $V$ to $B$ associating with the node $v_n$ the decoration (or color) $x_n$, for all $n\ge 0$.
\end{definition}

To be consistent and to avoid confusion, we refer respectively to {\em label} and {\em decoration} the  labeling of the edges and nodes of a tree. 

\begin{example}
In Figure~\ref{fig:dec_tree} are depicted a prefix of $T(L_{\frac32})$ decorated with the sequence $\mathbf{t}$ of Example~\ref{exa:tm32} and a prefix of the tree $T(L_2)$ associated with the binary numeration system (see~\eqref{eq:Lk}) and decorated with the Thue--Morse sequence $0110100110010110\cdots$. In these trees, the symbol $0$ (respectively $1$) is denoted by a black (respectively red) decorated node.
\begin{figure}[h!t]
  \centering
\begin{minipage}{.2\linewidth}
\tikzset{
  s_bla/.style = {circle,fill=black,thick, inner sep=0pt, minimum size=5pt},
  s_red/.style = {circle,black,fill=red,thick, inner sep=0pt, minimum size=5pt}
}
\begin{tikzpicture}[->,>=stealth',level/.style={sibling distance = 4cm/#1},level distance = 1cm]
  \node [s_bla] {} %0
  child { node [s_bla] {} %1 
    child { node [s_red] {} %2 
      child {node [s_red] {} %3 
      child {node [s_bla] {} edge from parent node[left] {$1$}} %5
            edge from parent node[left] {$0$}}
    child {node [s_red] {} %4 
      child {node [s_red] {} edge from parent node[left] {$0$}} %6
      child {node [s_red] {} edge from parent node[right] {$2$}} %7
  edge from parent node[right] {$2$}}
edge from parent node[left] {$1$}}
edge from parent node[left] {$2$}}  
;
\end{tikzpicture}
\end{minipage}\hskip 2cm
\begin{minipage}{.4\linewidth}
  \tikzset{
  s_bla/.style = {circle,fill=black,thick, inner sep=0pt, minimum size=5pt},
  s_red/.style = {circle,black,fill=red,thick, inner sep=0pt, minimum size=5pt}
}
\tikzstyle{level 1}=[sibling distance=50mm]
\tikzstyle{level 2}=[sibling distance=33mm]
\tikzstyle{level 3}=[sibling distance=13mm]
\tikzstyle{level 4}=[sibling distance=7mm]
\begin{tikzpicture}[->,>=stealth',level distance = 1cm]
  \node [s_bla] {} %0
  child { node [s_red] {} %1
    child { node [s_red] {} %10
      child { node [s_red] {} %100
        child { node [s_red] {} %1000
          edge from parent node[left] {$0$} }
        child { node [s_bla] {} %1001
          edge from parent node[right] {$1$} }
        edge from parent node[left] {$0$} }
      child { node [s_bla] {} %101
        child { node [s_bla] {} %1010
          edge from parent node[left] {$0$} }
        child { node [s_red] {} %1011
          edge from parent node[right] {$1$} }
        edge from parent node[right] {$1$} }
      edge from parent node[left] {$0$} } 
    child { node [s_bla] {} %11
      child { node [s_bla] {} %110
        child { node [s_bla] {} %1100
          edge from parent node[left] {$0$} }
        child { node [s_red] {} %1101
          edge from parent node[right] {$1$} }
        edge from parent node[left] {$0$} }
      child { node [s_red] {} %111
        child { node [s_red] {} %1110
          edge from parent node[left] {$0$} }
        child { node [s_bla] {} %1111
          edge from parent node[right] {$1$} }
        edge from parent node[right] {$1$} }
      edge from parent node[right] {$1$} } 
    edge from parent node[left] {$1$}
  }
;
\end{tikzpicture}
\end{minipage}
\caption{Prefixes of height $4$ of two decorated trees.}
  \label{fig:dec_tree}
\end{figure}
\end{example}

We use the terminology of \cite{sturm} where Sturmian trees are studied; it is relevant to consider (labeled and decorated) factors occurring in trees.

\begin{definition}
The {\em domain} $\dom(T)$ of a labeled tree $T$ is the set of labels of paths from the root to its nodes. In particular, $\dom(T(L))=L$ for any prefix-closed language $L$ over an ordered (finite) alphabet. The {\em truncation} of a tree at height~$h$ is the restriction of the tree to the domain
$\dom(T)\cap A^{\le h}$.
\end{definition}

Let $L$ be a prefix-closed language over $(A,<)$ and $\mathbf{x}=x_0x_1\cdots$ be an infinite word over some finite alphabet $B$.
From now on, we consider the labeled tree $T(L)$ decorated by $\mathbf{x}$.
(We could use an \textit{ad hoc} notation like $T_\mathbf{x}(L)$ but in any case we only work with decorated trees and it would make the presentation cumbersome.)
For all $n\ge 0$, the $n$th word $w_n$ in $L$ corresponds to the $n$th node of $T(L)$ decorated by $x_n$. Otherwise stated, for the ANS $\mathcal{S}=(L,A,<)$ built on $L$, if $w\in L$, the node corresponding to $w$ in $T(L)$ has decoration $x_{\val_\mathcal{S}(w)}$.

\begin{definition}
Let $w\in L$. We let $T[w]$ denote the subtree of $T$ having $w$ as root. Its domain is $w^{-1}L=\{u \mid wu\in L\}$. We say that $T[w]$ is a {\em suffix} of $T$. 

For any $h\ge 0$, we let $T[w,h]$ denote the {\em factor of height $h$ rooted at $w$}, which is the truncation of $T[w]$ at height~$h$. The {\em prefix of height $h$} of $T$ is the factor $T[\varepsilon,h]$. Two factors $T[w,h]$ and $T[w',h]$ of the same height are {\em equal} if they have the same domain and the same decoration, i.e., $x_{\val_\mathcal{S}(wu)}=x_{\val_\mathcal{S}(w'u)}$ for all $u\in \dom(T[w,h])=\dom(T[w',h])$.
We let $$F_h = \{ T[w,h] \mid w\in L\}$$
denote the set of factors of height $h$ occurring in $T$.
The tree $T$ is \emph{rational} if it has finitely many suffixes.
\end{definition}

Note that, due to Remark~\ref{rem:perT}, with any decoration, even constant, the tree $T(L_{\frac{p}{q}})$ is not rational.

In Figure~\ref{fig:A2L32}, we have depicted the factors of height $2$ occurring in $T(L_{\frac32})$ decorated by $\mathbf{t}$. In Figure~\ref{fig:A2L2}, we have depicted the factors of height $2$ occurring in $T(L_2)$ decorated by the Thue--Morse sequence.
In this second example, except for the prefix of height $2$, observe that a factor of height $2$ is completely determined by the decoration of its root.

\begin{figure}[h!t]
  \centering
\tikzset{
  s_bla/.style = {circle,fill=black,thick, inner sep=0pt, minimum size=5pt},
  s_red/.style = {circle,black,fill=red,thick, inner sep=0pt, minimum size=5pt}
}
\begin{tikzpicture}[->,>=stealth',level/.style={sibling distance = 2cm/#1},level distance = 1cm]
  \node [s_bla] {} %2 
    child {node [s_bla] {} %4 
     child {node [s_red] {} edge from parent node[left] {$1$}} %5
      edge from parent node[left] {$2$}}
;
\end{tikzpicture}\hskip.9cm
\begin{tikzpicture}[->,>=stealth',level/.style={sibling distance = 1.5cm/#1},level distance = 1cm]
  \node [s_red] {} %2 
      child {node [s_red] {} %3 
      child {node [s_bla] {} edge from parent node[left] {$1$}} %5
            edge from parent node[left] {$0$}}
    child {node [s_red] {} %4 
      child {node [s_red] {} edge from parent node[left] {$0$}} %6
      child {node [s_red] {} edge from parent node[right] {$2$}} %7
      edge from parent node[right] {$2$}}
;
\end{tikzpicture}
\begin{tikzpicture}[->,>=stealth',level/.style={sibling distance = 1.5cm/#1},level distance = 1cm]
  \node [s_bla] {} %2 
      child {node [s_bla] {} %3 
      child {node [s_red] {} edge from parent node[left] {$1$}} %5
            edge from parent node[left] {$0$}}
    child {node [s_bla] {} %4 
      child {node [s_bla] {} edge from parent node[left] {$0$}} %6
      child {node [s_bla] {} edge from parent node[right] {$2$}} %7
      edge from parent node[right] {$2$}}
;
\end{tikzpicture}
\begin{tikzpicture}[->,>=stealth',level/.style={sibling distance = 1.5cm/#1},level distance = 1cm]
  \node [s_red] {} %2 
      child {node [s_red] {} %3 
        child {node [s_red] {} edge from parent node[left] {$0$}} %6
        child {node [s_red] {} edge from parent node[right] {$2$}} %7
            edge from parent node[left] {$0$}}
    child {node [s_red] {} %4 
     child {node [s_bla] {} edge from parent node[left] {$1$}} %5
      edge from parent node[right] {$2$}}
;
\end{tikzpicture}
\begin{tikzpicture}[->,>=stealth',level/.style={sibling distance = 1.5cm/#1},level distance = 1cm]
  \node [s_bla] {} %2 
      child {node [s_bla] {} %3 
        child {node [s_bla] {} edge from parent node[left] {$0$}} %6
        child {node [s_bla] {} edge from parent node[right] {$2$}} %7
            edge from parent node[left] {$0$}}
    child {node [s_bla] {} %4 
     child {node [s_red] {} edge from parent node[left] {$1$}} %5
      edge from parent node[right] {$2$}}
;
\end{tikzpicture}
\vskip.5cm %=========================================

\begin{tikzpicture}[->,>=stealth',level/.style={sibling distance = 2cm/#1},level distance = 1cm]
  \node [s_red] {} %2 
    child {node [s_bla] {} %4 
     child {node [s_red] {} edge from parent node[left] {$1$}} %5
      edge from parent node[left] {$1$}}
;
\end{tikzpicture}\hskip.5cm
\begin{tikzpicture}[->,>=stealth',level/.style={sibling distance = 2cm/#1},level distance = 1cm]
  \node [s_bla] {} %2 
    child {node [s_red] {} %4 
     child {node [s_bla] {} edge from parent node[left] {$1$}} %5
      edge from parent node[left] {$1$}}
  ;
\end{tikzpicture}\hskip.5cm
\begin{tikzpicture}[->,>=stealth',level/.style={sibling distance = 2cm/#1},level distance = 1cm]
  \node [s_red] {} %2 
    child {node [s_bla] {} %4 
      child {node [s_bla] {} edge from parent node[left] {$0$}}
      child {node [s_bla] {} edge from parent node[right] {$2$}}
      edge from parent node[left] {$1$}}
;
\end{tikzpicture}\hskip.5cm
\begin{tikzpicture}[->,>=stealth',level/.style={sibling distance = 2cm/#1},level distance = 1cm]
  \node [s_bla] {} %2 
    child {node [s_red] {} %4 
      child {node [s_red] {} edge from parent node[left] {$0$}}
      child {node [s_red] {} edge from parent node[right] {$2$}}
      edge from parent node[left] {$1$}}
;
\end{tikzpicture}
\caption{The $9$ factors of height $2$ in $T(L_{\frac32})$ decorated by $\mathbf{t}$. The first one is the prefix occurring only once.}
  \label{fig:A2L32}
\end{figure}

\begin{figure}[h!t]
  \centering
  \tikzset{
  s_bla/.style = {circle,fill=black,thick, inner sep=0pt, minimum size=5pt},
  s_red/.style = {circle,black,fill=red,thick, inner sep=0pt, minimum size=5pt}
}
\begin{tikzpicture}[->,>=stealth',level/.style={sibling distance = 2cm/#1},level distance = 1cm]
  \node [s_bla] {} %2 
    child {node [s_red] {} %4 
      child {node [s_red] {} edge from parent node[left] {$0$}}
      child {node [s_bla] {} edge from parent node[right] {$1$}}
      edge from parent node[left] {$1$}}
;
\end{tikzpicture}\hskip.5cm  
\tikzstyle{level 1}=[sibling distance=25mm]
\tikzstyle{level 2}=[sibling distance=13mm]
\begin{tikzpicture}[->,>=stealth',level distance = 1cm]
   \node [s_red] {} %2 
    child {node [s_red] {} %4 
      child {node [s_red] {} edge from parent node[left] {$0$}}
      child {node [s_bla] {} edge from parent node[right] {$1$}}
      edge from parent node[left] {$0$}}
     child {node [s_bla] {} %4 
      child {node [s_bla] {} edge from parent node[left] {$0$}}
      child {node [s_red] {} edge from parent node[right] {$1$}}
      edge from parent node[right] {$1$}}
  ;
\end{tikzpicture}\hskip.5cm
\begin{tikzpicture}[->,>=stealth',level distance = 1cm]
   \node [s_bla] {} %2 
    child {node [s_bla] {} %4 
      child {node [s_bla] {} edge from parent node[left] {$0$}}
      child {node [s_red] {} edge from parent node[right] {$1$}}
      edge from parent node[left] {$0$}}
     child {node [s_red] {} %4 
      child {node [s_red] {} edge from parent node[left] {$0$}}
      child {node [s_bla] {} edge from parent node[right] {$1$}}
      edge from parent node[right] {$1$}}
  ;
\end{tikzpicture}
\caption{The $3$ factors of height $2$ in $T(L_2)$ decorated by the Thue--Morse sequence. The first one is the prefix occurring only once.}
  \label{fig:A2L2}
\end{figure}

Since every factor of height $h$ is the prefix of a factor of height $h+1$, we trivially have $\# F_{h+1}\ge \# F_{h}$. This is quite similar to factors occurring in an infinite word: any factor has at least one extension. In particular, ultimately periodic words are characterized by a bounded factor complexity.

\begin{lemma}\cite[Proposition~1]{sturm}\label{lem:sturmian-tree}
  Let $L$ be a prefix-closed language over $(A,<)$ and let $\mathbf{x}=x_0x_1\cdots$ be an infinite word over some finite alphabet $B$. Consider the labeled tree $T(L)$ decorated by~$\mathbf{x}$. 
  The tree $T(L)$ is rational if and only if $\# F_{h}=\# F_{h+1}$ for some $h\ge 0$. In particular, in that case, $\# F_h = \# F_{h+n}$ for all $n\ge 0$.
\end{lemma}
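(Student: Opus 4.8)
The plan is to adapt the classical argument for infinite words --- \emph{if the factor complexity satisfies $\mathsf{p}(n)=\mathsf{p}(n+1)$ for some $n$, then it is eventually constant and the word is ultimately periodic} --- to the tree setting, the role of the right-extension of a factor now being played by the passage from a factor of height $h$ to a factor of height $h+1$. First I would record that each set $F_h$ is \emph{finite}: the domain $\dom(T[w,h])=w^{-1}L\cap A^{\le h}$ is a subset of the finite set $A^{\le h}$, and a decoration of such a factor is a map from this finite domain into the finite alphabet $B$; hence only finitely many factors of a given height exist. Thus $(\#F_h)_{h\ge 0}$ is a non-decreasing sequence of non-negative integers.

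For the forward implication, assume $T=T(L)$ is rational and let $N$ be its number of distinct suffixes. Truncating at height $h$ defines a map from the finite set of suffixes onto $F_h$: it is well defined because equal suffixes have equal truncations, and surjective by the definition of $F_h$. Consequently $\#F_h\le N$ for every $h$, so the non-decreasing integer sequence $(\#F_h)_h$, being bounded above, is eventually constant; in particular $\#F_h=\#F_{h+1}$ for some $h$.

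For the converse together with the ``in particular'' claim, I would introduce, for each $h\ge 0$, the equivalence relation $\sim_h$ on $L$ defined by $w\sim_h w'$ if and only if $T[w,h]=T[w',h]$; its number of classes is exactly $\#F_h$, and $\sim_{h+1}$ refines $\sim_h$. The hypothesis $\#F_h=\#F_{h+1}$ then forces $\sim_h=\sim_{h+1}$, since a refinement with the same finite number of classes is an equality. The crux is the propagation step $\sim_h=\sim_{h+1}\Rightarrow\sim_{h+1}=\sim_{h+2}$. I would establish it through the recursive description of factors: $T[w,k+1]$ is entirely determined by the decoration of $w$, the set of edge labels $d$ with $wd\in L$, and the family of sub-factors $T[wd,k]$. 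Hence, if $w\sim_{h+1}w'$, then $w$ and $w'$ carry the same decoration, have the same children labels $d$, and satisfy $wd\sim_h w'd$ for each such $d$; invoking $\sim_h=\sim_{h+1}$ upgrades this to $wd\sim_{h+1}w'd$, and feeding it back into the recursive description at level $h+2$ yields $w\sim_{h+2}w'$. As $\sim_{h+2}$ always refines $\sim_{h+1}$, equality follows, and an immediate induction gives $\#F_h=\#F_{h+n}$ for all $n\ge 0$. To conclude that $T$ is rational, I would observe that two nodes have equal suffixes precisely when $w\sim_k w'$ for every $k$, so suffix-equality is $\bigcap_k\sim_k$; since the refining sequence $(\sim_k)_k$ is constant equal to $\sim_h$ from index $h$ on, this intersection is $\sim_h$, which has only $\#F_h<\infty$ classes, whence $T$ has finitely many suffixes.

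The step I expect to be the main obstacle is the propagation $\sim_h=\sim_{h+1}\Rightarrow\sim_{h+1}=\sim_{h+2}$. Unlike the word case, where a factor extends by a single letter on the right, here the ``extension'' of a height-$h$ factor occurs simultaneously along every branch, so the argument cannot be phrased as a one-dimensional deterministic extension. It must instead go through the recursive root/children/sub-factor decomposition, and one must carefully use both that the children labels agree \emph{and} that each corresponding child sub-factor agrees, before the hypothesis $\sim_h=\sim_{h+1}$ can be applied branch by branch.
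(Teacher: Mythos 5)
Your argument is correct. Note, however, that the paper does not prove this lemma at all: it is quoted as Proposition~1 of the cited reference on Sturmian trees, so there is no in-paper proof to compare against. Your proof is the standard one (finiteness and monotonicity of $\#F_h$, the surjection from suffixes onto $F_h$ for the forward direction, and for the converse the refining equivalences $\sim_h$ together with the propagation $\sim_h=\sim_{h+1}\Rightarrow\sim_{h+1}=\sim_{h+2}$ via the root/children/sub-factor decomposition of a height-$(k+1)$ factor), and you correctly identify and handle the one genuinely tree-specific point, namely that the propagation step must be carried out branch by branch using both the agreement of the children labels and the agreement of each corresponding child sub-factor.
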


We can characterize $\mathcal{S}$-automatic sequences built on a prefix-closed regular language $L$ in terms of the decorated tree $T(L)$. For the sake of presentation, we mainly focus on the case of $k$-automatic sequences. The reader can relate our construction to the $k$-kernel of a sequence.
Roughly, each element of the $k$-kernel corresponds to reading one fixed suffix $u$ from each node $w$ of the tree $T(L_k)$.
We have $\val_k(wu)=k^{|u|}\val_k(w)+\val_k(u)$ and an element from the $k$-kernel is a sequence of the form $(x_{k^{|u|} n +\val_k(u)})_{n\ge 0}$.

\begin{theorem} Let $k\ge 2$ be an integer. 
  A sequence $\mathbf{x}$ is $k$-automatic if and only if the labeled tree $T(L_k)$ decorated by $\mathbf{x}$ is rational.
\end{theorem}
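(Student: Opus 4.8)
The plan is to prove both directions by connecting the notion of a rational decorated tree (finitely many suffixes) with the classical characterization of $k$-automatic sequences via the finiteness of the $k$-kernel. Recall that the $k$-kernel of $\mathbf{x}$ is the set of subsequences $\{(x_{k^i n + j})_{n\ge 0} \mid i\ge 0,\ 0\le j < k^i\}$, and a standard theorem (Allouche--Shallit) states that $\mathbf{x}$ is $k$-automatic if and only if its $k$-kernel is finite. So the whole proof reduces to showing that the decorated tree $T(L_k)$ is rational if and only if the $k$-kernel of $\mathbf{x}$ is finite.

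First I would make precise the correspondence already sketched in the excerpt between suffixes of $T(L_k)$ and elements of the $k$-kernel. For a node $w\in L_k$ with $\val_k(w)=N$, the subtree $T[w]$ has domain $w^{-1}L_k$, and since $L_k$ is the full base-$k$ language, every node $w\neq\varepsilon$ satisfies $w^{-1}L_k = A_k^*$; the root's subtree has the slightly different domain $L_k$ itself. Thus, for nonempty $w$, the \emph{shape} of every suffix is identical, and a suffix $T[w]$ is completely determined by its decoration. The decoration of $T[w]$ is the assignment $u\mapsto x_{\val_k(wu)}$ for $u\in A_k^*$. Using the relation $\val_k(wu)=k^{|u|}\val_k(w)+\val_k(u)=k^{|u|}N+\val_k(u)$, and noting that as $u$ ranges over the words of length $h$ the value $\val_k(u)$ ranges over $\{0,\ldots,k^h-1\}$, I would argue that knowing the decorated suffix $T[w]$ is exactly equivalent to knowing the family of values $x_{k^h N + j}$ for all $h\ge 0$ and all $0\le j<k^h$. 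This is precisely the data of one coherent indexed family packaging all the kernel sequences arising from the fixed ``offset'' $N$; two nonempty nodes $w,w'$ with values $N,N'$ give \emph{equal} decorated suffixes if and only if $x_{k^h N+j}=x_{k^h N'+j}$ for all $h$ and all admissible $j$.

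The key bridge is then a counting equivalence: the number of distinct decorated suffixes of $T(L_k)$ is finite if and only if the $k$-kernel of $\mathbf{x}$ is finite. For the forward implication, if there are only finitely many suffixes, then the map sending an integer $N$ to its decorated suffix takes finitely many values; restricting the decoration data to height $h=1$ and a single offset recovers each kernel sequence $(x_{kn+j})_{n\ge 0}$ as a function of the suffix at an appropriate node, and iterating shows every kernel element is determined by one of finitely many suffixes, so the kernel is finite. Conversely, if the $k$-kernel is finite, a standard argument shows that the collection of shifted-and-dilated sequences indexed by $(i,j)$ stabilizes, and I would assemble these finitely many kernel sequences into the decoration of a suffix: since the decorated suffix rooted at $N$ is built entirely from sequences of the form $(x_{k^h N+j})_n$ which all lie in (finite combinations drawn from) the finite kernel, only finitely many decorated suffixes can occur. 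Having established finiteness of the suffix set in one direction and not in the other exactly mirrors finiteness of the kernel, the theorem follows; I would finally invoke Lemma~\ref{lem:sturmian-tree} to restate ``finitely many suffixes'' as ``$\# F_h=\# F_{h+1}$ for some $h$'' if a factor-complexity formulation is preferred.

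The main obstacle I anticipate is handling the root carefully: because $L_k$ is prefix-closed but the root's outgoing structure differs (there is no ``leading zero'' child at the root, i.e.\ $\varepsilon^{-1}L_k=L_k\ne A_k^*$), the suffix $T[\varepsilon]$ is shape-wise distinct from all other suffixes and occurs exactly once. This is harmless for finiteness (it merely adds one exceptional element) but must be acknowledged so that the clean identity $w^{-1}L_k=A_k^*$ is only claimed for $w\ne\varepsilon$. A second delicate point is the bookkeeping that a single decorated suffix simultaneously encodes kernel sequences for \emph{all} heights $h$ and \emph{all} offsets $j$ compatible with that node, so one must verify that finiteness of the suffix set controls the full kernel and not merely the ``first-level'' subsequences $(x_{kn+j})_n$; the iteration over heights, or equivalently the observation that kernel elements are closed under the decimation operations that correspond to descending one level in the tree, is what makes the equivalence tight.
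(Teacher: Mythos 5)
Your proof is correct, but it takes a genuinely different route from the paper's. You reduce the statement to the classical kernel criterion ($\mathbf{x}$ is $k$-automatic iff its $k$-kernel is finite) and then prove that the decorated tree has finitely many suffixes iff the kernel is finite, via the observation that the decorated suffix rooted at a nonempty node of value $N$ is exactly the data $\bigl(x_{k^hN+j}\bigr)_{h\ge 0,\,0\le j<k^h}$, i.e.\ the $N$th column of the array of all kernel sequences; finiteness then transfers in both directions by a finite-alphabet counting argument (finitely many kernel sequences force the columns to take finitely many values, and finitely many columns force the rows, read off position by position, to take finitely many values). The paper instead argues directly with the DFAO: in the forward direction each suffix $T[w]$ (for $w\neq\varepsilon$) is a full $k$-ary tree whose decoration is determined by the state $\delta(q_0,w)$, so there are at most $\#Q+1$ suffixes; in the backward direction the finitely many suffixes themselves are taken as the states of a DFAO with $\delta(T[w],i)=T[wi]$ and output the decoration of the root. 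The paper's construction is self-contained and is what allows the immediate generalization in the next theorem to ANS on arbitrary prefix-closed regular languages, where the relevant invariant is the pair $(w^{-1}L,\delta(q_0,w))$ and a clean kernel formulation is less available; your argument leans on the black-box kernel theorem and on the specific identity $\val_k(wu)=k^{|u|}\val_k(w)+\val_k(u)$, which is precisely the link the paper only mentions informally before its proof. Your handling of the exceptional root suffix $T[\varepsilon]$ (domain $L_k\neq A_k^*$, occurring once, harmless for finiteness) matches the care the paper takes with the empty word.
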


\begin{proof}
Let us prove the forward direction.
If $\mathbf{x}$ is $k$-automatic, there exists a DFAO $\mathcal{A}=(Q,q_0,A_k,\delta,\tau)$ producing it when fed with base-$k$ representations of integers.
Let $w\in L_k$ be a non-empty base-$k$ representation.
The suffix $T[w]$ is completely determined by the state $\delta(q_0,w)$.
Indeed, it is a full $k$-ary tree and the decorations are given by $\tau(\delta(q_0,wu))$ for $u$ running through $A_k^*$ in radix order.
For the empty word, however, the suffix $T[\varepsilon]=T$ is decorated by $\tau(\delta(q_0,u))$ for $u$ running through $\{\varepsilon\}\cup\{1,\ldots,k-1\} A_k^*$.
Hence $T(L_k)$ is rational: it has a finite number of suffix trees.
% Old proof:
%   If $\mathbf{x}$ is $k$-automatic, there exists a DFAO $\mathcal{A}=(Q,q_0,A_k,\delta,\tau)$ producing it when fed with base-$k$ representations of integers.
%   Let $w\in L_k$ be a non-empty base-$k$ representation and let $h\ge 1$ be an integer.
%   The factor $T[w,h]$ is completely determined by the state $\delta(q_0,w)$. Indeed, it is a full $k$-ary tree of height $h$ and the decorations are given by $\tau(\delta(q_0,wu))$ for $u$ running through $A_k^{\le h}$ in radix order. For the empty word, however, the prefix $T[\varepsilon,h]$ is decorated by $\tau(\delta(q_0,u))$ for $u$ running through $\{\varepsilon\}\cup\{1,\ldots,k-1\} A_k^{< h}$. Hence $\# F_h$ is bounded by $\# Q+1$, for all $h \ge 0$. Since $h\mapsto \# F_h$ is non-decreasing, there exists $H \ge 0$ such that $\# F_H=\# F_{H+1}$.
% We conclude by using Lemma~\ref{lem:sturmian-tree}.

Let us prove the backward direction. 
Assume that the decorated tree $T:=T(L_k)$ is rational.
By definition, the set $Q:=\{T[w] \mid w\in\dom(T)\}$ is finite.
We define a DFAO~$\mathcal{F}$ whose set of states is $Q$
and whose transition function is given by 
$$\forall i\in A_k: \delta(T[w],i)= T[wi].$$
The initial state is given by the tree $T[\varepsilon]=T$ and we set $\delta(T[\varepsilon],0)= T[\varepsilon]$. Finally the output function maps a suffix $T[w]$ to the decoration of its root $w$, that is, $x_{\val_k(w)}$.
If follows that $x_n$ is the output of $\mathcal{F}$ when fed with $\rep_k(n)$. Indeed starting from the initial state $T[\varepsilon]$, we reach the state $T[\rep_k(n)]$ and the output is $x_{\val_k(\rep_k(n))}=x_n$.
% Old proof:
% Assume that the tree $T(L_k)$ is rational. In particular, there exists an integer $h\ge 1$ such that $\# F_h=\# F_{h+1}$. This means that any factor of height $h$ can be extended in a unique way to a factor of height $h+1$, i.e., if $T[w,h]=T[w',h]$ for two words $w,w'\in L_k$, then $T[w,h+1]=T[w',h+1]$. This factor of height $h+1$ is made of a root and $k$ subtrees of height $h$ attached to it. So, for each copy of $T[w,h]$ in the tree $T(L_k)$, to its root are attached the same $k$ trees $T[w0,h],\ldots,T[w(k-1),h]$. The same observation holds for the prefix of the tree except that to the root are attached the $k-1$ trees $T[1,h],\ldots,T[k-1,h]$. 
%   We thus define a DFAO~$\mathcal{F}$ whose set of states is $F_h$
% and whose transition function is given by 
% $$ \forall i\in A_k: \delta(T[w,h],i)= T[wi,h].$$
% The initial state is given by the prefix $T[\varepsilon,h]$ and we set $\delta(T[\varepsilon,h],0)= T[\varepsilon,h]$. Finally the output function maps a factor $T[w,h]$ to the decoration of its root $w$, that is, $x_{\val_k(w)}$.
% For each $n\ge 0$, $x_n$ is the decoration of the $n$th node in $T(L_k)$ by definition. To conclude the proof of the backward direction, we have to show that $x_n$ is the output of $\mathcal{F}$ when fed with $\rep_k(n)$. This follows from the definition of~$\mathcal{F}$: starting from the initial state $T[\varepsilon,h]$, we reach the state $T[\rep_k(n),h]$ and the output is $x_{\val_k(\rep_k(n))}=x_n$.
\end{proof}

We improve the previous result to ANS with a regular numeration language.

\begin{theorem}
  Let $\mathcal{S}=(L,A,<)$ be an ANS built on a prefix-closed regular language $L$. A sequence $\mathbf{x}$ is $\mathcal{S}$-automatic if and only if the labeled tree $T(L)$ decorated by $\mathbf{x}$ is rational.
\end{theorem}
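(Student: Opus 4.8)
The plan is to adapt the proof of the preceding theorem (the base-$k$ case) to an ANS with an arbitrary prefix-closed regular language, the key new difficulty being that nodes of $T(L)$ no longer all have the same degree and that the suffix $T[w]$ is no longer determined by a single automaton state. First I would set up the automata-theoretic framework: since $L$ is regular, there is a deterministic finite automaton $\mathcal{M}=(R,r_0,A,\rho)$ recognizing $L$, and because $L$ is prefix-closed we may take every state of $\mathcal{M}$ reachable from $r_0$ to be accepting (equivalently, the quotient $w^{-1}L$ depends only on $\rho(r_0,w)$). The crucial observation is that the \emph{shape} of the suffix $T[w]$ — its underlying unlabeled, undecorated tree, i.e.\ its domain $w^{-1}L$ — is determined by the Myhill--Nerode class of $w$ in $L$, of which there are finitely many. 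This is exactly what replaces the single fact ``$T[w]$ is a full $k$-ary tree'' used in the base-$k$ argument.

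For the forward direction, suppose $\mathbf{x}$ is $\mathcal{S}$-automatic, produced by a DFAO $\mathcal{A}=(Q,q_0,A,\delta,\tau)$ with $x_n=\tau(\delta(q_0,\rep_\mathcal{S}(n)))$. I would argue that the decorated suffix $T[w]$ is completely determined by the pair $(\rho(r_0,w),\delta(q_0,w))\in R\times Q$: the first coordinate fixes the domain $w^{-1}L$, and the second coordinate, together with $\delta$ and $\tau$, fixes the decoration $\tau(\delta(q_0,wu))=\tau(\delta(\delta(q_0,w),u))$ for every $u\in w^{-1}L$. Since $R\times Q$ is finite, there are only finitely many distinct suffixes, so $T(L)$ is rational. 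One must treat $w=\varepsilon$ as usual, since the prefix $T[\varepsilon]$ reads words from $L$ itself rather than from an arbitrary quotient, but this is a single extra suffix and causes no trouble.

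For the backward direction I would mirror the base-$k$ construction. Assume $T:=T(L)$ is rational, so $Q:=\{T[w]\mid w\in L\}$ is finite. I build a DFAO $\mathcal{F}$ with state set $Q$, initial state $T[\varepsilon]$, transition $\delta(T[w],i)=T[wi]$ for each $i\in A$, and output function sending $T[w]$ to the decoration $x_{\val_\mathcal{S}(w)}$ of its root. The two points requiring care are well-definedness and totality of $\delta$: first, I must check that the transition is well defined, i.e.\ if $T[w]=T[w']$ as decorated trees then $T[wi]=T[w'i]$ for every $i$ — this holds because equality of the suffixes $T[w]=T[w']$ forces $w^{-1}L=w'^{-1}L$ (equal domains) and equal decorations on every node, so in particular the $i$-successors agree whenever $wi\in L$; second, when $wi\notin L$ the letter $i$ is not a valid continuation, so I extend $\delta$ by sending such ``missing'' transitions to a sink state (or any fixed state) exactly as one completes a partial DFA, since $\rep_\mathcal{S}(n)$ never traverses a non-existent edge. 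Running $\mathcal{F}$ on $\rep_\mathcal{S}(n)$ reaches $T[\rep_\mathcal{S}(n)]$ and outputs $x_{\val_\mathcal{S}(\rep_\mathcal{S}(n))}=x_n$, as required.

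The main obstacle, and the place where regularity of $L$ is genuinely used, is the forward direction's claim that there are finitely many distinct \emph{decorated} suffixes. The decoration is automatic and hence controlled by the finite state set $Q$, but the \emph{domains} $w^{-1}L$ must also take only finitely many values; this is precisely the failure that occurs for $L_{\frac{p}{q}}$ (cf.\ Remark~\ref{rem:perT}, where no two infinite subtrees are isomorphic), so regularity is exactly the hypothesis preventing infinitely many suffix shapes. I would therefore make sure to state explicitly that the finiteness of $R\times Q$ is what drives the argument, and to invoke Lemma~\ref{lem:sturmian-tree} if one prefers to conclude rationality from a stabilization $\#F_h=\#F_{h+1}$ rather than from a direct count of suffixes.
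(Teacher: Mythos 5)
Your proposal is correct and follows essentially the same route as the paper: the paper likewise observes that a decorated suffix $T[w]$ is determined by the pair consisting of the quotient $w^{-1}L$ (finitely many by regularity) and the DFAO state $\delta(q_0,w)$, and otherwise reuses the base-$k$ argument verbatim. Your additional remarks on well-definedness of the transition function and on completing the partial DFA with a sink state are sound elaborations of details the paper leaves implicit.
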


\begin{proof}
  The proof follows exactly the same lines as for integer base numeration systems.
The only refinement is the following one.
A suffix $T[w]$ of $T(L)$ is determined by $w^{-1}L$ and $\delta(q_0,w)$. Since $L$ is regular, the set $\{w^{-1}L\mid w\in A^*\}$ is finite.
% Old proof:
%   The only refinement is the following one. A factor $T[w,h]$ of $T(L)$ is determined by $w^{-1}L\cap A^{\le h}$ and $\delta(q_0,w)$. Since the set $\{w^{-1}L\cap A^{\le h}\mid w\in A^*\}$ is finite, then $\# F_h$ is bounded by $\# Q$ times the number of states of the minimal automaton of $L$.
\end{proof}

\subsection{Rational bases} We now turn to rational base numeration systems.  A factor of height $h$ in $T(L_\frac32)$ only depends on the value of its root modulo~$2^h$. This result holds for any rational base numeration system.

\begin{lemma}\cite[Lemme~4.14]{PhD}\label{lem:414}
  Let $w,w'\in L_\frac{p}{q}$ be non-empty words and let $u\in A_p^*$ be a word of length~$h$.
  \begin{itemize}
  \item If $\val_\frac{p}{q}(w)\equiv\val_\frac{p}{q}(w')\bmod{q^h}$, then $u \in w^{-1} L_\frac{p}{q}$ if and only if $u\in (w')^{-1} L_\frac{p}{q}$.
  \item If $u\in (w^{-1} L_\frac{p}{q} \cap (w')^{-1} L_\frac{p}{q})$, then $\val_\frac{p}{q}(w)\equiv\val_\frac{p}{q}(w')\bmod{q^h}$.
  \end{itemize}
\end{lemma}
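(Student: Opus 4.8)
The plan is to reduce the membership condition $u\in w^{-1}L_\frac{p}{q}$ to a single congruence modulo $q^h$ that is affine in $\val_\frac{p}{q}(w)$, and then to read off both bullets from that congruence.

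First I would record a membership criterion for the quotient. Since $w$ is a non-empty word of $L_\frac{p}{q}$, it is the representation of a positive integer and therefore does not begin with $0$; hence $wu$ does not begin with $0$ either. By the uniqueness of representations up to leading zeroes \cite[Theorem~1]{Akiyama--Frougny-Sakarovitch-2008}, any word not starting with $0$ lies in $L_\frac{p}{q}$ exactly when its value is a non-negative integer. As all digits are non-negative, $\val_\frac{p}{q}(wu)\ge 0$ holds automatically, so
$$u\in w^{-1}L_\frac{p}{q}\iff wu\in L_\frac{p}{q}\iff \val_\frac{p}{q}(wu)\in\mathbb{N}.$$
Next I would make the value explicit. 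Using $\val_\frac{p}{q}(wu)=\val_\frac{p}{q}(w)\left(\tfrac{p}{q}\right)^{h}+\val_\frac{p}{q}(u)$ and clearing the denominator $q^h$, I set $S(u):=q^h\,\val_\frac{p}{q}(u)=\sum_{i=0}^{h-1}u_i\,p^i q^{h-1-i}\in\mathbb{Z}$, where $u=u_{h-1}\cdots u_0$, so that $S(u)$ depends only on $u$. Writing $n:=\val_\frac{p}{q}(w)\in\mathbb{N}$, I obtain $q^h\,\val_\frac{p}{q}(wu)=n\,p^h+S(u)$, whence $\val_\frac{p}{q}(wu)\in\mathbb{N}$ if and only if $n\,p^h+S(u)\equiv 0\pmod{q^h}$. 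Combined with the criterion above this gives the crucial equivalence
$$u\in w^{-1}L_\frac{p}{q}\iff \val_\frac{p}{q}(w)\,p^h+S(u)\equiv 0\pmod{q^h}.$$

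The two bullets then follow formally. For the first, if $\val_\frac{p}{q}(w)\equiv\val_\frac{p}{q}(w')\pmod{q^h}$, then $\val_\frac{p}{q}(w)\,p^h\equiv\val_\frac{p}{q}(w')\,p^h\pmod{q^h}$, so the defining congruence holds for $w$ precisely when it holds for $w'$, giving $u\in w^{-1}L_\frac{p}{q}\iff u\in(w')^{-1}L_\frac{p}{q}$. For the second, if $u$ lies in both quotients, then $\val_\frac{p}{q}(w)\,p^h+S(u)$ and $\val_\frac{p}{q}(w')\,p^h+S(u)$ both vanish modulo $q^h$; subtracting yields $\bigl(\val_\frac{p}{q}(w)-\val_\frac{p}{q}(w')\bigr)p^h\equiv 0\pmod{q^h}$, and since $\gcd(p,q)=1$ forces $p^h$ to be invertible modulo $q^h$, I conclude $\val_\frac{p}{q}(w)\equiv\val_\frac{p}{q}(w')\pmod{q^h}$.

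The only genuinely delicate point is the membership criterion of the first step, namely justifying that a word not starting with $0$ and having integral value must be the canonical representation; this rests on the uniqueness-up-to-leading-zeroes theorem. Everything else is the clearing-of-denominators computation and the coprimality of $p^h$ and $q^h$, both of which are routine.
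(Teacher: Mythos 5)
Your argument is correct. Note that the paper itself offers no proof of this lemma: it is imported verbatim from Marsault's thesis (cited as Lemme~4.14 of \cite{PhD}), so there is no in-paper argument to compare against. Your reduction of membership to the single congruence $\val_{\frac{p}{q}}(w)\,p^h+S(u)\equiv 0\pmod{q^h}$ is sound: the quantity $S(u)=\sum_{i=0}^{h-1}u_ip^iq^{h-1-i}$ is indeed an integer depending only on $u$, the identity $q^h\val_{\frac{p}{q}}(wu)=\val_{\frac{p}{q}}(w)p^h+S(u)$ follows from the concatenation formula given in the paper, and the membership criterion (a word over $A_p$ not starting with $0$ belongs to $L_{\frac{p}{q}}$ if and only if its value is a non-negative integer) is exactly what uniqueness of representations up to leading zeroes delivers, with positivity of $\val_{\frac{p}{q}}(wu)$ automatic because $w$ is a non-empty word of $L_{\frac{p}{q}}$. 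Both bullets then follow as you say, the second using that $\gcd(p^h,q^h)=1$. This is, for what it is worth, essentially the computation underlying Marsault's original proof, and it is the same congruence that the paper exploits implicitly when it asserts (e.g.\ before Theorem~\ref{thm:weird-condition} and in Remark~\ref{rem:perT}) that factors of height $h$ depend only on the value of the root modulo $q^h$.
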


In the previous lemma, the empty word behaves differently. For a non-empty word $w\in L_\frac{p}{q}$ with $\val_\frac{p}{q}(w)\equiv0\bmod{q^h}$, a word $u\in A_p^h$ not starting with $0$ verifies $u \in \varepsilon^{-1} L_\frac{p}{q}$ if and only if $u\in w^{-1} L_\frac{p}{q}$.
Therefore the prefix of the tree $T(L_\frac{p}{q})$ has to be treated separately.

\begin{lemma}\cite[Corollaire~4.17]{PhD}\label{lem:417}
Every word $u\in A_p^*$ is suffix of a word in $L_\frac{p}{q}$.
\end{lemma}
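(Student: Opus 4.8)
The plan is to proceed purely value-theoretically: given $u\in A_p^*$ of length $h$, I will exhibit a nonempty word $v\in A_p^*$ with no leading zero such that $\val_{\frac{p}{q}}(vu)$ is a non-negative integer; the uniqueness of representations up to leading zeroes \cite[Theorem~1]{Akiyama--Frougny-Sakarovitch-2008} will then force $vu=\rep_{\frac{p}{q}}\!\left(\val_{\frac{p}{q}}(vu)\right)\in L_{\frac{p}{q}}$, so that $u$ is a suffix of a word of the numeration language. Writing $\val_{\frac{p}{q}}(vu)=\val_{\frac{p}{q}}(v)\left(\frac{p}{q}\right)^{h}+\val_{\frac{p}{q}}(u)$, the whole problem reduces to choosing the integer $\val_{\frac{p}{q}}(v)$ so that this quantity is an integer.

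First I would record that $\val_{\frac{p}{q}}(u)=\frac{M}{q^{h}}$ where $M=\sum_{i=0}^{h-1}u_i\,p^{i}q^{h-1-i}\in\mathbb{N}$. If I take $v=\rep_{\frac{p}{q}}(V)$ for a non-negative integer $V$, then $\val_{\frac{p}{q}}(vu)=\frac{V\,p^{h}+M}{q^{h}}$, which is an integer exactly when $V\,p^{h}\equiv -M \pmod{q^{h}}$. Since $\gcd(p,q)=1$ we have $\gcd(p^{h},q^{h})=1$, so $p^{h}$ is invertible modulo $q^{h}$ and the congruence $V\equiv -M\,(p^{h})^{-1}\pmod{q^{h}}$ has solutions; I would pick one with $V\ge 1$ (the residue class is infinite). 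Setting $N:=\frac{V\,p^{h}+M}{q^{h}}$ then yields a positive integer with $\val_{\frac{p}{q}}(vu)=N$.

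It remains to check that $vu$ is really the canonical representation of $N$. Because $V\ge 1$, the word $v=\rep_{\frac{p}{q}}(V)$ is nonempty and starts with a nonzero digit, hence so does $vu$; thus $vu$ is a word over $A_p$ with no leading zero and with $\val_{\frac{p}{q}}(vu)=N\in\mathbb{N}$. By the uniqueness of representations up to leading zeroes, $vu$ coincides with $\rep_{\frac{p}{q}}(N)$ and therefore belongs to $L_{\frac{p}{q}}$, which proves that $u$ is a suffix of a word of $L_{\frac{p}{q}}$. (The case $h=0$, i.e.\ $u=\varepsilon$, is immediate since $\varepsilon=\rep_{\frac{p}{q}}(0)\in L_{\frac{p}{q}}$.) The only delicate point is precisely this last identification: one must ensure that $v$ is nonempty so that $vu$ has no leading zero, for otherwise the uniqueness statement cannot be invoked to place $vu$ inside $L_{\frac{p}{q}}$ --- choosing $V\ge 1$ is exactly what guarantees this.
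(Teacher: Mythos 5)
The paper does not prove this lemma at all: it is imported verbatim as Corollaire~4.17 of Marsault's thesis, so there is no in-paper argument to compare yours against. Your proof is correct and self-contained. The reduction to the congruence $Vp^{h}\equiv -M\pmod{q^{h}}$ via the identity $\val_{\frac{p}{q}}(vu)=\val_{\frac{p}{q}}(v)(\tfrac{p}{q})^{|u|}+\val_{\frac{p}{q}}(u)$ is exactly right, the invertibility of $p^{h}$ modulo $q^{h}$ follows from $\gcd(p,q)=1$, and you correctly identify the one delicate point: taking $V\ge 1$ so that $v=\rep_{\frac{p}{q}}(V)$ is nonempty and $vu$ has no leading zero, which is what licenses the appeal to uniqueness of representations up to leading zeroes to conclude $vu=\rep_{\frac{p}{q}}(N)\in L_{\frac{p}{q}}$. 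This arithmetic route is in the same spirit as the treatment in the cited thesis (where the statement is derived from the modular description of the quotients $w^{-1}L_{\frac{p}{q}}$, essentially Lemma~\ref{lem:414} here), and it has the advantage of being directly verifiable from the definitions given in Section~\ref{subsec:rational bases} without invoking any structure of the tree.
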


As a consequence of these lemmas, we obtain the following corollary.

\begin{corollary}\label{cor:partition-h}
For all $h\ge 0$, the set $\{w^{-1}L_\frac{p}{q} \cap A_p^ h\mid w\in A_p^+\}$ is a partition of $A_p^h$ into $q^h$ non-empty languages.
\end{corollary}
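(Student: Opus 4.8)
The plan is to use Lemma~\ref{lem:414} to show that the map $w\mapsto w^{-1}L_\frac{p}{q}\cap A_p^h$ depends only on $\val_\frac{p}{q}(w)\bmod q^h$, and then to count the resulting classes. First I would observe that the second bullet of Lemma~\ref{lem:414} shows that if two non-empty words $w,w'$ share a common length-$h$ continuation (i.e.\ $w^{-1}L_\frac{p}{q}\cap A_p^h$ and $(w')^{-1}L_\frac{p}{q}\cap A_p^h$ have nonempty intersection), then $\val_\frac{p}{q}(w)\equiv\val_\frac{p}{q}(w')\bmod q^h$; the first bullet gives the converse, namely that $\val_\frac{p}{q}(w)\equiv\val_\frac{p}{q}(w')\bmod q^h$ forces $w^{-1}L_\frac{p}{q}\cap A_p^h=(w')^{-1}L_\frac{p}{q}\cap A_p^h$. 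Combining these, the sets $w^{-1}L_\frac{p}{q}\cap A_p^h$ for $w\in A_p^+$ are either equal or disjoint, and two of them coincide exactly when the values of their roots agree modulo $q^h$. Hence the distinct sets are indexed by the residues modulo $q^h$ that are actually realized as $\val_\frac{p}{q}(w)\bmod q^h$ for some non-empty $w\in L_\frac{p}{q}$.

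Next I would argue that \emph{every} residue class modulo $q^h$ is realized, so there are exactly $q^h$ distinct sets. This is where Lemma~\ref{lem:417} enters: every word $u\in A_p^*$ is a suffix of some word in $L_\frac{p}{q}$. Given any target residue $c$, I would produce a non-empty $w\in L_\frac{p}{q}$ with $\val_\frac{p}{q}(w)\equiv c\bmod q^h$ by realizing a suitable word as a suffix and using the value identity $\val_\frac{p}{q}(wu)=\val_\frac{p}{q}(w)(\tfrac{p}{q})^{|u|}+\val_\frac{p}{q}(u)$ together with the arithmetic of $\val_\frac{p}{q}$ modulo $q^h$. Since $p$ and $q$ are coprime, $(\tfrac{p}{q})$ acts invertibly modulo $q^h$ in the relevant sense, so the values of $\val_\frac{p}{q}(w)$ over $w\in L_\frac{p}{q}$ cover all residues; this yields that the number of classes is exactly $q^h$. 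Each class $w^{-1}L_\frac{p}{q}\cap A_p^h$ is non-empty because $L_\frac{p}{q}$ is prefix-closed and each node of the tree has descendants at every depth (the periodic signature guarantees unbounded paths), so at least one length-$h$ word extends $w$.

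Finally I would check that these $q^h$ sets genuinely partition $A_p^h$: their pairwise disjointness is established above, and their union is all of $A_p^h$ because, by Lemma~\ref{lem:417}, every $u\in A_p^h$ is a suffix of some word in $L_\frac{p}{q}$, so there is a non-empty $w$ with $u\in w^{-1}L_\frac{p}{q}\cap A_p^h$. Putting these together gives a partition of $A_p^h$ into exactly $q^h$ non-empty blocks.

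I expect the main obstacle to be the counting of realized residues, i.e.\ showing that all $q^h$ residue classes modulo $q^h$ actually occur as $\val_\frac{p}{q}(w)\bmod q^h$ for non-empty $w\in L_\frac{p}{q}$. The forward implications (disjointness and the fact that the sets depend only on the residue) follow cleanly from the two bullets of Lemma~\ref{lem:414}, but pinning down the exact count $q^h$ requires the coverage argument via Lemma~\ref{lem:417} and the modular invertibility coming from $\gcd(p,q)=1$; the subtlety flagged in the remark about the empty word (which is excluded here, since we take $w\in A_p^+$) is what makes the clean count possible once $\varepsilon$ is removed from consideration.
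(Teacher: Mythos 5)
Your argument is correct and is exactly the route the paper intends: the corollary is stated there as a direct consequence of Lemmas~\ref{lem:414} and~\ref{lem:417}, with the two bullets of Lemma~\ref{lem:414} giving the ``equal or disjoint'' dichotomy (plus non-emptiness of each block, since every node of the tree has a child) and Lemma~\ref{lem:417} giving the covering of $A_p^h$. The one step you flag as delicate --- realizing every residue class modulo $q^h$ --- is in fact immediate and needs no invertibility argument: for each $n$ with $1\le n\le q^h$, the word $\rep_{\frac{p}{q}}(n)$ is a non-empty word of $L_{\frac{p}{q}}$ with value $n$, so all $q^h$ residues occur.
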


Otherwise stated, in the tree $T(L_\frac{p}{q})$ with no decoration or, equivalently with a constant decoration for all nodes, there are $q^h+1$ factors of height $h\ge 1$ (we add $1$ to count the height-$h$ prefix, which has a different shape). 
For instance, if the decorations in Figure~\ref{fig:A2L32} are not taken into account, there are $5=2^2+1$ height-$2$ factors occurring in $T(L_{\frac32})$.

Except for the height-$h$ prefix, each factor of height $h$ is extended in exactly $q$ ways to a factor of height $h+1$. To the first (leftmost) leaf of a factor of height $h$ are attached children corresponding to one of the $q$ words of the periodic labeled signature. To the next leaves on the same level are periodically attached as many nodes as the length of the different words of the signature.
For instance, in the case $\frac{p}{q}=\frac{3}{2}$, the first (leftmost) leaf of a factor of height $h$ becomes a node of degree either $1$ (label $1$) or $2$ (labels $0$ and $2$) to get a factor of height~$h+1$. The next leaves on the same level periodically become nodes of degree $2$ or $1$ accordingly. An example is depicted in Figure~\ref{fig:F23}.

\begin{figure}[h!t]
  \centering
  \begin{subfigure}[b]{.9\linewidth}
   \centering
\tikzset{
  s_bla/.style = {circle,fill=white,thick, draw=black, inner sep=0pt, minimum size=5pt},
  s_red/.style = {circle,black,fill=gray,thick, inner sep=0pt, minimum size=5pt}
}
%***
\begin{tikzpicture}[->,>=stealth',level/.style={sibling distance = 1.5cm/#1},level distance = 1cm]
  \node [s_bla,label=above:$8n+2$] {} %2 
      child {node [s_bla] {} %3 
        child {node [s_bla] {} 
          child {node [s_red] {} edge from parent node[left] {$1$} }
          edge from parent node[left] {$1$} }
      edge from parent node[left] {$0$} }
    child {node [s_bla] {} %4 
      child {node [s_bla] {}
        child {node [s_red] {} edge from parent node[left] {$0$}}
        child {node [s_red] {} edge from parent node[right] {$2$}}
        edge from parent node[left] {$0$}} %6
      child {node [s_bla] {}
        child {node [s_red] {} edge from parent node[right] {$1$} }
        edge from parent node[right] {$2$}} %7
      edge from parent node[right] {$2$}}
;
\end{tikzpicture}\hskip.5cm
   \begin{tikzpicture}[->,>=stealth',level/.style={sibling distance = 2cm/#1},level distance = 1cm]
  \node [s_bla,label=above:$8n+7$] {} %2 
    child {node [s_bla] {} %4 
      child {node [s_bla] {}
        child {node [s_red] {} edge from parent node[left] {$1$}} %5
        edge from parent node[left] {$1$}} %5
      edge from parent node[left] {$1$}}
  ;
\end{tikzpicture}\hskip.5cm
\begin{tikzpicture}[->,>=stealth',level/.style={sibling distance = 1.5cm/#1},level distance = 1cm]
  \node [s_bla,label=above:$8n+4$] {} %2 
      child {node [s_bla] {} %3 
        child {node [s_bla] {}
          child {node [s_red] {} edge from parent node[left] {$1$}} %5
          edge from parent node[left] {$0$}} %6
        child {node [s_bla] {}
           child {node [s_red] {} edge from parent node[left] {$0$}} %6
           child {node [s_red] {} edge from parent node[right] {$2$}} %7
           edge from parent node[right] {$2$}} %7
            edge from parent node[left] {$0$}}
    child {node [s_bla] {} %4 
      child {node [s_bla] {}
        child {node [s_red] {} edge from parent node[left] {$1$}} %5
        edge from parent node[left] {$1$}} %5
      edge from parent node[right] {$2$}}
;
\end{tikzpicture}\hskip.5cm
\begin{tikzpicture}[->,>=stealth',level/.style={sibling distance = 2cm/#1},level distance = 1cm]
  \node [s_bla,label=above:$8n+1$] {} %2 
    child {node [s_bla] {} %4 
      child {node [s_bla] {}
        child {node [s_red] {} edge from parent node[left] {$1$}} %5
        edge from parent node[left] {$0$}}
      child {node [s_bla] {}
        child {node [s_red] {} edge from parent node[left] {$0$}} %6
        child {node [s_red] {} edge from parent node[right] {$2$}} %7
        edge from parent node[right] {$2$}}
      edge from parent node[left] {$1$}}
;
\end{tikzpicture}
\caption{The leftmost leaf of each tree is reached by reading a word ending with $1$, so all trees belong to $F^\infty_{3,1}$ (assuming they appear infinitely often).}
\end{subfigure}
\vskip.6cm
 \begin{subfigure}[b]{.9\linewidth}
         \centering
\tikzset{
  s_bla/.style = {circle,fill=white,thick, draw=black, inner sep=0pt, minimum size=5pt},
  s_red/.style = {circle,black,fill=gray,thick, inner sep=0pt, minimum size=5pt}
}
\begin{tikzpicture}[->,>=stealth',level/.style={sibling distance = 1.5cm/#1},level distance = 1cm]
  \node [s_bla,label=above:$8n+6$] {}  %2 
      child {node [s_bla] {} %3 
        child {node [s_bla] {}
          child {node [s_red] {} edge from parent node[left] {$0$}}
          child {node [s_red] {} edge from parent node[right] {$2$}}
          edge from parent node[left] {$1$}} %5
            edge from parent node[left] {$0$}}
    child {node [s_bla] {} %4 
      child {node [s_bla] {}
        child {node [s_red] {} edge from parent node[left] {$1$} }
        edge from parent node[left] {$0$}} %6
      child {node [s_bla] {}
        child {node [s_red] {} edge from parent node[left] {$0$}}
        child {node [s_red] {} edge from parent node[right] {$2$}}
        edge from parent node[right] {$2$}} %7
      edge from parent node[right] {$2$}}
;
\end{tikzpicture}\hskip.5cm
  \begin{tikzpicture}[->,>=stealth',level/.style={sibling distance = 2cm/#1},level distance = 1cm]
  \node [s_bla,label=above:$8n+3$] {} %2 
    child {node [s_bla] {} %4 
      child {node [s_bla] {}
                child {node [s_red] {} edge from parent node[left] {$0$}} %6
                child {node [s_red] {} edge from parent node[right] {$2$}} %7
                edge from parent node[left] {$1$}} %5
      edge from parent node[left] {$1$}}
  ;
  \end{tikzpicture}\hskip.5cm
  \begin{tikzpicture}[->,>=stealth',level/.style={sibling distance = 1.5cm/#1},level distance = 1cm]
  \node [s_bla,label=above:$8n$] {} %2 
      child {node [s_bla] {} %3 
        child {node [s_bla] {}
          child {node [s_red] {} edge from parent node[left] {$0$}} %6
          child {node [s_red] {} edge from parent node[right] {$2$}} %7
          edge from parent node[left] {$0$}} %6
        child {node [s_bla] {}
          child {node [s_red] {} edge from parent node[right] {$1$}} %5
          edge from parent node[right] {$2$}} %7
            edge from parent node[left] {$0$}}
    child {node [s_bla] {} %4 
      child {node [s_bla] {}
        child {node [s_red] {} edge from parent node[left] {$0$}} %6
        child {node [s_red] {} edge from parent node[right] {$2$}} %7
        edge from parent node[left] {$1$}} %5
      edge from parent node[right] {$2$}}
;
\end{tikzpicture}\hskip.5cm
\begin{tikzpicture}[->,>=stealth',level/.style={sibling distance = 2cm/#1},level distance = 1cm]
  \node [s_bla,label=above:$8n+5$] {} %2 
    child {node [s_bla] {} %4 
      child {node [s_bla] {}
        child {node [s_red] {} edge from parent node[left] {$0$}} %6
        child {node [s_red] {} edge from parent node[right] {$2$}} %7
        edge from parent node[left] {$0$}}
      child {node [s_bla] {}
        child {node [s_red] {} edge from parent node[right] {$1$}} %5
        edge from parent node[right] {$2$}}
      edge from parent node[left] {$1$}}
;
\end{tikzpicture}
\caption{The leftmost leaf of each tree is reached by reading a word ending with $0$, so all trees belong to $F^\infty_{3,0}$ (assuming they appear infinitely often).}
\end{subfigure}
  \caption{For the rational base $\frac{3}{2}$, each factor of height $h=2$ gives $2$ factors of height $h+1=3$.}
  \label{fig:F23}
\end{figure}

\begin{lemma}
Let $\mathbf{x}$ be a $\frac{p}{q}$-automatic sequence produced by the DFAO $\mathcal{A}=(Q,q_0,A_p,\delta,\tau)$ and let $T(L_{\frac{p}{q}})$ be decorated by $\mathbf{x}$.
For all $h\ge 1$, the number  $\# F_h$ of height-$h$ factors of $T(L_{\frac{p}{q}})$ is bounded by $1+ q^h\cdot\# Q$.
\end{lemma}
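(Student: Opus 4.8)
The plan is to separate every height-$h$ factor into two independent pieces of data — its \emph{shape} (the domain, i.e.\ the underlying unlabeled truncated tree) and its \emph{decoration} — and to bound the number of possibilities for each. Recall that by definition $T[w,h]$ has domain $w^{-1}L_\frac{p}{q}\cap A_p^{\le h}$, and its decoration assigns to each $u\in\dom(T[w,h])$ the color $x_{\val_\frac{p}{q}(wu)}=\tau(\delta(q_0,wu))$. I would first isolate the prefix $T[\varepsilon,h]$: as noted in the discussion preceding Corollary~\ref{cor:partition-h}, the empty word behaves differently (its domain consists only of words not starting with $0$), so this single factor is counted on its own and accounts for the ``$+1$'' in the statement. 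All remaining factors are rooted at a non-empty word $w\in A_p^+$, and I would bound those by $q^h\cdot\#Q$.

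For the shape, I would invoke Lemma~\ref{lem:414}. If $w,w'\in L_\frac{p}{q}$ are non-empty with $\val_\frac{p}{q}(w)\equiv\val_\frac{p}{q}(w')\bmod q^h$, then for every level $j\le h$ we also have $\val_\frac{p}{q}(w)\equiv\val_\frac{p}{q}(w')\bmod q^j$ (since $q^j\mid q^h$), so Lemma~\ref{lem:414} applied at height $j$ gives $u\in w^{-1}L_\frac{p}{q}\iff u\in(w')^{-1}L_\frac{p}{q}$ for all $u\in A_p^j$. Letting $j$ range over $0,\ldots,h$ shows that the whole truncated domain $w^{-1}L_\frac{p}{q}\cap A_p^{\le h}$ is a function of the residue $c:=\val_\frac{p}{q}(w)\bmod q^h$. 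Hence there are at most $q^h$ possible shapes (Corollary~\ref{cor:partition-h} in fact says exactly $q^h$).

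For the decoration, I would use that $\mathbf{x}$ is produced by $\mathcal{A}$. For non-empty $w$ and any $u\in\dom(T[w,h])$, the factorization of the transition function gives $x_{\val_\frac{p}{q}(wu)}=\tau(\delta(q_0,wu))=\tau(\delta(s,u))$, where $s:=\delta(q_0,w)\in Q$. Thus, once the shape is fixed, the entire decoration of $T[w,h]$ is determined by the single state $s$. Combining the two bounds, each factor rooted at a non-empty word is determined by the pair $(c,s)\in\{0,\ldots,q^h-1\}\times Q$, so there are at most $q^h\cdot\#Q$ of them. Adding the lone prefix $T[\varepsilon,h]$ yields $\#F_h\le 1+q^h\cdot\#Q$, as claimed.

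The only genuinely delicate points — and the ones I would be most careful about — are the separate bookkeeping for the empty word (which is exactly why the bound carries an additive $1$ rather than being purely multiplicative) and the propagation of a \emph{single} congruence modulo $q^h$ down to every intermediate level $j\le h$, which is what guarantees that the full truncated domain, and not merely its bottom layer, depends only on $c$.
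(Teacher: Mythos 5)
Your proposal is correct and follows essentially the same route as the paper's proof: bound the domain by the residue of $\val_\frac{p}{q}(w)$ modulo $q^h$ via Lemma~\ref{lem:414}, bound the decoration by the state $\delta(q_0,w)$, and treat the prefix $T[\varepsilon,h]$ separately to account for the additive $1$. Your explicit propagation of the congruence modulo $q^h$ down to every level $j\le h$ is a small point the paper leaves implicit, but it is the same argument.
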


\begin{proof}
Let $w\in L_\frac{p}{q}$ be a non-empty base-$\frac{p}{q}$ representation and let $h\ge 1$. 
We claim that the factor $T[w,h]$ is completely determined by the value $\val_\frac{p}{q}(w) \bmod{q^h}$ and the state $\delta(q_0,w)$.
First, from Lemma~\ref{lem:414}, the labeled tree $T[w,h]$ of height $h$ with root $w$ and in particular, its domain, only depends on $\val_\frac{p}{q}(w)$ modulo $q^h$. Indeed, if $w,w'\in L_\frac{p}{q}$ are such that $\val_\frac{p}{q}(w)\equiv\val_\frac{p}{q}(w')\bmod{q^h}$, then 
\[
\dom(T[w,h])=w^{-1}L_\frac{p}{q}\cap A_p^{\le h}=w'^{-1}L_\frac{p}{q}\cap A_p^{\le h}=\dom(T[w',h]).
\]
Second, the decorations of the factor $T[w,h]$ are given by $\tau(\delta(q_0,wu))$ for $u$ running through \linebreak $\dom(T[w,h])=w^{-1}L_\frac{p}{q}\cap A_p^{\le h}$ enumerated in radix order. So the decorations only depend on the state $\delta(q_0,w)$ of $\mathcal{A}$. 
Hence the number of such factors is bounded by $q^h\cdot\# Q$.

Similarly, the height-$h$ prefix $T[\varepsilon,h]$ is decorated by $\tau(\delta(q_0,u))$ for $u$ running through $\dom(T[\varepsilon,h])=L_\frac{p}{q}\cap A_p^{\le h}$. 

Hence $\# F_h$ is bounded by $1+ q^h\cdot\# Q$, for all $h\ge 1$. 
\end{proof}

\begin{definition}
A tree of height $h\ge 0$ has nodes on $h+1$ levels: the {\em level} of a node is its distance to the root. Hence, the root is the only node on level $0$ and the leaves are on level $h$.
\end{definition}

For instance, in Figure~\ref{fig:F23}, each tree of height $3$ has four levels.

\begin{definition}
Let $T$ be a labeled decorated tree and let $h\ge 0$. 
We let $F_h^\infty \subseteq F_h$ denote the set of factors of height $h$ occurring infinitely often in $T$. 
For any suitable letter $a$ in the signature of $T$, we let $F_{h,a}^\infty \subseteq F_h^\infty$ denote the set of factors of height $h$ occurring infinitely often in $T$ such that the label of the edge between the first node on level $h-1$ and its first child is $a$. Otherwise stated, the first word of length $h$ in the domain of the factor ends with $a$.
\end{definition}

\begin{example}
In Figure~\ref{fig:F23}, assuming that they occur infinitely often, the first four trees belong to $F^\infty_{3,1}$ and the last four on the second row belong to $F^\infty_{3,0}$.
\end{example}

Even though the language $L_\frac{p}{q}$ is highly non-regular, we can still handle a subset of $\frac{p}{q}$-automatic sequences.
Roughly, with the next two theorems, we characterize $\frac{p}{q}$-automatic sequences in terms of the extensions that factors of a fixed height occurring infinitely often may have.
As mentioned below, the first result can be notably applied when distinct states of the DFAO producing the sequence have distinct outputs.

In the remaining of the section, we let $(w_0,\ldots,w_{q-1})$ denote the signature of $T(L_{\frac{p}{q}})$.
For all $0\le j \le q-1$ and all $0\le i \le |w_j|-1$, we also let $w_{j,i}$ denote the $i$th letter of $w_j$. For the next statement, recall from Lemma~\ref{lem:414} that for two $\frac{p}{q}$-representations $u,v$, $\val_\frac{p}{q}(u)\equiv\val_\frac{p}{q}(v)\mod{q^h}$ implies that the factors $T[u,h]$ and $T[v,h]$ in $T(L_{\frac{p}{q}})$ have the same domain, i.e., $u^{-1}L_\frac{p}{q} \cap A_p^{\le h}= v^{-1}L_\frac{p}{q} \cap A_p^{\le h}$.

\begin{theorem}\label{thm:weird-condition}
  Let $\mathbf{x}$ be a $\frac{p}{q}$-automatic sequence over a finite alphabet $B$ generated by a DFAO $\mathcal{A}=(Q,q_0,A_p,\delta,\tau : A_p \to B)$ with the following property:
  there exists an integer $h$ such that, for all words $u,v\in L_\frac{p}{q}$ such that $\val_\frac{p}{q}(u)\equiv\val_\frac{p}{q}(v)\mod{q^h}$ and $\delta(q_0,u)\neq \delta(q_0,v)$, there exists a word $w\in u^{-1}L_\frac{p}{q} \cap A_p^{\le h}$ such that $\tau(\delta(q_0,uw))\neq \tau(\delta(q_0,vw))$.
%  Old condition
%  there exists an integer $h$ such that, for all distinct states $q_*,q'_*\in Q$ and all words $w\in L_\frac{p}{q}$, there exists a word $u$ in $w^{-1}L_\frac{p}{q}$ of length at most $h$ such that $\tau(\delta(q_*,u))\neq\tau(\delta(q'_*,u))$.
    Then in the tree $T(L_{\frac{p}{q}})$ decorated by $\mathbf{x}$, each factor in $F_{h}^\infty$ can be extended to at most one factor in $F_{h+1,w_{j,0}}^\infty$ for all $0\le j \le q-1$.
% we have for all $0\le j \le q-1$, 
%  $$\# F_{h+1,w_{j,0}}^\infty\le \# F_{h}^\infty.$$ 
  %where $(w_0,\ldots,w_{q-1})$ is the signature of $T(L_{\frac{p}{q}})$ and $w_{j,0}$ denotes the first symbol of $w_j$.
\end{theorem}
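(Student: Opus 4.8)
The plan is to distill the hypothesis into a single clean statement: \emph{the height-$h$ factor of a node determines the state reached there}. To obtain this, I would first note that by Corollary~\ref{cor:partition-h} the domain slices $w^{-1}L_\frac{p}{q}\cap A_p^{h}$ are exactly the $q^h$ classes of Lemma~\ref{lem:414}, so the domain $\dom(T[w,h])$ of a height-$h$ factor already recovers the residue $\val_\frac{p}{q}(w)$ modulo $q^h$. Now suppose two non-empty words $u,v\in L_\frac{p}{q}$ satisfy $T[u,h]=T[v,h]$ (occurrences of factors in $F_h^\infty$ are non-empty words, since the prefix occurs only once). Equality of the domains forces $\val_\frac{p}{q}(u)\equiv\val_\frac{p}{q}(v)\bmod q^h$, while equality of the decorations means $\tau(\delta(q_0,uw))=\tau(\delta(q_0,vw))$ for every $w$ in the common domain $u^{-1}L_\frac{p}{q}\cap A_p^{\le h}$. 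Reading the stated property contrapositively, the absence of a distinguishing $w$ forces $\delta(q_0,u)=\delta(q_0,v)$. Hence every $\phi\in F_h^\infty$ carries a well-defined state $s_\phi:=\delta(q_0,w)$, the same for all occurrences $w$ of $\phi$.

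Next I would fix $\phi\in F_h^\infty$ and $j\in\{0,\ldots,q-1\}$, and suppose $\psi_1,\psi_2\in F_{h+1,w_{j,0}}^\infty$ both truncate to $\phi$ at height $h$; the goal is $\psi_1=\psi_2$. Choosing occurrences $w_1,w_2$ of $\psi_1,\psi_2$ and truncating gives $T[w_1,h]=\phi=T[w_2,h]$, so the first step yields $\delta(q_0,w_1)=\delta(q_0,w_2)=s_\phi$. To fix the domains at height $h+1$, let $u_0$ be the first length-$h$ word of $\dom(\phi)$, so the leftmost level-$h$ node of each occurrence is $w_iu_0$, and from $\val_\frac{p}{q}(w_iu_0)=\val_\frac{p}{q}(w_i)(p/q)^h+\val_\frac{p}{q}(u_0)$ one gets $q^h\val_\frac{p}{q}(w_iu_0)=p^h\val_\frac{p}{q}(w_i)+c$ with $c$ an integer depending only on $u_0$. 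Membership in $F_{h+1,w_{j,0}}^\infty$ means precisely that the signature word at $w_iu_0$ is $w_j$, i.e.\ $\val_\frac{p}{q}(w_iu_0)\equiv j\bmod q$, which rewrites as $p^h\val_\frac{p}{q}(w_i)\equiv q^hj-c\bmod q^{h+1}$. Since $\gcd(p,q)=1$, $p^h$ is invertible modulo $q^{h+1}$, so this congruence together with the residue $\val_\frac{p}{q}(w_i)\bmod q^h$ carried by $\phi$ pins down $\val_\frac{p}{q}(w_i)\bmod q^{h+1}$. Thus $\val_\frac{p}{q}(w_1)\equiv\val_\frac{p}{q}(w_2)\bmod q^{h+1}$, and Lemma~\ref{lem:414} gives $\dom(T[w_1,h+1])=\dom(T[w_2,h+1])$.

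It then remains to match the decorations: for $u$ ranging over this common domain, $\delta(q_0,w_1u)=\delta(s_\phi,u)=\delta(q_0,w_2u)$, hence $\tau(\delta(q_0,w_1u))=\tau(\delta(q_0,w_2u))$, so the two height-$(h+1)$ factors carry identical decorations on identical domains and $\psi_1=\psi_2$, as claimed. I expect the crux to be the first paragraph: one must convert the existential, contrapositive hypothesis into the clean statement that the \emph{factor} (not merely the residue) determines the state, and for that the observation via Corollary~\ref{cor:partition-h} that the domain alone recovers the residue modulo $q^h$ is essential, since it is what guarantees the hypothesis applies to \emph{every} pair of equal-factor nodes rather than only to same-residue ones. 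The modular bookkeeping in the second paragraph is routine once one records that $\gcd(p,q)=1$ makes $p^h$ a unit modulo every power of $q$.
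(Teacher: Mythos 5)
Your proof is correct and follows essentially the same route as the paper's: the contrapositive of the hypothesis shows that a factor in $F_h^\infty$ determines the state reached at each of its (non-empty) occurrences, and equality of states then forces equality of the height-$(h+1)$ extensions once their domains agree. The only difference is that you spell out, via the invertibility of $p^h$ modulo $q^{h+1}$, the step the paper only asserts parenthetically — namely that the height-$h$ factor together with the label $w_{j,0}$ pins down the residue of the root modulo $q^{h+1}$ and hence the domain of the extension.
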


\begin{proof}
  Consider a factor of height $h$ occurring infinitely often, i.e., there is a sequence $(u_i)_{i\ge 1}$ of words in $L_{\frac{p}{q}}$ such that $T[u_1,h]=T[u_2,h]=T[u_3,h]=\cdots$. From Lemma~\ref{lem:414}, all values $\val_\frac{p}{q}(u_i)$ are congruent to $r$ modulo~$q^h$ for some $0\le r<q^h$. Thus the values of $\val_\frac{p}{q}(u_i)$ modulo $q^{h+1}$ that appear infinitely often take at most $q$ values (among $r,r+q^h,\ldots,r+(q-1)q^h$).
 
 The assumption on the DFAO means that if two words $u,v\in L_\frac{p}{q}$ with $\val_\frac{p}{q}(u)\equiv\val_\frac{p}{q}(v)\mod{q^h}$ are such that $\delta(q_0,u)\neq \delta(q_0,v)$, then $T[u,h]\neq T[v,h]$. Indeed, $u^{-1}L_\frac{p}{q} \cap A_p^{\le h}= v^{-1}L_\frac{p}{q} \cap A_p^{\le h}$ and by assumption, there exists $w\in u^{-1}L_\frac{p}{q} \cap A_p^{\le h}$ such that $\tau(\delta(q_0,uw))\neq \tau(\delta(q_0,vw))$. 
Hence, by contraposition, since $T[u_i,h]=T[u_j,h]$, then $\delta(q_0,u_i)=\delta(q_0,u_j)$.
Consequently, if $T[u_i,h+1]$ and $T[u_j,h+1]$ have the same domain, then $T[u_i,h+1]=T[u_j,h+1]$ because $\delta(q_0,u_iw)=\delta(q_0,u_jw)$ for all words $w\in \dom(T[u_i,h+1])$.

%As usual each factor of height $h$ occurring infinitely often gives rise to at most $q$ factors of height $h+1$ occurring infinitely often. 
Consequently, no two distinct factors of height $h+1$ occurring infinitely often and having the same domain can have the same prefix of height $h$.
Therefore, each factor $U$ of height $h$ occurring infinitely often gives rise to at most one factor $U'$ of height $h+1$ in every $F_{h+1,w_{j,0}}^\infty$ for $0\le j \le q-1$ ($U$ and the first letter $w_{j,0}$ uniquely determine the domain of $U'$).
\end{proof}

\begin{remark}
  In the case of a $k$-automatic sequence, the assumption of the above theorem is always satisfied. We may apply the usual minimization algorithm about indistinguishable states to the DFAO producing the sequence: two states $r,r'$ are {\em distinguishable} if there exists a word $u$ such that $\tau(\delta(r,u))\neq\tau(\delta(r',u))$. The pairs $\{r,r'\}$ such that $\tau(r)\neq\tau(r')$ are distinguishable (by the empty word). Then proceed recursively: if a not yet distinguished pair $\{r,r'\}$ is such that $\delta(r,a)=s$ and $\delta(r',a)=s'$ for some letter $a$ and an already distinguished pair $\{s,s'\}$, then $\{r,r'\}$ is distinguished. The process stops when no new pair is distinguished and we can merge states that belong to indistinguished pairs. In the resulting DFAO, any two states are distinguished by a word whose length is bounded by the number of states of the DFAO. We can thus apply the above theorem. Notice that for a $k$-automatic sequence, there is no restriction on the word distinguishing states since it belongs to $A_k^*$. The extra requirement that $w\in u^{-1}L_\frac{p}{q} \cap A_p^{\le h}= v^{-1}L_\frac{p}{q} \cap A_p^{\le h}$ is therefore important in the case of rational bases and is not present for base-$k$ numeration systems. 
\end{remark}

\begin{remark}
  For a rational base numeration system, the assumption of the above theorem is always satisfied if the output function $\tau$ is the identity; otherwise stated, if the output function maps distinct states to distinct values. This is for instance the case of our toy example~$\mathbf{t}$.
\end{remark}  

When the output function is not injective,  the situation could be more intricate as shown in the next two examples.

\begin{example}  
Let us now consider a DFAO with a non-injective output function.
Take the cyclic DFAO in Figure~\ref{fig:ex-not-id}.
We show that its output function meets the condition in Theorem~\ref{thm:weird-condition}.
Observe that the states $q_0$ and $q_1$ have the same output. Set $h=2$. Let $u,v\in L_{\frac{3}{2}}$ be two words such that $\val_\frac{3}{2}(u)\equiv\val_\frac{3}{2}(v)\mod{2^2}$. So they can be extended by exactly the same words of length at most $2$ to get base-$\frac{3}{2}$ representations. Assume that $q_0.u=q_i$ and $q_0.v=q_j$ with $i,j\in\{0,1,2\}$ and $i\neq j$. If $\{i,j\}=\{0,1\}$ or $\{1,2\}$, there exists a word $w$ of length $1$ such that $uw,vw\in L_{\frac{3}{2}}$ and the outputs given by $q_0.uw$ and $q_0.vw$ are distinct.  If $\{i,j\}=\{0,2\}$, there exists a word $w$ of length $2$ such that $uw,vw\in L_{\frac{3}{2}}$ and the outputs given by $q_0.uw$ and $q_0.vw$ are distinct.

  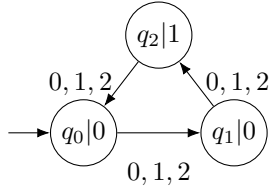
\begin{figure}[htb]
\begin{center}
\begin{tikzpicture}
  \tikzstyle{every node}=[shape=circle, fill=none, draw=black,minimum size=20pt, inner sep=2pt]
\node(1) at (0,0) {$q_0|0$};
\node(2) at (2,0) {$q_1|0$};
\node(3) at (1,1.3) {$q_2|1$};
\tikzstyle{every node}=[shape=circle, minimum size=5pt, inner sep=2pt]

\draw [-Latex] (-1,0) to node [above] {} (1);

\draw [-Latex] (1) to [] node [below] {$0,1,2$} (2);
\draw [-Latex] (2) to [] node [right] {$0,1,2$} (3);
\draw [-Latex] (3) to [] node [left] {$0,1,2$} (1);

\end{tikzpicture}
\caption{A DFAO with two distinct outputs but three states.}\label{fig:ex-not-id}
\end{center}
\end{figure}
  \end{example}  
  
  \begin{example}  
The condition in Theorem~\ref{thm:weird-condition} might be harder to test than in the previous example.  For instance, take the DFAO depicted in Figure~\ref{fig:counterex} reading base-$\frac{3}{2}$ representations.
The condition in Theorem~\ref{thm:weird-condition} is not met for $h=4$.
For instance the words
$u= 212001220110220$ 
and
$v= 212022000012021$ 
are such that $q_0.u=q_1$, $q_0.v=q_0$ and $u^{-1}L_\frac32\cap A_3^4=v^{-1}L_\frac32\cap A_3^4=\{1111\}$. 
There is no word $w \in u^{-1}L_\frac32\cap A_3^{\le 4}$ such that the outputs of $q_0.uw$ and $q_0.vw$ are distinct. In particular, $T[u,4]=T[v,4]$. 
Furthermore, the reader can observe that the conclusion of Theorem~\ref{thm:weird-condition} does not hold, the latter tree has two extensions as $T[u,5] \neq T[v,5]$ because the states $q_0.u1^40=q_1.0=q_3$ and $q_0.v1^40=q_0.0=q_2$ have distinct outputs (observe $u^{-1}L_\frac32\cap A_3^5=v^{-1}L_\frac32\cap A_3^5=\{11110,11112\}$).

\begin{figure}[htb]
\begin{center}
\begin{tikzpicture}
  \tikzstyle{every node}=[shape=circle, fill=none, draw=black,minimum size=20pt, inner sep=2pt]
\node(1) at (0,0) {$q_0|1$};
\node(2) at (2,0) {$q_1|1$};
\node(3) at (1,-2) {$q_2|0$};
\node(4) at (1,2) {$q_3|1$};
\tikzstyle{every node}=[shape=circle, minimum size=5pt, inner sep=2pt]

\draw [-Latex] (-1,0) to node [above] {} (1);

\draw [-Latex] (1) to [bend left] node [above] {$1$} (2);
\draw [-Latex] (2) to [bend left] node [above] {$1$} (1);
\draw [-Latex] (3) to [bend left] node [left] {$0,1,2$} (1);
\draw [-Latex] (4) to [bend left] node [right] {$0,1,2$} (2);
\draw [-Latex] (1) to [] node [right] {$0$} (3);
\draw [-Latex] (1) to [] node [left] {$2$} (4);
\draw [-Latex] (2) to [] node [right] {$2$} (3);
\draw [-Latex] (2) to [] node [left] {$0$} (4);
\end{tikzpicture}
\caption{A DFAO with two distinct outputs but four states.}\label{fig:counterex}
\end{center}
\end{figure}
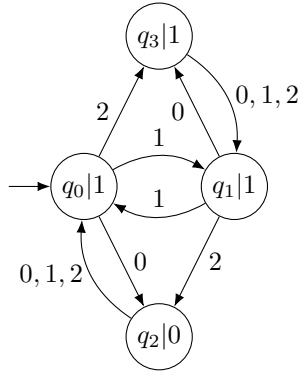

We can generalize the above example with the suffix $1^4$.
Let $h\ge 1$ and consider the word $1^h$. From Lemma~\ref{lem:417}, it occurs as a suffix of words in $L_\frac32$. One may thus find words similar to $u$ and $v$ in the above computations.
Actually, $\val_\frac32(u)=591$ and $\val_\frac32(v)=623$ are both congruent to $15=2^4-1$ modulo $2^4$ (so, they can be followed by the suffix $1^4$), and $\val_\frac32(u1^4)$ and $\val_\frac32(v1^4)$ are both even (so, they can be followed by either $0$ or $2$). To have a situation similar to the one with $u$ and $v$ above, we have to look for numbers $n$ which are congruent to $2^h-1$ modulo $2^h$ and such that $$n\left(\frac32\right)^h+\val_\frac32(1^h)=n\left(\frac32\right)^h+\left(\frac32\right)^h-1$$ is an even integer. Numbers of the form $n=(2 j + 1) 2^h - 1$ are convenient. To conclude with this example, a way to show that the DFAO in Figure~\ref{fig:counterex} does not fulfill the condition, is to prove that, for all $h$,  there are two integers $(2 j + 1) 2^h - 1$ and $(2 j' + 1) 2^h - 1$ whose representations lead to respectively to $q_0$ and $q_1$. 
\end{example}

\begin{theorem}
  Let $\mathbf{x}$ be a sequence over a finite alphabet~$B$, and let the tree $T(L_{\frac{p}{q}})$ be decorated by $\mathbf{x}$.
  %and let $(w_0,\ldots,w_{q-1})$ be the signature of $T(L_{\frac{p}{q}})$ where the first symbol of $w_j$ is denoted by $w_{j,0}$. 
If there exists some $h\ge 0$ such that each factor in $F_{h}^\infty$ can be extended to at most one factor in $F_{h+1,w_{j,0}}^\infty$ for all $0\le j \le q-1$, then $\mathbf{x}$ is $\frac{p}{q}$-automatic.
%  If there exists some $h\ge 0$ such that
%  $\# F_{h+1,w_{j,0}}^\infty\le \# F_{h}^\infty$ for all $0 \le j\le q-1$, then $\mathbf{x}$ is $\frac{p}{q}$-automatic.
\end{theorem}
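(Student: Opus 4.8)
The plan is to build a DFAO $\mathcal{A}=(Q,q_0,A_p,\delta,\tau)$ reading base-$\frac pq$ representations and producing $\mathbf{x}$; equivalently, to exhibit a finite set $Q$ and a map $\nu\colon L_\frac pq\to Q$ with $\nu(\varepsilon)=q_0$, with $\nu(va)=\delta(\nu(v),a)$ whenever $va\in L_\frac pq$, and with $\tau(\nu(v))=x_{\val_\frac pq(v)}$. The states will record, for a node $v$ deep enough in the tree, the factor $T[v,h+1]$ that occurs infinitely often, and the hypothesis on per-phase extensions is exactly what should make the successor factor determined.

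First I would record the finiteness that comes for free. By Corollary~\ref{cor:partition-h} there are only $q^{h+1}$ possible domains for a factor of height $h+1$, and since the decoration takes finitely many values on the boundedly many nodes of such a factor, the set $F_{h+1}$ is finite. Hence only finitely many factors of height $h+1$ occur finitely often; let $N$ bound their last occurrences, so that every node $v_n$ with $n\ge N$ satisfies $T[v_n,h+1]\in F_{h+1}^\infty$. The nodes of index ${<}N$ form a finite subtree of bounded depth $D$ near the root, and since values strictly increase along any branch, the \emph{transient} part of every path $\rep_\frac pq(n)$ is an initial segment of length at most $D$. I would therefore give the automaton one dedicated state per transient node and wire these into the \emph{stable} part as soon as a read digit leaves the transient subtree.

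On the stable part the states are the elements of $F_{h+1}^\infty$, with output $\tau(U)$ equal to the decoration of the root of $U$, so that the state at $v_n$ outputs $x_n$. To transition from $U=T[v,h+1]$ on a letter $a$ labelling an edge at the root of $U$, I first read off from $U$ the subfactor rooted at the $a$-child, which is exactly $T[va,h]\in F_h^\infty$; this step is a plain truncation and uses no hypothesis. The factor $T[va,h+1]$ is then an extension of $T[va,h]$ occurring infinitely often (since $\val_\frac pq(va)\ge N$), and by the hypothesis there is at most one such extension in each phase $F_{h+1,w_{j,0}}^\infty$, so I send $U$ to this extension. One then checks by induction along the path, invoking Lemma~\ref{lem:414} to control domains and the per-phase uniqueness to identify each successor, that reading $\rep_\frac pq(n)$ lands in $T[v_n,h+1]$ and hence outputs $x_n$; Lemma~\ref{lem:417} guarantees that every relevant continuation genuinely occurs in the tree.

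The main obstacle is the well-definedness of the stable transition as a function of the pair $(U,a)$ alone. The phase in which $T[va,h+1]$ sits is governed by $\val_\frac pq(va)\bmod q^{h+1}$, and because appending a digit loses one $q$-adic digit of precision -- precisely the non-regularity of $L_\frac pq$ -- that residue is not recoverable from $U$ and $a$. Thus two nodes carrying the same state $U$ may, after reading the same $a$, fall into genuinely different-shaped factors, so the naive assignment above is a priori only a relation, not a map. The crux is to show that the per-phase hypothesis nevertheless forces these branches to agree on all their \emph{common} continuations, so that the produced outputs are correct and the construction collapses to a bona fide finite right-congruence; I expect this to follow from an induction on height that propagates equality of the infinitely-occurring extensions down the tree, using at each step that a height-$h$ factor admits a single infinitely-occurring continuation per phase. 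In the favourable case where distinct states have distinct outputs -- the situation discussed in the remarks following Theorem~\ref{thm:weird-condition} and of the toy example $\mathbf{t}$ -- this collapse is immediate, since the needed finite family of states is then read off directly from the root decorations.
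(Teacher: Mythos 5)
You have correctly identified the crux --- that the ``successor factor'' of $(U,a)$ is a priori only a relation because the residue $\val_{\frac{p}{q}}(va)\bmod q^{h+1}$ is not recoverable from $U$ and $a$ --- but your proposed resolution is left as a hope (``I expect this to follow from an induction\dots'') and, more importantly, it is not the mechanism that actually works. The candidate successors in distinct phases do \emph{not} collapse to a single state, and the transition relation does not become a function: the paper's key observation (via Corollary~\ref{cor:partition-h}) is that two candidate height-$h$ successors $V_1\in F^\infty_{h,c_1}$ and $V_2\in F^\infty_{h,c_2}$ with $c_1\neq c_2$ have \emph{disjoint} domains at level $h$, i.e.\ $\dom(V_1)\cap\dom(V_2)\cap A_p^h=\emptyset$. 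So your phrase ``agree on all their common continuations'' is beside the point --- at depth $h$ there are no common continuations, and a wrong choice of phase does not produce a wrong output; it simply cannot read the letter arriving $h$ steps later. The paper therefore keeps the construction nondeterministic: it builds an NFA $\mathcal{T}$ out of the prefix $T[\varepsilon,t+h-1]$ and one copy of each element of $F_h^\infty$ (note: height $h$, not $h+1$), puts up to $q$ outgoing transitions per letter at each factor root (one per phase, using the at-most-one-extension hypothesis), proves that every word of $L_{\frac{p}{q}}$ has exactly one successful run, and then determinizes by the subset construction; unambiguity guarantees that every reachable subset containing final states carries a single decoration, which defines the output.

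Two further points. First, your closing remark that the collapse ``is immediate'' when distinct states have distinct outputs is circular here: in this direction of the equivalence there is no DFAO and hence no states yet --- $\mathbf{x}$ is an arbitrary decorating sequence, and the only data available are the factors themselves. Second, your handling of the transient part is essentially right and matches the paper's choice of the threshold $t$ and the prefix $T[\varepsilon,t+h-1]$, so the gap is confined to the stable part: without the disjointness-of-domains argument and the passage through an unambiguous NFA, the proof is not complete.
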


\begin{proof}
 For the sake of readability, write $T=T(L_{\frac{p}{q}})$. The length-$h$ factors of $T$ occurring only a finite number of times appear in a prefix of the tree. Let $t \ge 0$ be the least integer such that 
  all nodes on any level $\ell\ge t$ are roots of a factor in $F_h^{\infty}$.

  We first define a NFA $\mathcal{T}$ in the following way. An illustration that we hope to be helpful is given below in Example~\ref{ex:NFA-construction}. It is made (nodes and edges) of the prefix $T[\varepsilon,t+h-1]$ of height $t+h-1$ and a copy of every element in $F_h^{\infty}$. So the set of states is the union of the nodes of the prefix $T[\varepsilon,t+h-1]$ and the nodes in the trees of $F_h^{\infty}$. 
  Final states are all the nodes of the prefix $T[\varepsilon,t+h-1]$ and the nodes on level exactly $h$ in every element of $F_h^{\infty}$, i.e., the leaves of every element of $F_h^{\infty}$. The unique initial state is the root of the prefix $T[\varepsilon,t+h-1]$. 
  We define the following extra transitions between these elements.
  \begin{itemize}
  \item If a node $m$ on level $t-1$ in the prefix $T[\varepsilon,t+h-1]$ has a child $n$ reached through an arc with label $d$, then in the NFA we add an extra transition with the same label $d$ from $m$ to the root of the element of $F_h^{\infty}$ equal to $T[n,h]$. This is well defined because $n$ on level $t$. 
  \item Let $r$ be the root of an element $T[r,h]$ of $F_h^{\infty}$. Suppose that $r$ has a child $s$ reached through an arc with label $d$.
By assumption the element $T[r,h]$ in $F_h^{\infty}$ can be extended in at most one way to an element $U_c$ in $F_{h+1,c}^{\infty}$ for each $c\in\{w_{0,0},\ldots,w_{q-1,0}\}$. The tree $U_c$ with root $r$ has a subtree of height $h$ with root $rd=s$ denoted by $V_{c,d}\in F_h^{\infty}$ (as depicted in Figure~\ref{fig:rd}; if the extension with $c$ exists, $V_{c,d}$ is unique). In the NFA, we add extra transitions with label $d$ from $r$ to the root of $V_{c,d}$ (there are at most $q$ such trees).
  \end{itemize}

  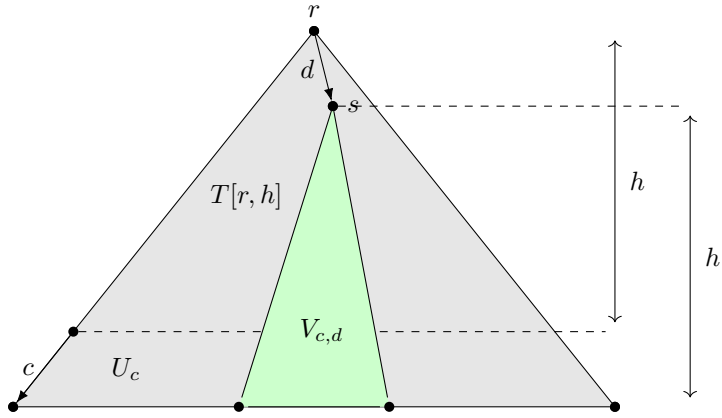
\begin{figure}[h!tb]
    \centering
      \tikzset{
    s_bla/.style = {circle, fill=black, draw, thick, inner sep=1pt, minimum size=3pt}
}
\begin{tikzpicture}
  \node[s_bla] (a1) at (0,0) {};
  \node[s_bla] (a4) at (-4,-5) {};
  \node[s_bla] (a5) at (4,-5) {};
  \fill[fill=gray!20] (a1.center)--(a4.center)--(a5.center);
  \path[draw] (a1)--(a4);
  \path[draw] (a4)--(a5);
  \path[draw] (a5)--(a1);
  \node[s_bla, label=above:$r$] (b1) at (0,0) {};
  \node[s_bla] (b4) at (-4,-5) {};
   \node[s_bla] (b5) at (4,-5) {};
  \node[s_bla] (a3) at (-3.2,-4) {}; 
  \node at (0.25,-1) [s_bla, label=right:$s$] (a2) {};
  \draw [-Latex] (a1) to [] node [left] {$d$} (a2);
  \draw [-Latex] (a3) to [left] node [] {$c$} (a4);
  \node (a6) at (-1,-5) {};
  \node (a7) at (1,-5) {};
  \node (a3') at (3.2,-4) {}; 
  \node (u2) at (4,-4) {};
   \draw [dashed] (a3) -- (u2);

  \fill[fill=green!20] (a2.center)--(a6.center)--(a7.center);
  \path[draw] (a2)--(a6);
  \path[draw] (a6)--(a7);
  \path[draw] (a7)--(a2);
  \node[s_bla] (b6) at (-1,-5) {};
  \node[s_bla] (b7) at (1,-5) {};
  \node[s_bla] (b2) at (0.25,-1) {};
  \node (t1) at (0.1,-4) {$V_{c,d}$};
  \node (t2) at (-.9,-2.2) {$T[r,h]$};
  \node (t3) at (-2.5,-4.5) {$U_c$};
  \node (u1) at (4,0) {};
  \draw [<->] (u1) -- (u2);
  \node (u3) at (4.3,-2) {$h$};
  \node (u1') at (5,-1) {};
  \node (u2') at (5,-5) {};
  \draw [<->] (u1') -- (u2');
  \node (u3') at (5.3,-3) {$h$};
  \draw [dashed] (a2) -- (u1');
\end{tikzpicture}
    \caption{Extension of a tree in $F_h^\infty$.}
    \label{fig:rd}
  \end{figure}
  
We will make use of the following {\em unambiguity property} of $\mathcal{T}$. Every word $u\in L_\frac{p}{q}$ is accepted by $\mathcal{T}$ and there is exactly one successful run for $u$ in $\mathcal{T}$.
If the length of $u\in L_\frac{p}{q}$ is less than $t+h$, there is one successful run and it remains in the prefix $T[\varepsilon,t+h-1]$. If a run uses a transition between a node on level $t-1$ in the prefix $T[\varepsilon,t+h-1]$ and the root of an element in $F_h^\infty$, then the word has to be of length at least $t+h$ to reach a final state by construction.
Now consider a word $u\in L_{\frac{p}{q}}$ of length $t+h+j$ with $j\ge 0$ and write
$$u=u_0\cdots u_{t-1} u_t u_{t+1} \cdots u_{t+h-1} \cdots u_{t+h+j-1}.$$
Reading the prefix $u_0\cdots u_{t-1}$ leads to the root of an element $U$ in $F_h^\infty$.
Assume that this element can be extended in (at least) two ways to a tree of height $h+1$. This means that in $\mathcal{T}$, we have two transitions from the root of $U$ with label $u_t$: one going to the root of some $V_1\in F_{h,c_1}^\infty$ and one going to the root of some $V_2\in F_{h,c_2}^\infty$ with $c_1\neq c_2$ (by assumption, $c_1=c_2$ implies $V_1=V_2$).
Even if $V_1$ and $V_2$ have the same prefix of height~$h-1$, we have $\dom(V_1) \cap \dom(V_2) \cap A_p^h = \emptyset$. This is a consequence of Corollary~\ref{cor:partition-h}: the difference between $\dom(V_1)$ and $\dom(V_2)$ appears precisely on level $h$ where the labeling is periodically $(w_e,w_{e+1},\ldots,w_{q-1},w_0,\ldots,w_{e-1})$ and $(w_f,w_{f+1},\ldots,\ldots,w_{q-1},w_0,\ldots,w_{f-1})$ respectively where $w_e$ (respectively $w_f$) starts with $c_1$ (respectively $c_2$) and the two $q$-tuples of words are a cycle shift of the signature $(w_0,\ldots,w_{q-1})$ of $T$.
So if we non-deterministically make the wrong choice of transition while reading $u_t$, we will not be able to process the letter $u_{t+h}$. The choice of a transition determines the words of length~$h$ that can be read from that point on. The same reasoning occurs for the decision taken at step $t+j$ and the letter processed at step $t+h+j$.

%Old proof:
%Note that $V_1$ and $V_2$ have the same prefix of height~$h-1$. The difference appears precisely on level $h$ where the labeling is periodically $(w_e,w_{e+1},\ldots,w_{q-1},w_0,\ldots,w_{e-1})$ and $(w_f,w_{f+1},\ldots,w_{q-1},w_0\ldots,w_{f-1})$ respectively where $w_e$ (respectively $w_f$) starts with $c_1$ (respectively $c_2$) and the two $q$-tuples of words are a cycle shift of the signature $(w_0,\ldots,w_{q-1})$ of $T$. 
%Nevertheless, if $x$ has length $h-1$ and belongs to the domain of $V_1$ and thus of $V_2$, then $xc_1$ belongs to the domain of $V_1$ if and only if $xc_2$ belongs to the domain of $V_2$. So if we non-deterministically make the wrong choice of transition at step $t$, we will not be able to process the letter at position $t+h$. The choice of a transition determines the words of length~$h$ that can be read from that point on. The same reasoning occurs for the decision taken at step $t+j$ and the letter at position $t+h+j$.

We still have to turn $\mathcal{T}$ into a DFAO producing $\mathbf{x}\in B^\mathbb{N}$. To do so, we determinize $\mathcal{T}$ with the classical subset construction.
Thanks to the unambiguity property of $\mathcal{T}$, if a subset of states obtained during the construction contains final states of $\mathcal{T}$, then they are all decorated by the same letter $b\in B$. The output of this state is thus set to $b$. If a subset of states obtained during the construction contains no final state, then its output is irrelevant (it can be set to any value).   
\end{proof}

\begin{example}\label{ex:NFA-construction}
  Consider the rational base $\frac{3}{2}$. Our aim is to illustrate the above theorem: we have information about factors of a decorated tree $T(L_\frac32)$ --- those occurring infinitely often and those occurring only a finite number of times --- and we want to build the corresponding $\frac32$-automatic sequence.
  Assume that $t=h=1$ and that factors of length~$1$ can be extended as in Figure~\ref{fig:A2L32}. We assume that the last eight trees of height $2$ occur infinitely often. Hence their four prefixes of height~$1$ have exactly two extensions. We assume that the prefix given by the first tree in Figure~\ref{fig:A2L32} occurs only once.

From this, we build the NFA $\mathcal{T}$ depicted in Figure~\ref{fig:NFA}. The prefix tree of height $t+h-1=1$ is depicted on the left and its root is the initial state. The single word $2$ of length $1$ is accepted by a run staying in this tree. Then, are represented the four trees of $F_1^\infty$. Their respective leaves are final states. Finally, we have to inspect Figure~\ref{fig:A2L32} to determine the transitions connecting roots of these trees.
For instance, let us focus on state $7$ in Figure~\ref{fig:NFA}.
On Figure~\ref{fig:A2L32}, the corresponding tree can be extended in two ways: the second and the fourth trees on the first row.
In the first of these trees, the tree hanging to the child $0$ (respectively $2$) of the root corresponds to state $5$ (respectively $7$).
Hence, there is a transition of label $0$ (respectively $2$) from $7$ to $5$ (respectively $7$) in Figure~\ref{fig:NFA}.
Similarly, the second tree gives the extra transitions of label $0$ from $7$ to $7$ and of label $2$ from $7$ to $5$.

\begin{figure}[h!t]
  \centering
\tikzset{
  s_bla/.style = {circle,fill=gray, draw, thick, inner sep=1pt, minimum size=5pt},
  s_red/.style = {circle,fill=red, draw, thick, inner sep=1pt, minimum size=5pt}
}
\begin{tikzpicture}[->,>=stealth',level/.style={sibling distance = 1.5cm/#1},level distance = 1cm]
 \node at (0,0) [s_bla] (a1) {$q_0$};
 \node at (0,-1) [s_bla] (b1) {$q_1$};
 \draw (a1) -> (b1) node[midway, left] {$2$};
 \node[draw=none] (s) [left of=a1] {};
 \draw (s) -> (a1) node {};
 \node[draw=none] at (0,-1.7) (b1') {};
 \draw (b1) -> (b1') node {};
 \node[draw=none] at (0,0.7) (b1'') {};
 \draw (a1) -> (b1'') node {};
  
  \node at (3,0) [s_bla] (a2) {0};
 \node at (3,-1) [s_red] (b2) {1};
 \draw (a2) -> (b2) node[midway, left] {$1$};
 \node[draw=none] at (3,-1.7) (b2') {};
 \draw (b2) -> (b2') node {};
 
  \node at (6,0) [s_bla] (a3) {2};
  \node at (5.5,-1) [s_bla] (b3) {3};
   \node at (6.5,-1) [s_bla] (c3) {4};
 \draw (a3) -> (b3) node[midway, left] {$0$};
 \draw (a3) -> (c3) node[midway, right] {$2$};
  \node[draw=none] at (5.5,-1.7) (b3') {};
  \draw (b3) -> (b3') node {};
    \node[draw=none] at (6.5,-1.7) (c3') {};
    \draw (c3) -> (c3') node {};
    
   \node at (3,-3) [s_red] (a4) {5};
 \node at (3,-4) [s_bla] (b4) {6};
 \draw (a4) -> (b4) node[midway, left] {$1$};
  \node[draw=none] at (3,-4.7) (b4') {};
  \draw (b4) -> (b4') node {};

  \node at (6,-3) [s_red] (a5) {7};
  \node at (5.5,-4) [s_red] (b5) {8};
   \node at (6.5,-4) [s_red] (c5) {9};
 \draw (a5) -> (b5) node[midway, left] {$0$};
 \draw (a5) -> (c5) node[midway, right] {$2$};
  \node[draw=none] at (5.5,-4.7) (b5') {};
  \draw (b5) -> (b5') node {};
    \node[draw=none] at (6.5,-4.7) (c5') {};
 \draw (c5) -> (c5') node {};
 
 \draw (a1) -> (a2) node[midway, above] {$2$};
 \draw (a2) -> (a5) node[near start, above] {$1$};
 \draw (a3) -> (a2) node[midway, above] {$0,2$};
  \draw (a5) -> (a4) node[midway, above] {$0,2$};
  \path
  (a3) edge [loop above] node {$0,2$} (a3)
  (a5) edge [loop above] node {$0,2$} (a5)
  (a4) edge[bend left] node [near start, left] {$1$} (a3)
  (a4) edge[bend left=50,<->] node [midway, left] {$1$} (a2);
\end{tikzpicture}
\caption{A NFA~$\mathcal{T}$.}
  \label{fig:NFA}
\end{figure}

Take the word $210\in L_{\frac{3}{2}}$. Starting from $q_0$, the only successful run is
  $q_0\stackrel{2}{\longrightarrow}0\stackrel{1}{\longrightarrow}7\stackrel{0}{\longrightarrow}8$. If we had reached $0$ with $q_0\stackrel{2}{\longrightarrow}0$ and chose the other transition of label $1$, we would have the run $q_0\stackrel{2}{\longrightarrow}0\stackrel{1}{\longrightarrow}5$, but from state $5$ there is no transition with label $0$. The successful runs of the first few words in $L_{\frac{3}{2}}$ are given below:
  $$\begin{array}{r|l}
      \varepsilon & q_0\\
  2& q_0 \to q_1\\
  21& q_0 \to 0 \to 1\\
  210& q_0 \to 0 \to 7 \to 8\\
  212& q_0 \to 0 \to 7 \to 9\\
  2101& q_0 \to 0 \to 7 \to 5 \to 6\\
  2120& q_0 \to 0 \to 7 \to 7 \to 8\\
  2122& q_0 \to 0 \to 7 \to 7 \to 9\\
  21011& q_0 \to 0 \to 7 \to 5 \to 0 \to 1\\
  21200& q_0 \to 0 \to 7 \to 7 \to 7 \to 8\\
  21202& q_0 \to 0 \to 7 \to 7 \to 7 \to 9\\
  21221& q_0 \to 0 \to 7 \to 7 \to 5 \to 6\\
  \end{array}$$

We may now determinize this NFA $\mathcal{T}$.  We apply the classical subset construction to get a DFAO. If a subset of states contains a final state of $\mathcal{T}$ from $\{1,8,9\}$ (respectively $\{q_0,q_1,3,4,6\}$), the corresponding decoration being $1$ (respectively $0$), the output for this state is $1$ (respectively $0$). Indeed, as explained in the proof, a subset of states of $\mathcal{T}$ obtained during the determinization algorithm cannot contain states with two distinct decorations. After determinization, we obtain the (minimal) DFAO depicted in Figure~\ref{fig:determinize}. In the latter figure, we have not set any output for state $2$ because it corresponds to a subset of states in $\mathcal{T}$ which does not contain any final state. Otherwise stated, that particular output is irrelevant as no valid representation will end up in that state.

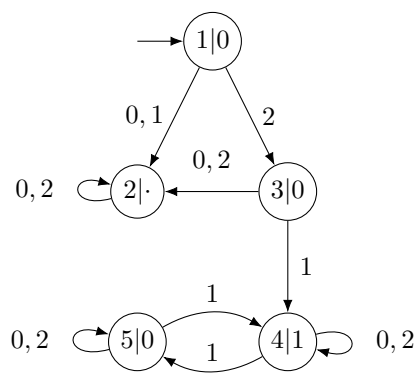
\begin{figure}[htb]
\begin{center}
\begin{tikzpicture}
  \tikzstyle{every node}=[shape=circle, fill=none, draw=black,minimum size=20pt, inner sep=2pt]
\node(1) at (1,2) {$1|0$};
\node(2) at (0,0) {$2|\cdot$};
\node(3) at (2,0) {$3|0$};
\node(4) at (2,-2) {$4|1$};
\node(5) at (0,-2) {$5|0$};
\tikzstyle{every node}=[shape=circle, minimum size=5pt, inner sep=2pt]

\draw [-Latex] (0,2) to node [above] {} (1);

\draw [-Latex] (2) to [loop left] node [left=.2] {$0,2$} (2);
\draw [-Latex] (5) to [loop left] node [left=.2] {$0,2$} (5);
\draw [-Latex] (4) to [loop right] node [right=.2] {$0,2$} (4);
\draw [-Latex] (1) to [] node [left] {$0,1$} (2);
\draw [-Latex] (1) to [] node [right] {$2$} (3);
\draw [-Latex] (3) to [] node [above] {$0,2$} (2);
\draw [-Latex] (3) to [] node [right] {$1$} (4);
\draw [-Latex] (4) to [bend left] node [above] {$1$} (5);
\draw [-Latex] (5) to [bend left] node [above] {$1$} (4);
\end{tikzpicture}
\caption{Determinization of $\mathcal{T}$.}\label{fig:determinize}
\end{center}
\end{figure}
\end{example}

\section{Recognizable sets and stability properties}\label{sec:stability}

In this short section, our aim is to present some direct closure properties of automatic sequences in ANS built on tree languages. These statements should not surprise the reader used to constructions of automata and automatic sequences.

In \cite{Marsault}, a subset $X$ of $N_{\frac{p}{q}}$ is said to be {\em $\frac{p}{q}$-recognizable} if there exists a DFA over $A_p$ accepting a language $L$ such that $\val_{\frac{p}{q}}(L)=X$. Since $L_{\frac{p}{q}}$ is not regular, the set $\mathbb{N}$ is not $\frac{p}{q}$-recognizable.

\begin{prop}
  A sequence $\mathbf{x}=x_0x_1\cdots$ over $A$ is $\frac{p}{q}$-automatic if and only if, for every $a\in A$, there exists a $\frac{p}{q}$-recognizable set $R_a$ such that $\{ i\in\mathbb{N} : x_i=a\}=R_a\cap\mathbb{N}$.
\end{prop}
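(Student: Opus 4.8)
The plan is to translate freely between the DFAO witnessing $\frac{p}{q}$-automaticity and the DFAs witnessing $\frac{p}{q}$-recognizability, relying on the fact from \cite{Akiyama--Frougny-Sakarovitch-2008} that every integer $n$ has base-$\frac{p}{q}$ representations exactly of the form $0^*\rep_{\frac{p}{q}}(n)$, i.e.\ unique up to leading zeroes. The only genuine subtleties are that $L_{\frac{p}{q}}$ is not regular---so one cannot intersect a regular language with $L_{\frac{p}{q}}$ and stay regular---and the bookkeeping of leading zeroes. I would sidestep the first point by never intersecting with $L_{\frac{p}{q}}$ directly: instead I apply $\val_{\frac{p}{q}}$ to an entire regular language and intersect the resulting value set with $\mathbb{N}$, using uniqueness of representations to control which integers appear. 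For the second, I would first normalize the DFAO so that its initial state $q_0$ has a $0$-loop, $\delta(q_0,0)=q_0$, and is re-entered only through that loop; concretely, adjoin a fresh initial state carrying the non-zero transitions of the old one together with a $0$-loop. Since $\rep_{\frac{p}{q}}(n)$ is empty or begins with a non-zero letter, this leaves $\mathbf{x}$ unchanged and guarantees $\delta(q_0,0^k w)=\delta(q_0,w)$ for all words $w$ and all $k\ge 0$.

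For the forward direction, assume $\mathbf{x}$ is produced by a normalized DFAO $\mathcal{A}=(Q,q_0,A_p,\delta,\mu)$. For each letter $a\in A$ put $F_a=\mu^{-1}(a)$ and let $L_a$ be the regular language accepted by the DFA $(Q,q_0,A_p,\delta,F_a)$. I would then set $R_a=\val_{\frac{p}{q}}(L_a)$, which is $\frac{p}{q}$-recognizable by definition. To check $R_a\cap\mathbb{N}=\{i:x_i=a\}$: an integer $n$ lies in $R_a$ if and only if some word of value $n$---necessarily of the form $0^k\rep_{\frac{p}{q}}(n)$---is accepted, i.e.\ $\delta(q_0,0^k\rep_{\frac{p}{q}}(n))\in F_a$ for some $k\ge 0$; by the normalization this is equivalent to $\delta(q_0,\rep_{\frac{p}{q}}(n))\in F_a$, that is, to $x_n=a$.

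For the backward direction, suppose that for each $a$ we are given a DFA accepting a regular language $M_a$ with $\val_{\frac{p}{q}}(M_a)=R_a$ and $R_a\cap\mathbb{N}=\{i:x_i=a\}$. I would pass to the left quotient $N_a=(0^*)^{-1}M_a=\{u\mid \exists k\ge 0,\ 0^ku\in M_a\}$, which is again regular. For $n\in\mathbb{N}$ one has $n\in R_a$ if and only if some $0^k\rep_{\frac{p}{q}}(n)$ lies in $M_a$, if and only if $\rep_{\frac{p}{q}}(n)\in N_a$; hence $x_n=a$ if and only if $\rep_{\frac{p}{q}}(n)\in N_a$. Because the sets $\{i:x_i=a\}$ partition $\mathbb{N}$, for every $n$ the word $\rep_{\frac{p}{q}}(n)$ lies in exactly one $N_a$. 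I would then take the product DFA of the automata recognizing the $N_a$ and equip it with an output map sending a product state to the unique letter $a$ whose $a$-component is accepting (on product states not reached by any representation the output is irrelevant and may be set arbitrarily). Fed with $\rep_{\frac{p}{q}}(n)$, this DFAO outputs the unique $a$ with $\rep_{\frac{p}{q}}(n)\in N_a$, namely $x_n$, so $\mathbf{x}$ is $\frac{p}{q}$-automatic.

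The step I expect to be the main obstacle is exactly the interaction of leading zeroes with the non-regularity of $L_{\frac{p}{q}}$: one must ensure that applying $\val_{\frac{p}{q}}$ to an unrestricted regular language and then intersecting with $\mathbb{N}$ recovers precisely the intended fiber $\{i:x_i=a\}$. This hinges on the uniqueness-up-to-leading-zeroes theorem and on the normalization making leading zeroes neither alter the reached state nor be blocked by the automaton. Everything else is the standard product (or subset) construction together with closure of regular languages under left quotient.
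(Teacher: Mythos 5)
Your proposal is correct and follows essentially the same route as the paper: the forward direction is exactly the paper's one-line argument (declare the states with output $a$ final and take $R_a=\val_{\frac{p}{q}}(L_a)$), and the backward direction is the standard product-of-DFAs construction that the paper leaves implicit. You additionally make explicit the leading-zero normalization and the uniqueness-up-to-leading-zeroes argument, which the paper glosses over here (it records the leading-zero robustness as a separate proposition), so your write-up is a more complete version of the same proof.
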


  \begin{proof}
In the DFAO producing the sequence, consider as final the states having output $a$. The accepted set is $R_a$.
  \end{proof}

For $k$-automatic sequences, the above result can also be expressed in terms of fibers (see, for instance, \cite[Lemma~5.2.6]{Allouche--Shallit-2003}). The {\em $\frac{p}{q}$-fiber} of an infinite sequence $\mathbf{x}$ is the language $I_{\frac{p}{q}}(\mathbf{x},a)=\{\rep_{\frac{p}{q}}(i) : i\in\mathbb{N} \text{ and } x_i=a\}$. A sequence $\mathbf{x}=x_0x_1\cdots$ over $A$ is $\frac{p}{q}$-automatic if and only if, for every $a\in A$, there exists a regular language $S_a$ such that $I_{\frac{p}{q}}(\mathbf{x},a)=S_a\cap L_{\frac{p}{q}}$.

We can verbatim take several robustness or closure properties of automatic sequences. They use classical constructions of automata such as reversal or compositions.

\begin{prop}
Let $\mathcal{S}$ be an abstract
numeration system built on a tree language with a purely periodic labeled signature. The set of $\mathcal{S}$-automatic sequences is stable under finite modifications.
\end{prop}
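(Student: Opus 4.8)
The plan is to start from a DFAO producing $\mathbf{x}$ and to patch its output on the finitely many representations whose decoration must change. Let $\mathbf{x}=x_0x_1\cdots$ be $\mathcal{S}$-automatic, produced by a DFAO $\mathcal{A}=(Q,q_0,A,\delta,\mu\colon Q\to B)$, and let $\mathbf{y}=y_0y_1\cdots$ differ from $\mathbf{x}$ only on a finite set of indices $F=\{n_1,\ldots,n_s\}$ (enlarging the finite output alphabet $B$ if some $y_n$ is a new letter). Since $\rep_\mathcal{S}\colon\mathbb{N}\to L$ is a bijection, the set $W=\{\rep_\mathcal{S}(n):n\in F\}$ is a finite language over $A$, and passing from $\mathbf{x}$ to $\mathbf{y}$ amounts to overriding the output of $\mathcal{A}$ on exactly the words of $W$, and on no other word.

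First I would introduce the finite set $P$ of all prefixes of words in $W$ (including $\varepsilon$), which is finite because $W$ is finite and each of its words has finite length. I would then build a new DFAO $\mathcal{B}$ with state set $Q\times(P\cup\{\bot\})$, where the symbol $\bot$ signals that the input read so far is no longer a prefix of any word of $W$. The initial state is $(q_0,\varepsilon)$, and the transition reading $a\in A$ sends $(q,p)$ to $(\delta(q,a),p')$, where $p'=pa$ if $p\neq\bot$ and $pa\in P$, and $p'=\bot$ otherwise. The second component is thus a deterministic prefix tracker for $W$. I would set the output of a state $(q,p)$ to be $y_{\val_\mathcal{S}(p)}$ when $p\in W$ and $\mu(q)$ otherwise; this is well defined since $\val_\mathcal{S}(p)$ is a fixed index whenever $p\in W$.

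Next I would check that $\mathcal{B}$ produces $\mathbf{y}$. When reading a word $w\in L$ from $(q_0,\varepsilon)$, the first component always reaches $\delta(q_0,w)$, while the second component reaches $w$ exactly when $w\in P$ and $\bot$ otherwise. If $w=\rep_\mathcal{S}(n)\in W$, then the output is $y_n$ by construction. If $w\notin W$, the output is $\mu(\delta(q_0,w))=x_{\val_\mathcal{S}(w)}=y_{\val_\mathcal{S}(w)}$, the last equality holding because $w\notin W$ forces $\val_\mathcal{S}(w)\notin F$. Hence for every $n\ge 0$ the DFAO $\mathcal{B}$ outputs $y_n$ on input $\rep_\mathcal{S}(n)$, so $\mathbf{y}$ is $\mathcal{S}$-automatic.

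This argument is essentially routine and uses nothing about the periodic labeled signature beyond the fact that $\mathcal{S}$-automaticity is defined through a DFAO fed with the words $\rep_\mathcal{S}(n)$; it would work verbatim for any ANS. The only point deserving care — and the single mild obstacle to a careless proof — is that in a prefix-closed numeration language one representation may be a proper prefix of another (for instance $1$ and $10$ in an integer base). Tracking the entire prefix through the component in $P$, rather than a single Boolean flag recording whether some word of $W$ has already been matched, is precisely what guarantees that the override is applied to the correct words and to those only.
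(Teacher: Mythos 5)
Your proof is correct and follows essentially the same idea as the paper's (much terser) argument: the paper patches the DFAO with a ``tree-like structure'' on the finitely many short representations that must be overridden, which is exactly what your product with the prefix-tracker $P\cup\{\bot\}$ formalizes. Your version is more detailed and correctly handles the point that one representation can be a prefix of another, which the paper leaves implicit.
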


\begin{proof}
One has to adapt the DFAO to take into account those finite modifications. Suppose that these modifications occur for representations of length at most $\ell$. Then the DFAO can have a tree-like structure for words of length up to $\ell$ and we enter the original DFAO after passing through this structure encoding the modifications.
\end{proof}

\begin{prop}
Let $\mathcal{S}$ be an abstract
numeration system built on a tree language with a purely periodic labeled signature. The set of $\mathcal{S}$-automatic sequences is stable under codings.
\end{prop}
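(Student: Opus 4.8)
The plan is to show that the image of an $\mathcal{S}$-automatic sequence under a coding is again $\mathcal{S}$-automatic, where $\mathcal{S}$ is built on a tree language with a purely periodic labeled signature. First I would recall the setup: suppose $\mathbf{x} = x_0 x_1 x_2 \cdots$ over an alphabet $B$ is $\mathcal{S}$-automatic, so there is a DFAO $\mathcal{A} = (Q, q_0, A, \delta, \mu : Q \to B)$ with $x_n = \mu(\delta(q_0, \rep_\mathcal{S}(n)))$ for all $n \ge 0$. Let $c : B \to C$ be a coding (a letter-to-letter morphism) into some alphabet $C$, and set $\mathbf{y} = c(\mathbf{x})$, that is, $y_n = c(x_n)$ for all $n$. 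The goal is to produce a DFAO recognizing $\mathbf{y}$ over the same numeration system.

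The key observation is that only the output function needs to change; the entire transition structure can be kept intact. I would define a new DFAO $\mathcal{A}' = (Q, q_0, A, \delta, \mu' : Q \to C)$ with the same state set, same initial state, same alphabet, and same transition function $\delta$, but with output function $\mu' = c \circ \mu$. Then for every $n \ge 0$ we have
\[
\mu'(\delta(q_0, \rep_\mathcal{S}(n))) = c\bigl(\mu(\delta(q_0, \rep_\mathcal{S}(n)))\bigr) = c(x_n) = y_n,
\]
so $\mathcal{A}'$ generates $\mathbf{y}$ when fed $\mathcal{S}$-representations. This shows directly that $\mathbf{y} = c(\mathbf{x})$ is $\mathcal{S}$-automatic, which is exactly the claimed stability under codings.

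I would emphasize that this argument makes no use of any special property of the numeration language beyond the bare definition of $\mathcal{S}$-automaticity, so the hypothesis that $\mathcal{S}$ is built on a tree language with a purely periodic labeled signature is not even required here; the statement holds for any abstract numeration system. There is essentially no obstacle in this proof: the composition of the coding with the output map is well defined because both are plain maps on finite sets, and the transition dynamics governing which state $\rep_\mathcal{S}(n)$ reaches are entirely untouched. The only point worth stating explicitly is that the composite $\mu' = c \circ \mu$ is still a function $Q \to C$, so $\mathcal{A}'$ is a legitimate DFAO. Consequently the proof reduces to a single line once the substitution of the output function is set up correctly.
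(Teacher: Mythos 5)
Your proof is correct and is exactly the standard argument the paper has in mind: the paper states this proposition without any written proof, treating it as an immediate consequence of replacing the output function $\mu$ by $c\circ\mu$ in the DFAO, which is precisely what you do. Your observation that the periodic-signature hypothesis plays no role and that the statement holds for any abstract numeration system is also accurate.
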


Automatic sequences can be produced by reading least significant digits first.
Simply adapt the corresponding result in \cite{Maes}.

\begin{prop}
Let $\mathcal{S}=(L,A,<)$ be an abstract
numeration system built on a tree language with a purely periodic labeled signature.
A sequence $\mathbf{x}$ is $\mathcal{S}$-automatic if and only if there exists a DFAO $(Q,q_0,A,\delta,\tau)$ such that, for all $n\ge 0$, $x_n=\tau(\delta(q_0,(\rep_{\mathcal{S}}(n))^R))$.
%A sequence $\mathbf{x}$ is $\frac{p}{q}$-automatic if and only if there exists a DFAO $(Q,q_0,A_p,\delta,\tau)$ such that, for all $n\ge 0$, $x_n=\tau(\delta(q_0,(\rep_{\frac{p}{q}}(n))^R))$.
\end{prop}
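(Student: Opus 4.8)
The plan is to prove both implications at once by a single \emph{transition-transformation} construction that converts a DFAO reading representations in one direction into a DFAO reading them in the reverse direction while computing the very same terminal state, hence producing the same output. The crucial observation is that, contrary to the integer-base setting, we cannot invoke ``the reversal of a regular language is regular'': the numeration language $L=L(\mathsf{s})$ is in general highly non-regular (cf.\ Remark~\ref{rem:perT}). This is not an obstacle, however, because the construction operates entirely on the complete deterministic transition function $\delta\colon Q\times A\to Q$ and never refers to $L$ itself; in fact the periodicity of the labeled signature is not used at all, and the statement holds for any ANS for which $\rep_\mathcal{S}$ is defined.

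For the forward direction, suppose $\mathbf{x}$ is $\mathcal{S}$-automatic via a DFAO $\mathcal{A}=(Q,q_0,A,\delta,\tau)$, so that $x_n=\tau(\delta(q_0,\rep_\mathcal{S}(n)))$. I would build a DFAO $\mathcal{B}$ whose states are the transformations $f\colon Q\to Q$ (finitely many, at most $\#Q^{\#Q}$), with initial state the identity, transition function $\delta_\mathcal{B}(f,a)=\bigl(q\mapsto f(\delta(q,a))\bigr)$, and output $\mu_\mathcal{B}(f)=\tau(f(q_0))$. The key lemma, proved by induction on $|w|$, is that for every $w=a_\ell\cdots a_1a_0\in A^*$, reading $w^R=a_0a_1\cdots a_\ell$ in $\mathcal{B}$ ends in the state $q\mapsto\delta(q,w)$: writing $s_j=a_j\cdots a_0$ for the length-$(j+1)$ suffix of $w$, the inductive step uses $\delta(q,s_j)=\delta(\delta(q,a_j),s_{j-1})$, i.e.\ prepending the freshly read letter $a_j$ to the suffix already processed. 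Evaluating the output at $w=\rep_\mathcal{S}(n)$ then gives $\mu_\mathcal{B}\bigl(q\mapsto\delta(q,\rep_\mathcal{S}(n))\bigr)=\tau(\delta(q_0,\rep_\mathcal{S}(n)))=x_n$, exactly the desired LSD-reading DFAO.

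The converse is symmetric: applying the identical construction to the hypothesized LSD-DFAO yields an MSD-DFAO computing $q\mapsto\delta'(q,w^R)$ as one reads $w$ from its most significant letter, which witnesses $\mathcal{S}$-automaticity in the sense of the first definition. I do not expect a genuine obstacle here beyond careful bookkeeping; the two points that must be handled precisely are the \emph{direction} of composition in the transition rule (the new letter's action is applied on the inside, $q\mapsto f(\delta(q,a))$, not the outside) and the base case $w=\varepsilon$, where $\varepsilon^R=\varepsilon$, the identity transformation is reached, and the output is $\tau(q_0)=\tau(\delta(q_0,\varepsilon))=x_0$, consistent with the convention $\rep_\mathcal{S}(0)=\varepsilon$. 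The conceptual content is simply that reversing the reading direction is a property of the finite transition monoid of the DFAO and is wholly independent of the (possibly non-regular) numeration language, which is why the argument of~\cite{Maes} adapts verbatim.
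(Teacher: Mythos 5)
Your proof is correct: the transformation-monoid construction with states $f\colon Q\to Q$, initial state the identity, transition $\delta_\mathcal{B}(f,a)=\bigl(q\mapsto f(\delta(q,a))\bigr)$ and output $\tau(f(q_0))$ does compute $\tau(\delta(q_0,w))$ after reading $w^R$, and by symmetry it yields both implications. The paper's proof consists solely of the instruction to adapt the corresponding result of \cite{Maes}, and what you have written out is exactly the standard reversal argument that this adaptation amounts to; your side remark that neither the periodicity of the signature nor the regularity of $L$ is actually needed is also accurate.
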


Adding leading zeroes does not affect automaticity.
Simply adapt the proof of \cite[Theorem~5.2.1]{Allouche--Shallit-2003}.

\begin{prop}
A sequence $\mathbf{x}$ is $\frac{p}{q}$-automatic if and only if there exists a DFAO $(Q,q_0,A_p,\delta,\tau)$ such that, for all $n\ge 0$ and all $j\ge 0$, $x_n=\tau(\delta(q_0,0^j\rep_{\frac{p}{q}}(n)))$.
\end{prop}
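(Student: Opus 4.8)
The plan is to prove the two implications separately; the backward one is immediate and the forward one requires only a small modification of the given DFAO, following the standard leading-zero argument (as in \cite[Theorem~5.2.1]{Allouche--Shallit-2003}). For the backward direction, suppose a DFAO $(Q,q_0,A_p,\delta,\tau)$ with the stated property exists. Specializing to $j=0$ gives $x_n=\tau(\delta(q_0,\rep_{\frac{p}{q}}(n)))$ for all $n\ge 0$, which is exactly the definition of a $\frac{p}{q}$-automatic sequence, so nothing more is needed.

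For the forward direction, I would start from a DFAO $\mathcal{A}=(Q,q_0,A_p,\delta,\tau)$ with $x_n=\tau(\delta(q_0,\rep_{\frac{p}{q}}(n)))$ and build a DFAO insensitive to leading zeroes. The key observation is that $\val_{\frac{p}{q}}(0^j w)=\val_{\frac{p}{q}}(w)$, which follows from the recalled identity $\val_{\frac{p}{q}}(uv)=\val_{\frac{p}{q}}(u)(\frac{p}{q})^{|v|}+\val_{\frac{p}{q}}(v)$ applied with $u=0^j$. Hence $0^j\rep_{\frac{p}{q}}(n)$ is a (non-canonical) representation of $n$, while $\rep_{\frac{p}{q}}(n)$ begins with a nonzero letter for $n\ge 1$ and equals $\varepsilon$ for $n=0$. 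I would then adjoin a fresh initial state $s\notin Q$ and define $\mathcal{A}'=(Q\cup\{s\},s,A_p,\delta',\tau')$ by $\delta'(s,0)=s$, $\delta'(s,a)=\delta(q_0,a)$ for $a\in A_p\setminus\{0\}$, and $\delta'(q,a)=\delta(q,a)$ for $q\in Q$; the output is $\tau'(s)=\tau(q_0)$ and $\tau'(q)=\tau(q)$ for $q\in Q$. Since $s$ is new and no original transition points to it, a run enters or stays at $s$ exactly while reading an initial block of zeroes, and it proceeds entirely inside $\mathcal{A}$ once the first nonzero letter is read.

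The verification is then a routine induction on the length of the suffix. For $n=0$ we have $0^j\rep_{\frac{p}{q}}(0)=0^j$ and $\delta'(s,0^j)=s$, so the output is $\tau(q_0)=x_0$. For $n\ge 1$, writing $\rep_{\frac{p}{q}}(n)=a_\ell\cdots a_0$ with $a_\ell\neq 0$, reading $0^j$ keeps the run at $s$, reading $a_\ell$ moves it to $\delta(q_0,a_\ell)$, and the remaining letters are processed by $\delta$, so $\delta'(s,0^j\rep_{\frac{p}{q}}(n))=\delta(q_0,\rep_{\frac{p}{q}}(n))$ and the output is $x_n$, as required.

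The only delicate point, and the reason for adjoining a dedicated absorbing state $s$ rather than merely adding a loop $\delta(q_0,0)=q_0$ in place, is that the computation of $\mathcal{A}$ on $\rep_{\frac{p}{q}}(n)$ may revisit $q_0$ in the interior of the word; redefining the transition out of $q_0$ on $0$ could then corrupt those runs. Isolating the leading-zero behaviour in the new state $s$, which communicates with the original automaton only through the single ``first nonzero letter'' transition, cleanly avoids this interference. I expect this to be the main (and rather minor) obstacle to watch.
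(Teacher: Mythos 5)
Your proof is correct and follows exactly the route the paper intends: the paper's ``proof'' is just the instruction to adapt \cite[Theorem~5.2.1]{Allouche--Shallit-2003}, and your construction (a fresh initial state $s$ with a $0$-loop, feeding into $\delta(q_0,\cdot)$ on the first nonzero letter, with $\tau'(s)=\tau(q_0)$) is precisely that standard adaptation, with the trivial backward direction obtained by taking $j=0$.
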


For any finite alphabet $D\subset\mathbb{Z}$ of digits, we let $\chi_D$ denote the \emph{digit-conversion} map defined as follows: for all $u\in D^*$ such that $\val_{\frac{p}{q}}(u)\in\mathbb{N}$, $\chi_D(u)$ is the unique word $v\in L_{\frac{p}{q}}$ such that $\val_{\frac{p}{q}}(u)=\val_{\frac{p}{q}}(v)$. In \cite{Akiyama--Frougny-Sakarovitch-2008}, it is shown that $\chi_D$ can be realized by a finite letter-to-letter right transducer. As a consequence of this result, multiplication by a constant $a\ge 1$ is realized by a finite letter-to-letter right transducer. Indeed take a word $u=u_0\cdots u_t\in L_{\frac{p}{q}}$ and consider the alphabet $D=\{0,a,2a,\ldots, (p-1)a\}$. Feed the transducer realizing $\chi_D$ with $a u_t$, \ldots, $a u_0$. The output is the base-$\frac{p}{q}$ representation of $a\cdot \val_{\frac{p}{q}}(u)$. Similarly, translation by a constant $b\ge 0$  is realized by a finite letter-to-letter right transducer. Consider the alphabet $D'=\{0,\ldots,p+b-1\}$. Feed the transducer realizing $\chi_{D'}$ with $(u_t+b)$, $u_{t-1}$, \ldots, $u_0$. The output is the base-$\frac{p}{q}$ representation of $\val_{\frac{p}{q}}(u)+b$. Combining these results with the DFAO producing a $\frac{p}{q}$-automatic sequence, we get the following result.

 \begin{corollary}
   Let $a\ge 1, b\ge 0$ be integers.
   If a sequence $\mathbf{x}$ is $\frac{p}{q}$-automatic, then the sequence $(x_{an+b})_{n\ge 0}$ is also $\frac{p}{q}$-automatic.
 \end{corollary}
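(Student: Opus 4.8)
The plan is to reduce the statement to a composition of a finite transducer with the automaton producing $\mathbf{x}$. Since $\mathbf{x}$ is $\frac{p}{q}$-automatic, fix a DFAO $\mathcal{A}=(Q,q_0,A_p,\delta,\tau)$ with $x_n=\tau(\delta(q_0,\rep_{\frac{p}{q}}(n)))$ for all $n\ge 0$. To produce $(x_{an+b})_{n\ge 0}$, it suffices to build a DFAO that, on input $\rep_{\frac{p}{q}}(n)$, first transforms this word into $\rep_{\frac{p}{q}}(an+b)$ and then feeds the result to $\mathcal{A}$ before applying $\tau$. The whole difficulty is therefore to carry out the digit transformation $\rep_{\frac{p}{q}}(n)\mapsto\rep_{\frac{p}{q}}(an+b)$ by finite-state means and to connect it to $\mathcal{A}$.

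For the transformation itself I would invoke the transducer realizations recalled just above the statement. Multiplication by $a$ and translation by $b$ are each realized by a finite letter-to-letter right transducer obtained from the digit-conversion map $\chi_D$ of Akiyama, Frougny and Sakarovitch, working on the digits from least to most significant. Composing these two finite transducers yields a single finite right transducer $\mathcal{T}$ which, reading the digits of $\rep_{\frac{p}{q}}(n)$ from least to most significant, outputs the digits of $\rep_{\frac{p}{q}}(an+b)$ in the same order (up to leading zeroes once the word is read back from most significant digit). Because the composition of two finite transducers is again finite, $\mathcal{T}$ has finitely many states.

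It remains to cascade $\mathcal{T}$ with $\mathcal{A}$. First I would replace $\mathcal{A}$ by an equivalent DFAO that reads least significant digits first, using the earlier proposition allowing $\frac{p}{q}$-automatic sequences to be generated from reversed representations, and which is moreover insensitive to leading zeroes, using the earlier proposition on adding leading zeroes. Then I would form the standard product construction: a state of the new automaton records the current state of $\mathcal{T}$ together with the state of the (reversed, leading-zero-invariant) copy of $\mathcal{A}$ reached on the output letters produced so far, and its output function applies $\tau$ to the terminal $\mathcal{A}$-state. Feeding this DFAO with $\rep_{\frac{p}{q}}(n)$ then yields $x_{an+b}$, establishing that $(x_{an+b})_{n\ge 0}$ is $\frac{p}{q}$-automatic. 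The step I expect to require the most care is the bookkeeping at the most significant end: $\rep_{\frac{p}{q}}(an+b)$ may be strictly longer than $\rep_{\frac{p}{q}}(n)$, so the transducer is only letter-to-letter up to a bounded terminal flush of the carry; this bounded length growth is harmless precisely because the copy of $\mathcal{A}$ has been made invariant under leading zeroes, so the extra high-order digits do not affect the computed value or the final state.
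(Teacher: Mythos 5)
Your proposal is correct and follows essentially the same route as the paper: the authors likewise derive letter-to-letter right transducers for multiplication by $a$ and translation by $b$ from the digit-conversion map $\chi_D$ of Akiyama, Frougny and Sakarovitch, and then combine them with the DFAO producing $\mathbf{x}$ (the paper leaves the cascading, reversal, and leading-zero bookkeeping implicit, relying on the preceding propositions exactly as you do). Your extra attention to the bounded terminal carry flush is a reasonable elaboration of the same argument rather than a different approach.
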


\begin{remark}
Ultimately periodic sequences are $k$-automatic for any integer $k\ge 2$~\cite[Theorem 5.4.2]{Allouche--Shallit-2003}.
They are also $\mathcal{S}$-automatic for any abstract numeration system $\mathcal{S}$ based on a regular language~\cite{LR}.
In general, this is not the case for $\frac{p}{q}$-automaticity: the characteristic sequence of multiples of $q$ is not $\frac{p}{q}$-automatic~\cite[Proposition 5.39]{Marsault}.
Nevertheless when the period length of an ultimately periodic sequence is coprime with $q$, then the sequence is $\frac{p}{q}$-automatic~\cite[Théorème 5.34]{Marsault}.
\end{remark}

\acknowledgements{We thank the anonymous referees for their suggestions greatly improving our presentation.}

\end{document}